\newcommand{\FORGET}[1]{}
\newtheorem*{rep@theorem}{\rep@title}
\newcommand{\newreptheorem}[2]{%
	\newenvironment{rep#1}[2][]{
		\def\rep@title{Restatement of #2 \ref{##2}\ifthenelse{\isempty{##1}}{}{ (##1)}}
		\begin{rep@theorem}
	}{\end{rep@theorem}}
}
\let\newoldtheorem\newtheorem
\renewcommandx{\newtheorem}[3][2=NoValueAtAll]{
	\ifthenelse{\equal{#2}{NoValueAtAll}}{\newoldtheorem{#1}{#3}}{\newoldtheorem{#1}[#2]{#3}}
	\newreptheorem{#1}{#3}
}
\newtheorem{example}{Example}
\newtheorem{remark}{Remark}
\newtheorem{thm}{Theorem}
\newtheorem{lem}[thm]{Lemma}
\definecolor{dark-gray}{gray}{0}
\newcommand{\km}[1]{{\bf \textcolor{red}{#1}}} % key mechanism primitives
\newcommand{\fn}[1]{\textcolor{blue}{#1}} % function calls
\newcommand{\pr}[1]{\textcolor{blue}{#1}} % primitives --> changed to blue to make use consistent
\newif\ifpgf\pgftrue  %per avere le immagini
\lstdefinestyle{mystyle}{flexiblecolumns=true,showstringspaces=false,keepspaces=true,basewidth={0em,0em},,basicstyle=\sffamily,commentstyle=\itshape,stringstyle=\sffamily}
\definecolor{diffhilight}{rgb}{0.6, 0.81, 0.93}
\newcommand{\builtinop}[3]{\llparenthesis #1 \rrparenthesis_{#2}^{#3}}
\newcommand{\filter}{F}
\newcommand{\BNFcce}{{\bf ::=}}
\newcommand{\BNFmid}{\;\bigr\rvert\;}
\newcommand{\PROGRAM}{\mathtt{P}}
\newcommand{\FUNCTION}{\mathtt{F}}
\newcommand{\e}{\mathtt{e}}
\newcommand{\fname}{\mathtt{d}}
\newcommand{\xname}{\mathtt{x}}
\newcommand{\yname}{\mathtt{y}}
\newcommand{\oname}{\mathtt{b}}
\newcommand{\lvalue}{\ell}
\newcommand{\truevalue}{\mathtt{True}}
\newcommand{\falsevalue}{\mathtt{False}}
\newcommand{\zerovalue}{0}
\newcommand{\fvalue}{\phi}
\newcommand{\defK}{\mathtt{def}}
\newcommand{\nbrK}{\mathtt{nbr}}
\newcommand{\repK}{\mathtt{rep}}
\newcommand{\ifK}{\mathtt{if}}
\newcommand{\elseK}{\mathtt{else}}
\newcommand{\Trees}{\Theta}
\newcommand{\skiptransition}{\\[10pt]}
\newcommand{\proj}[2]{{#1}|_{#2}}
\newcommand{\envmap}[2]{#1\mapsto #2}
\newcommand{\ruleNameSize}[1]{{\scriptsize #1}}
\newcommand{\Topo}{\tau}
\newcommand{\Sens}{\Sigma}
\newcommand{\Envi}{\textit{Env}}
\newcommand{\EnviS}[2]{#1,#2}
\newcommand{\Cfg}{N}
\newcommand{\Field}{\Psi}
\newcommand{\SystS}[2]{\langle #1;#2\rangle}
\newcommand{\devset}{I}
\newcommand{\nettran}[3]{#1\xrightarrow{#2} #3}
\newcommand{\mapupdate}[2]{#1[#2]}
\newcommand{\wfn}[1]{\textit{WFE}(#1)}
\newcommand{\type}{\textit{T}}
\newcommand{\ltype}{\textit{L}}
\newcommand{\ftype}{\textit{F}}
\newcommand{\rtype}{\textit{R}}
\newcommand{\stype}{\textit{S}}
\newcommand{\builtintype}{\textit{B}}
\newcommand{\btype}{\mathtt{bool}}
\newcommand{\ntype}{\mathtt{num}}
\newcommand{\surfaceTyping}[3]{
  \begin{array}{l@{\;}c}
    \stackrel{~}{{\tiny \textrm{[#1]}}} & #2 \\ \hline 
    \multicolumn{2}{c}{#3}
  \end{array}
}
\newcommand{\nullsurfaceTyping}[2]{
  \surfaceTyping{#1}{}{#2}
}
\newcommand{\typeofNAME}{\OStypEnv}
\newcommand{\typeof}[1]{\typeofNAME(#1)}
\newcommand{\domofNAME}{\textbf{dom}}
\newcommand{\domof}[1]{\domofNAME(#1)}
\newcommand{\anyvalue}{\mathtt{v}}
\newcommand{\anyvaluealt}{\mathtt{u}}
\newcommand{\main}{\mathtt{main}}
\newcommand{\lengthOf}[1]{\textit{length}(#1)}
\newcommand{\noetherianOf}[1]{\noetherianOf(#1)}
\newcommand{\deviceIdSet}{\textbf{D}}
\newcommand{\netopsemRule}[3]{\surfaceTyping{#1}{#2}{#3}}
\newcommand{\vtree}{\theta}
\newcommand{\senstate}{\sigma}
\newcommand{\act}{\textit{act}}
\newcommand{\envact}{\textit{env}}
\newcommand{\actDevInOf}[1]{\envact}
\newcommand{\actDevOutOf}[1]{\envact}
\newcommand{\actEdgeInOf}[1]{\envact}
\newcommand{\actEdgeOutOf}[1]{\envact}
\newcommand{\actSnsUpdOf}[1]{\envact}
\newcommand{\emain}{\e_{\main}}
\newcommand{\bsopsem}[4]{#1;#2\vdash #3\Downarrow #4}
\newcommand{\deviceId}{\delta}
\newcommand{\vroot}{\mathbf{\rho}}
\newcommand{\vrootOf}[1]{\vroot(#1)}
\newcommand{\substitution}[2]{#1:=#2}
\newcommand{\applySubstitution}[2]{#1[#2]}
\newcommand{\HFC}{HFC}
\newcommand{\ofname}{\mathtt{g}}
\newcommand{\stvar}{s}
\newcommand{\rtvar}{r}
\newcommand{\ltvar}{l}
\newcommand{\tvar}{t}
\newcommand{\ftypeOf}[1]{\mathtt{field}(#1)}
\newcommand{\ltypescheme}{\textit{LS}}
\newcommand{\emptyseq}{\bullet}
\newcommand{\FVname}{\textbf{FV}}
\newcommand{\FV}[1]{\FVname(#1)}
\newcommand{\FTVname}{\textbf{FTV}}
\newcommand{\FTV}[1]{\FTVname(#1)}
\newcommand{\auxNAME}{\textit{aux}}
\newcommand{\aux}[1]{\auxNAME(#1)}
\newcommand{\bodyNAME}{\textit{body}}
\newcommand{\body}[1]{\bodyNAME(#1)}
\newcommand{\argsNAME}{\textit{args}}
\newcommand{\args}[1]{\argsNAME(#1)}
\newcommand{\TtypEnv}{\mathcal{A}}
\newcommand{\OStypEnv}{\mathcal{B}}
\newcommand{\LTStypEnv}{\mathcal{D}}
\newcommand{\expTypJud}[4]{#1 ; #2 \vdash #3 : #4}
\newcommand{\funTypJud}[3]{#1 \vdash #2 : #3}
\newcommand{\proTypJud}[2]{\vdash #1 : #2}
\newcommand{\mkvtree}[3]{#2 \langle #3 \rangle}
\newcommand{\piB}[1]{\pi^{#1}}
\newcommand{\piBof}[2]{\piB{#1}(#2)}
\newcommand{\piI}[1]{\pi_{#1}}
\newcommand{\piIof}[2]{\piI{#1}(#2)}
\newcommand{\fstK}{\mathtt{fst}}
\newcommand{\sndK}{\mathtt{snd}}
\newcommand{\headK}{\mathtt{head}}
\newcommand{\tailK}{\mathtt{tail}}
\newcommand{\funvalue}{\mathtt{f}}
\newcommand{\muxK}{\mathtt{mux}}
\newcommand{\dc}{\mathtt{c}}
\newcommand{\dcOf}[2]{#1(#2)}
\newcommand{\pairK}{\mathtt{pair}}
\newcommand{\PairK}{\mathtt{Pair}}
\newcommand{\NullK}{\mathtt{Null}}
\newcommand{\ConsK}{\mathtt{Cons}}
\newcommand{\listK}{\mathtt{list}}
\newcommand{\pairltypeOf}[2]{\pairK(#1,#2)}
\newcommand{\listltypeOf}[1]{\listK(#1)}
\newcommand{\WFVT}[3]{\textit{WFVT}(#1,#2,#3)}
\newcommand{\WFVTE}[3]{\textit{WFVTE}(#1,#2,#3)}
\definecolor{light-gray}{gray}{0.7}
\newcommand{\selfK}{\mathtt{uid}}
\newcommand{\minHoodK}{\texttt{min-hood}}
\newcommand{\pickHoodK}{\texttt{pick-hood}}
\newcommand{\snsNumK}{\texttt{sns-num}}
\newcommand{\snsFunK}{\texttt{sns-fun}}
\newcommand{\toSymK}{\mathrm{\texttt{=>}}}
\newcommand{\shift}[1]{\textbf{shift}(#1)}
\newcommand{\builtindenot}[2]{\mathcal{#1}\llbracket #2 \rrbracket}
\newcommand{\predevices}[1]{{#1}^{{}^-}\!\!\!}
\newcommand{\nbrdevice}[2]{{#1}^{#2}}
\newcommand{\repdevice}[1]{{#1}^-}
\newcommand{\decay}[0]{t_{\mathtt{d}}}
\newcommand{\PathS}[0]{\mathbf{P}}
\newcommand{\pathS}[0]{P}
\newcommand{\VarS}[0]{X}
\newcommand{\EventS}[0]{\mathbf{E}}
\newcommand{\eventS}[0]{E}
\newcommand{\eventId}[0]{\epsilon}
\newcommand{\timeS}[0]{t}
\newcommand{\posS}[0]{p}
\newcommand{\devF}[1]{\deviceId_{#1}}
\newcommand{\timeF}[1]{\timeS_{#1}}
\newcommand{\setFS}[0]{\textbf{F}}
\newcommand\pto{\mathrel{\ooalign{\hfil$\mapstochar$\hfil\cr$\to$\cr}}}
\newcommand{\neighbour}[2]{\textit{neigh}(#1,#2)}
\newcommand{\denottype}[1]{\mathcal{T}\llbracket{#1}\rrbracket}
\newcommand{\denotval}[1]{\mathcal{V}\llbracket {#1} \rrbracket}
\newcommand{\denotexp}[3]{\mathcal{E}\llbracket {#1} \rrbracket_{#2}^{#3}}
\newcommand{\denotf}[2]{\lambda #1.#2}
\newcommand{\dvalue}[0]{\Phi}
\begin{document}

% Title portion
\title{A Higher-order Calculus of Computational Fields\footnote{Author's addresses: M. Viroli  {and} D. Pianini, DISI, University of Bologna, Italy; G.Audrito {and} F. Damiani, Dipartimento di Informatica, University of Torino, Italy; J. Beal, Raytheon BBN Technologies, USA.}}
\author{Mirko Viroli, Giorgio Audrito, Ferruccio Damiani, Danilo Pianini, Jacob Beal}
%MIRKO VIROLI
%\affil{University of Bologna, Italy}
%GIORGIO AUDRITO
%\affil{University of Torino, Italy}
%FERRUCCIO DAMIANI
%\affil{University of Torino, Italy}
%DANILO PIANINI
%\affil{University of Bologna, Italy}
%JACOB BEAL
%\affil{Raytheon BBN Technologies, USA}
%}
% Proposed ordering based on:
% - original order, modified by 1) Mirko leading development of the denotational semantics, and 2) Giorgio providing the critical denotational/operational coherence proof.

% NOTE! Affiliations placed here should be for the institution where the
%       BULK of the research was done. If the author has gone to a new
%       institution, before publication, the (above) affiliation should NOT be changed.
%       The authors 'current' address may be given in the "Author's addresses:" block (below).
%       So for example, Mr. Abdelzaher, the bulk of the research was done at UIUC, and he is
%       currently affiliated with NASA.
\maketitle

\begin{abstract}
The complexity of large-scale distributed systems, particularly when deployed in physical space, calls for new mechanisms to address composability and reusability of collective adaptive behaviour.
Computational fields have been proposed as an effective abstraction to fill the gap between the macro-level of such systems (specifying a system's collective behaviour) and the micro-level (individual devices' actions of computation and interaction to implement that collective specification), thereby providing a basis to better facilitate the engineering of collective APIs and complex systems at higher levels of abstraction.
This paper proposes a full formal foundation for field computations, in terms of a core (higher-order) calculus of computational fields containing a few key syntactic constructs, and equipped with typing, denotational and operational semantics.
Critically, this allows formal establishment of a link between the micro- and macro-levels of collective adaptive systems, by a result of full abstraction and adequacy for the (aggregate) denotational semantics with respect to the (per-device) operational semantics.
\end{abstract}

\section{Introduction}\label{sec-intro}

The increasing availability of computational devices of every sort, spread throughout our living and working environments, is transforming the challenges in construction of complex software applications, particularly if we wish them to take full opportunity of this computational infrastructure.
Large scale, heterogeneity of communication infrastructure, need for resilience to unpredictable changes, openness to on-the-fly adoption of new code and behaviour, and pervasive collectiveness in sensing, planning and actuation: all these features will soon be the norm in a great variety of scenarios of pervasive computing, the Internet-of-Things, cyber-physical systems, etc.
Currently, however, it is extremely difficult to engineer collective applications of this kind, mainly due to the lack of computational frameworks well suited to deal with this level of complexity in application services.
Most specifically, there is need to provide mechanisms by which reusability and composability of components for collective adaptive behaviour becomes natural and implicit, such that they can support the construction of layered APIs with formal behaviour guarantees, sufficient to readily enable the creation of complex applications.

Aggregate computing \cite{BPV-COMPUTER2015} is a paradigm aiming to address this problem, by means of the notion of \emph{computational field} \cite{tota} (or simply \emph{field}): this is a global, distributed map from computational devices to computational objects (data values of any sort, including higher-order objects such as functions and processes). 
Computing with fields means computing such global structures, and defining a reusable block of behaviour means to define a reusable computation from fields to fields: this functional view holds at any level of abstraction, from low-level mechanisms up to whole applications, which ultimately work by getting input fields from sensors and process them to produce output fields to actuators. 

The \emph{field calculus} \cite{forte2015,DVB-SCP2016} is a tiny functional language providing basic constructs to work with fields, whose operational semantics can act as blueprint for actual implementations where myriad devices interact via proximity-based broadcasts.
Field calculus provides a unifying approach to understanding and analysing the wide range of approaches to distributed systems engineering that make use of computational fields~\cite{SpatialIGI2013}.
Recent works have also adopted this field calculus as a \emph{lingua franca} to investigate formal properties of resiliency to environment changes \cite{DiGamma,VBDP-SASO2015}, and to device distribution \cite{BVPD-SASO16}.

In this paper we propose a full foundation for aggregate computing and field computations.
We introduce syntax, typing, denotational semantics, properties, and operational semantics of a higher-order version of field calculus, where functions---and hence, computational behaviour---can be seen as objects amenable to manipulation just like any other data structure, and can hence be injected at run-time during system operation through sensors, spread around, and be executed by all (or some) devices, which then coordinate on the collective computation of a new service.

A key insight and technical result of this paper is that the notoriously difficult problem of reconciling local and global behaviour in a complex adaptive system \cite{SpatialIGI2013} can be connected to a well-known problem in programming languages: correspondence between denotational and operational semantics.
On the one hand, in field calculus, denotational semantics characterises computations in terms of their global effect across space (available devices) and time (device computation events)---i.e., the macro level.
On the other hand, operational semantics gives a transition system dictating each device's individual and local computing/interactive behaviour---i.e., the micro level.
Correspondence between the two, formally proved in this paper via \emph{adequacy} and  \emph{full abstraction} (c.f., \cite{Curien:FullAbstraction,Stoughton:FullyAbstract}), thus provides a formal micro-macro connection: one designs a system considering the denotational semantics of programming constructs, and an underlying platform running the distributed interpreter defined by the operational semantics guarantees a consistent execution.
This is a significant step towards effective methods for the engineering of self-adaptive systems, achieved thanks to the standard theory and framework of programming languages.

The remainder of this paper is organised as follows: 
Section~\ref{sec-concepts} reviews related works and the background for this work, describing the key elements of higher-order field computations.
Section~\ref{sec-calculus-syntax-and-typing} defines syntax and typing of the proposed calculus, which is then provided with two semantics:
Section~\ref{sec-denotational-semantics} defines denotational semantics, while
Section~\ref{sec-calculus-operational-semantics} defines operational semantics.
Section~\ref{sec-properties} then discusses and proves properties of these semantics, including adequacy and  full abstraction, and
Section~\ref{sec-examples} gives examples showing the expressive power of the proposed calculus for engineering collective adaptive systems.
Finally Section~\ref{sec-conclusion} concludes and discusses future directions.\footnote{This paper is an extended version of the work in \cite{forte2015}, adding: a reduced (yet more expressive) and reworked set of constructs, a type system, denotational semantics, and adequacy and full abstraction results.}

\section{Related Work and Background}\label{sec-concepts}

The work on field calculus presented in this paper builds on a sizable body of prior work.
In this section, we begin with a general review of work on the programming of aggregates.
Following this, we aim to provide the reader with examples and intuition that can aid in understanding the more formal presentation in subsequent sections,
presenting a conceptual introduction to programming with fields and the extension of these concepts to first-class functions over fields.

\subsection{Macro-programming and the aggregation problem}

One of the key challenges in software engineering for collective adaptive systems is that such systems frequently comprise a potentially high number of devices (or agents) that need to interact locally (e.g., interacting by proximity as in wireless sensor networks), either of necessity or for the sake of efficiency.
Such systems need to carry on their collective tasks cooperatively, and to leverage such cooperation in order to adapt to unexpected contingencies such as device failures, loss of messages, changes to inputs, modification of network topology, etc.
Engineering locally-communicating collective systems has long been a subject of interest in a wide variety of fields, from biology to robotics, from networking to high-performance computing, and many more.
%
%Recently, their foundation as a new paradigmatic approach to build computational systems have been subject of investigation---see e.g. FoCAS EU initiative\footnote{\texttt{http://www.focas.eu/}}.

Despite the diversity of fields involved, however, a uniting has been the search for appropriate mechanisms, models, languages and tools to organise cooperative computations as carried out by a potentially vast aggregation of devices spread over space.

A general survey of work in this area may be found in~\cite{SpatialIGI2013}, which we summarise and complement here.
Across the multitude of approaches that have been developed in the past, a number of common themes have emerged,
and prior approaches may generally be understood as falling into one of several clusters in alignment with these themes:
\begin{itemize}

\item \emph{Foundational approaches to group interaction:} These approaches present mathematically concise foundations for capturing the interaction of groups in complex environments, most often by extending the archetypal process algebra $\pi$-calculus, which originally models flat compositions of processes. 
Such approaches include various models of environment structure (from "ambients" to 3D abstractions) \cite{DBLP:conf/cie/CardelliG10,ambients,Milner200660}, shared-space abstractions by which multiple processes can interact in a decoupled way \cite{klaim,VCMZ-TAAS2011}, and attribute-based models declaratively specifying the target of communication so as to dynamically create ensembles \cite{SCEL}.

\item \emph{Device abstraction languages:} These approaches allow a programmer to focus on cooperation and adaptation by making the details of device interactions implicit. For instance, TOTA~\cite{tota} allows one to program tuples with reaction and diffusion rules, while in the SAPERE approach~\cite{VPMSZ-SCP2015} such rules are embedded in space and apply semantically, and the $\sigma\tau$-Linda model~\cite{spatialcoord-coord2012} manipulates tuples over space and time.
Other examples include MPI~\cite{MPI2}, which declaratively expresses topologies of processes in supercomputing applications, NetLogo~\cite{sklar2007netlogo}, which provides abstract means to interact with neighbours following the cellular automata style, and Hood~\cite{hood}, which implicitly shares values with neighbours; 

\item \emph{Pattern languages:} These approaches provide adaptive means for composing geometric and/or topological constructions, though with little focus on computational capability.  For example, the Origami Shape Language~\cite{nagpalphd} allows the programmer to imperatively specify geometric folds that are compiled into processes identifying regions of space, Growing Point Language~\cite{coorephd} provides means to describe topologies in terms of a ``botanical'' metaphor with growing points and tropisms, ASCAPE~\cite{inchiosa2002overcoming} supports agent communication by means of topological abstractions and a rule language, and the catalogue of self-organisation patterns in \cite{FDMVA-NACO2013} organises a variety of mechanisms from low-level primitives to complex self-organization patterns.

\item \emph{Information movement languages:} These are the complement of pattern languages, providing means for summarising information obtained from across space-time regions of the environment and streaming these summaries to other regions, but little control over the patterning of that computation.
Examples include TinyDB~\cite{tinydb} viewing a wireless sensor network as a database, Regiment~\cite{regiment} using a functional language to be compiled into protocols of device-to-device interaction, and the agent communication language KQML~\cite{Finin94kqml}.

\item \emph{Spatial computing languages:} These provide flexible mechanisms and abstractions to explicitly consider spatial aspects of computation, avoiding the limiting constraints of the other categories. For example, Proto~\cite{proto06a} is a Lisp-like functional language and simulator for programming wireless sensor networks with the notion of computational fields, and MGS~\cite{GiavittoMGS02} is a rule-based language for computation of and on top of topological complexes.
\end{itemize}

Overall, the successes and failures of these language suggest, as observed in~\cite{BPV-COMPUTER2015}, 
that adaptive mechanisms are best arranged to be implicit by default,
that composition of aggregate-level modules and subsystems must be simple, transparent, and result in highly predictable behaviours, 
and that large-scale collective adaptive systems typically require a mixture of coordination mechanisms to be deployed at different places, times, and scales.

\subsection{Computing with fields}
\label{s:conceptual:fields}

At the core of the approach we present in this paper is a shift from individual devices computing single values, to whole networks computing \emph{fields}, where a field is a collective structure that maps each device in some portion of the network to locally computed values over time.
Accordingly, instead of considering computation as a process of manipulating input events to produce output events, computing with fields means to take fields as inputs and produce fields as outputs.

This change of focus has a deep impact when it comes to the engineering of complex applications for large networks of devices, in which it is important that the identity and position of individual devices should not exert a significant influence on the operation of the system as a whole.
Applying the field approach to building such systems, one can create reusable distributed algorithms and define functions (from fields to fields) as building blocks, structure such building blocks into libraries of increasing complexity, and compose them to create whole application services~\cite{BPV-COMPUTER2015}.

For example, assume that one is able to define the following three functions:
\begin{itemize}
 \item \texttt{\fn{distance-to}(source)}: This function takes an indicator field \texttt{source} of Boolean values, holding true at a set of devices considered as \emph{sources}, and yields a field of real values, estimating shortest distance from each device to the closest source (if each device is assumed capable of locally estimating distance to close neighbours, long-range distance estimates can be computed transitively).
 
 \item \texttt{\fn{converge-sum}(potential,val)}: This function takes a field \texttt{potential} of real values, and a field \texttt{val} of numeric values, and it accumulates all values of \texttt{val} downward along the \texttt{potential}, summing them as they reach common devices. If the trajectory down \texttt{potential} always leads to a single global minimum, then the trajectories form a spanning tree with the minimum at its root, and in the resulting field the root holds the sum of all values of \texttt{val}.
 
 \item \texttt{\fn{low-pass}(alpha,val)}: This function takes a field \texttt{val} of real values, and at each device implements an exponential filter with blending constant \texttt{alpha}, thus acting as a low-pass filter smoothing rapid changes in the input \texttt{val} at each device.
 
\end{itemize}

Now consider an example of an application deployed into a museum, whose docents monitor their efficacy in part by tracking the number of patrons nearby while they are working.  
This application can be implemented by a simple function, taking as input Boolean fields indicating docents and patrons, and whose body is defined by purely-functional composition of the three blocks above, written e.g. in the following way:
\begin{Verbatim}[fontsize=\fontsize{8pt}{9pt}, frame=single, commandchars=\\\{\}, codes={\catcode`$=3\catcode`^=7\catcode`_=8}]
\km{def} \fn{track-count}(docent, patron) \{
   \fn{low-pass}( 0.5, \fn{converge-sum}( \fn{distance-to}(docent), \fn{mux}(patron,1,0)))
\}
\end{Verbatim}
in which the function \texttt{mux} acts as a simple multiplexer at each device, transforming true values to $1$ and false values to $0$.
This function creates a field of estimated distances out of each \texttt{docent}, and uses it as potential field for counting the number of \texttt{patron}s nearby, with the low-pass filter smoothing the result so as to deal with rapid fluctuations in the estimate that can be caused by device mobility.

As the aim of this paper is to clarify syntax and semantics of the field-based computational model, this example should already clarify the goal of compositionally stacking increasingly complex distributed algorithms, up to a point in which the focus on individual agent behaviour completely vanishes.
This can be taken even further by proving that the ``building block'' algorithms satisfy certain properties preserved by functional compositions, such as self-stabilisation \cite{VBDP-SASO2015} or consistency with a continuum model \cite{BVPD-SASO16}, thus implying the same properties hold for applications composed using those building blocks \cite{BPV-COMPUTER2015}.

\subsection{Higher-order fields and restriction}

The calculus that we present in this paper is a higher-order extension of the work in \cite{DVB-SCP2016} to include embedded first-class functions, 
with the primary goal of allowing field computations to handle functions just like any other value.
This extension hence provides a number of advantages:
\begin{itemize}
\item Functions can take functions as arguments and return a function as result (higher-order functions). This is key to defining highly reusable building block functions, which can then be fully parameterised with various functional strategies.
\item Functions can be created ``on the fly'' (anonymous functions).  Among other applications, such functions can be passed into a system from the external environment, as a fields of functions considered as input coming a sensor modelling humans adding new code into a device while the system is operating.
\item Functions can be moved between devices in the same way our calculus allows values to move, which allows one to express complex patterns of code deployment across space and time.
\item Similarly, in our calculus a function value is naturally actually a field of functions (possibly created on the fly and then shared by movement to all devices), and can be used as  an ``aggregate function'' operating over a whole spatial domain.
\end{itemize}

The last feature is critical, and its implications are further illustrated in Figure~\ref{f:fn}.
\begin{figure}[t]
\centering
\subfloat[A field of functions hosted on individual devices]{\includegraphics[width=0.45\textwidth]{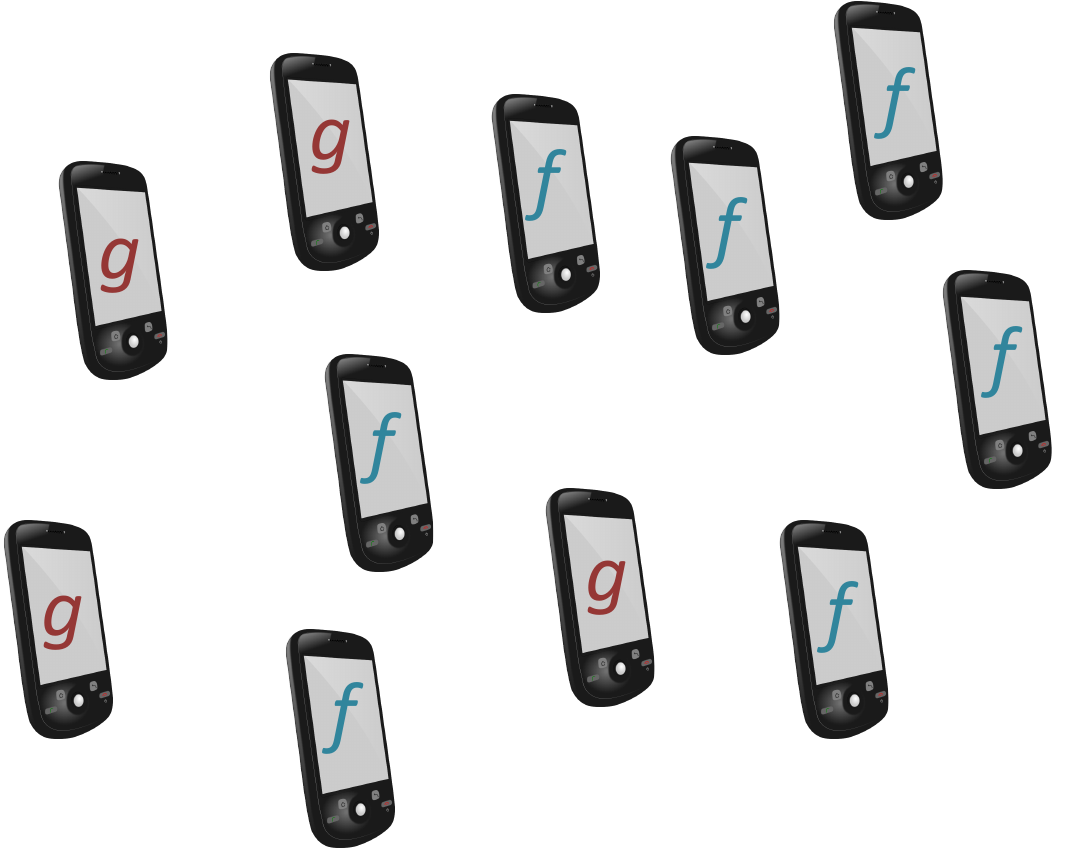}\label{f:fnif}}\hfill
\subfloat[The same field interpreted as an aggregate function defined piecewise over aggregates of devices]{\includegraphics[width=0.45\textwidth]{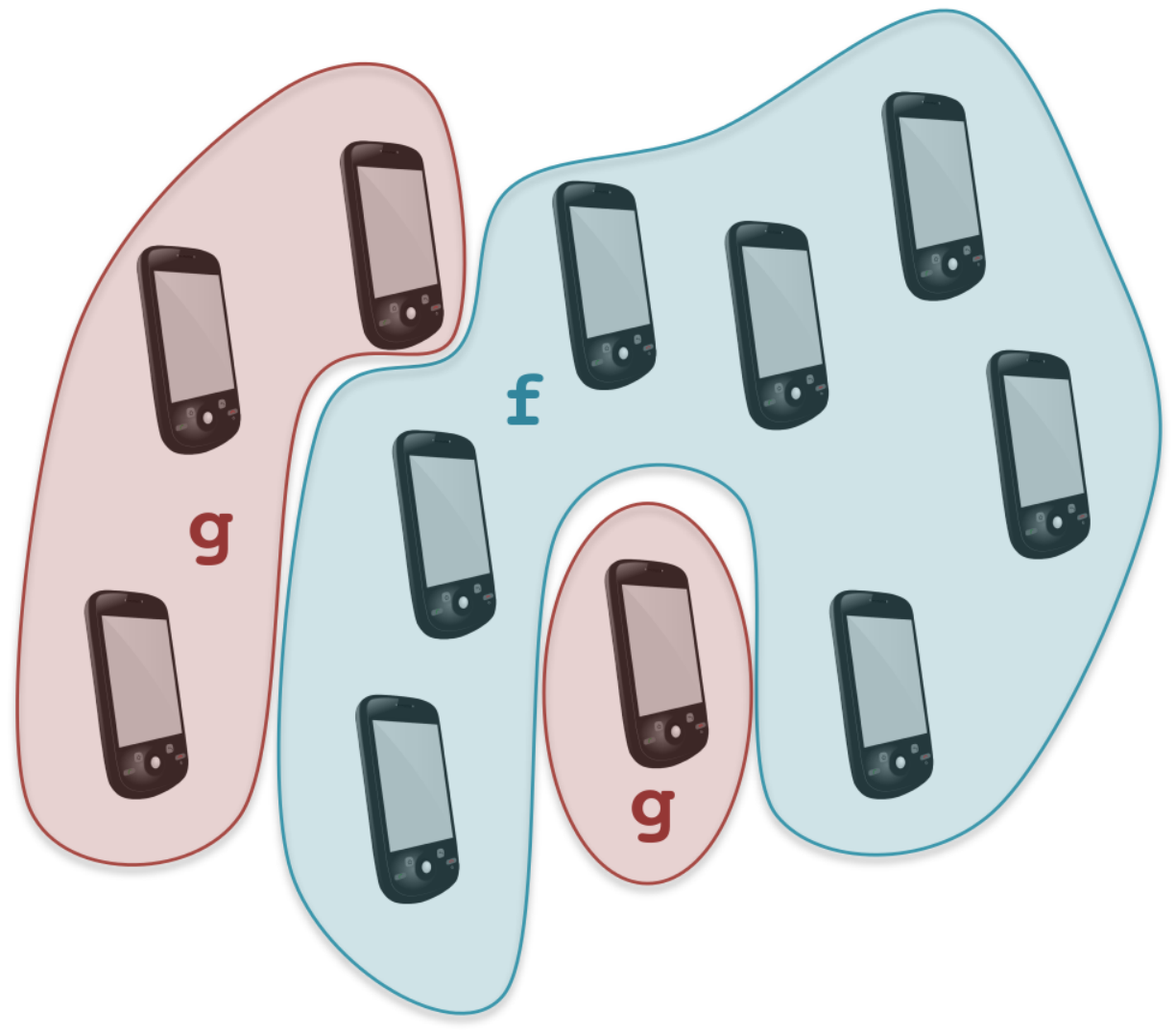}\label{f:iffn}}
\caption{Field calculus functions are evaluated over a domain of
  devices, e.g., in (a) the network reports function {\tt f} in some devices, and function {\tt g} in others.
  When such a field of functions is used in function application it gets reinterpreted as an aggregate function, as shown in (b).
  Namely, the domain is actually partitioned into two subspaces that do not interact with each other: in one domain $D_f$, function {\tt f} is applied to its arguments, seen as fields restricted to $D_f$; in the complementary domain $D_g$, function {\tt g} is applied to its arguments, seen as fields restricted to $D_g$.
 % to (a) using first-class functions.
}\label{f:fn}
\end{figure}
In considering fields of function values, we take the elegant approach in which making a function call acts as a branch, with each function in the range of the field
applied only on the subspace of devices that hold that function.
When the field of functions is constant, this implicit branch reduces to be precisely equivalent to a standard function call.  
This means that we can view ordinary evaluation of a function $f$ as equivalent to creating
a function-valued field with a constant value $f$, then making a function
call applying that field to its argument fields.
This elegant transformation is one of the key insight of this paper, enabling
first-class functions to be implemented with relatively minimal complexity.

This interpretation of function calls as branching also turns out to be very flexible, as it generally allows the dynamic partitioning of the network into subspaces, each executing a different subprogram, 
and such programs can be even dynamically injected into some device and then be diffused around, as will be exemplified later in this paper.

\section{The Higher-order Field Calculus: Syntax and Typing}\label{sec-calculus-syntax-and-typing}

This section presents the \emph{higher-order field calculus (HFC)}, as extended and refined from~\cite{forte2015},\footnote{The version of the HFC presented in this paper is a minor refinement of the version of HFC presented in~\cite{forte2015}. The new version adopts a different syntax
 (in~\cite{forte2015} a Lisp-like syntax was used), is parametric in the set of the modeled data values (in~\cite{forte2015} Booleans, numbers, and pairs were explicitly modeled), and
does not have a syntactic construct devoted to domain restriction (the $\ifK$-construct in~\cite{forte2015}) since it can be encoded by means of an aggregate function call---the existence of such an encoding is one of the novel contributions of this paper.}
a tiny functional calculus capturing the essential elements of field
computations, much as $\lambda$-calculus~\cite{LambdaCalculus}
captures the essence of functional computation and FJ~\cite{FJ} the
essence of object-oriented programming.
For the key importance of higher-order features, especially in the toolchain under construction \cite{Protelis15}, in the following we sometimes refer to this calculus as simply the \emph{field calculus}, especially when there is no confusion with the work in \cite{DVB-SCP2016}, which did not include higher-order features.

The defining property of fields is that they allow us to see a computation from two different viewpoints.
On the one hand, from the standard ``local'' viewpoint, a computation is seen as occurring in a single device, 
and it hence manipulates data values (e.g., numbers) and communicates such data values with other devices in order to enable coordination.
On the other hand, from the ``aggregate'' (or ``global'') viewpoint
\cite{VDB-FOCLASA-CIC2013}, computation is seen as occurring on the
overall network of interconnected devices: the data abstraction
manipulated is hence a whole distributed {\em field}, a dynamically
evolving data structure mapping devices $\deviceId$ in a domain $D$ (the whole or a subset of it) to associated data values $\anyvalue$ in range $V$.\footnote{%
Note that this viewpoint can embrace both discrete domains (e.g., networks of devices) and continuous domains (e.g., the environment where computation acts upon); in this paper, however, we will restrict ourselves to treating fields with discrete domains.  For a discussion of the relationship between computation on discrete and continuous domains, see~\cite{BVPD-SASO16}}
Field computations then take fields as input (e.g., from sensors) and
produce new fields as outputs (e.g., to feed actuators).
Both input and output may, of course, change over time 
(e.g., as inputs change or the computation progresses).
For example, the input of a computation might be a field of
temperatures, as perceived by sensors at each device in the network,
and its output might be a Boolean field that maps to $\truevalue$ where and when
temperature is greater than 25$^{\circ}$C, and to $\falsevalue$ elsewhere.
In a more involved example, the output might map to $\truevalue$ only those devices whose distance is less than 50 meters from some device where temperature was greater than 25$^{\circ}$C for the last 60 seconds. The operational semantics described in Section~\ref{sec-calculus-operational-semantics} will then describe how such global-level behaviour can be turned into a fully-distributed computation.

\subsection{Syntax}\label{sec-calculus-syntax}

Figure~\ref{fig:source:syntax} presents the syntax of the proposed calculus.
Following~\cite{FJ}, the overbar notation denotes
metavariables over sequences and the empty sequence is denoted  by $\emptyseq$. E.g., for expressions, we let $\overline
\e$ range over sequences of expressions, written
$\e_1,\,\e_2,\,\ldots\,\e_n$ $(n\ge 0)$.

\begin{figure}[!t]
\centering
\centerline{\framebox[\textwidth]{$
\begin{array}{@{\hspace{-0.1cm}}lcl@{\hspace{-4mm}}r}
   \PROGRAM & \BNFcce & \overline{\FUNCTION}  \; \e
                                                                                                 &   {\footnotesize \mbox{program}}    
\\[3pt]
    \FUNCTION & \BNFcce &  \defK \; \fname (\overline{\xname}) \; \{\e\}
 &   {\footnotesize \mbox{function declaration}}
\\[3pt]
\e & \BNFcce &  \xname 
  \; \BNFmid \;  \fvalue 
   \; \BNFmid \; \dcOf{\dc}{\overline{\e}}
\; \BNFmid \; \oname  \; \BNFmid \; \fname 
 \; \BNFmid \; (\overline{\xname}) \; \toSymK \; \e
%    \; \BNFmid \; \snsname
   \; \BNFmid \; \e(\overline{\e})
  \;\BNFmid \; \repK(\e)\{(\xname) \; \toSymK \; \e\}
     \; \BNFmid \; \nbrK\{\e\}
   %\; \BNFmid \; \ifK \; (\e) \; \{\e\} \; \elseK \; \{\e\} 
                                                                                             &   {\footnotesize \mbox{expression}} 
\\
%\we & \BNFcce &  \xname \;
%   \; \BNFmid \; \lvalue                                                              &   {\footnotesize \mbox{variable or local value}}
%\\
% \ofname & \BNFcce &  \oname \;
%    \; \BNFmid \; \fname                                                              &   {\footnotesize \mbox{built-in operator or user-defined function}}
%\\
\end{array}
$}
%}
%\label{f:hfc}
}
\caption{Syntax.}
\label{fig:source:syntax}\vspace{-10pt}
\end{figure}

A program  $\PROGRAM$ consists of a sequence of function declarations and of a main expression $\e$.
A function declaration  $\FUNCTION$ defines a (possibly recursive) function. It consists of the name of the function $\fname$, of $n\ge 0$ variable names $\overline{\xname}$ representing the formal parameters, and of
an expression $\e$ representing the body of the function. 

Expressions $\e$ are the main entities of the calculus; terminologically, an expression can be: a variable $\xname$, used as function formal parameter;
a \emph{neighbouring field value} $\fvalue$;
a \emph{data-expression}  $\dcOf{\dc}{\overline{\e}}$ (where $\dc$ is a \emph{data-constructor} with arity $m\ge 0$ and the $m$ expressions  $\overline{\e}$ are its arguments); a \emph{built-in function} name $\oname$; a \emph{declared function} name $\fname$; an \emph{anonymous function} $(\overline{\xname}) \; \toSymK \; \e$ (where $\overline{\xname}$ are the formal parameters and $\e$ is the body);
a \emph{function call} $\e(\overline{\e})$;
a \emph{$\repK$-expression} $\repK(\e_0)\{(\xname) \; \toSymK \; \e_1\}$, modelling time evolution;
or an \emph{$\nbrK$-expression} $\nbrK\{\e\}$, modelling device-to-neighbourhood interaction.
It should be noted that for the purpose of defining a foundational calculus, data-expressions and built-in representing purely functional operators could be dropped: we have have decided to include them since they simplify using the calculus to formalise non-trivial examples. 
Let the set of free variables in an expression $\e$ be denoted by $\FV{\e}$, say that an expression $\e$ is \emph{closed} if $\FV{\e}=\emptyseq$, and assume the main expression of any program must be closed. 
In section \ref{sec-calculus-informal} we informally describe the meaning of these constructs, before the full formal treatment of denotational and operational semantics will given in the remainder of the paper.

\subsection{Values}\label{sec-calculus-values}

A key aspect of the calculus is that \emph{an expression always models a whole field}.
Let the denumerable set of \emph{device identifiers} $\deviceId$  (which are unique numbers) be $\deviceIdSet$.
A \emph{firing event} $\eventId$ (or simply event) is a point in space-time where a device $\deviceId$ ``fires'', namely, evaluates the main expression of the program.
The outcome of the evaluation of a closed expression $\e$ at a given event gives a \emph{value}, so we can select a (space-time) domain $D_\e$ and collect the values obtained in all events in that domain $D_\e$, and by doing so form a field over $D_\e$, which then represent a time-varying distributed structure.

The syntax of values is given in Figure~\ref{fig:values}.
A value can be either a \emph{local value} $\lvalue$ or a \emph{neighbouring field value} $\fvalue$.
At a given event where a device $\deviceId$ fires, a local value represents an atomic data produced by $\deviceId$, whereas a neighbouring field value is a map associating a local value $\lvalue$ to each neighbour of $\deviceId$. Neighbouring field values are generally used to describe the outcome of some form of device-to-neighbour interaction as described below (sharing of values as of construct $\nbrK$, or sensing of local environment e.g., to estimate distances to neighbours).
While neighbouring field values cannot be denoted in source programs, but only appear dynamically during computations, a local value $\lvalue$ can be denoted (it is in fact parts of the syntax) to represent a constant-valued field that maps each event to that value (e.g., value $1$ represents a field always mapping each device to $1$).
Given that the calculus is higher-order, a local value can be:

\begin{figure}[!t]
\centering
\centerline{\framebox[\textwidth]{$
\begin{array}{ccl@{\hspace{7.2cm}}r}
        \anyvalue & \BNFcce & \lvalue   \; \BNFmid \;  \fvalue   &   {\footnotesize\mbox{value}} \\[3pt]
        \fvalue & \BNFcce & \envmap{\overline{\deviceId}}{\overline{\lvalue}}  &   {\footnotesize\mbox{neighbouring field value}} \\[3pt]
        \lvalue & \BNFcce &  \funvalue  \; \BNFmid \;     \dcOf{\dc}{\overline{\lvalue}}  &   {\footnotesize\mbox{local value}} \\[3pt]
        \funvalue & \BNFcce &  {\oname  \; \BNFmid \; \fname \; \BNFmid \; (\overline{\xname}) \; \toSymK \; \e} \qquad   &   {\footnotesize\mbox{function value}} \\
\end{array}
$}}
\caption{Values, neighbouring field values, local values, and function values for \HFC{}.}
\label{fig:values}\vspace{-10pt}
\end{figure}

\begin{itemize}
\item
either a \emph{data value} $\dcOf{\dc}{\lvalue_1,\cdots,\lvalue_m}$, consisting of a  data-constructor $\dc$ of arity $m\ge 0$ and $m\ge 0$ local value arguments---data values,  simply written $\dc$ when $m=0$, can be Booleans $\truevalue$ and $\falsevalue$, numbers, strings, or structured values like pairs (e.g., $\dcOf{\PairK}{3,\dcOf{\PairK}{\falsevalue,5}}$) or lists (e.g., $\dcOf{\ConsK}{2,\dcOf{\ConsK}{4,\NullK}}$); 
\item
or a \emph{function value} $\funvalue$, consisting of either a built-in operator $\oname$, a declared function $\fname$, or a closed anonymous function expression $(\overline{\xname}) \; \toSymK \; \e$.
\end{itemize}

\subsection{Informal semantics}\label{sec-calculus-informal}

While neighbouring field values, data-expressions and functions (built-in, declared function, and anonymous), trivially result in a constant field, the last three kinds of expression ($\nbrK$-expressions, $\repK$-expressions, and function calls) represent the core field manipulation constructs provided by the calculus. 
Their value at each event may depend on both the particular device that is evaluating it, and also on the last event of its neighbours.
\begin{enumerate}
\item
{\em Time evolution:} $\repK (\e_0)\{(\xname) \; \toSymK \; \e\}$ is a 
  ``repeat'' construct for dynamically changing fields, assuming
  device $\deviceId$ evaluates the application of its anonymous function $(\xname) \; \toSymK \; \e$ repeatedly in
  asynchronous rounds. At the first round $(\xname) \; \toSymK \; \e$ is applied to the value 
  of $\e_0$ at $\deviceId$,
  then at each step $(\xname) \; \toSymK \; \e$ is applied to the value obtained at previous step.
   For instance, $\repK(0)\{(\xname)\; \toSymK \;+(\xname,1))\}$ counts how
  many rounds each device has computed.
\item 
{\em Neighbouring field construction:} $\nbrK \{\e\}$ models
  device-to-neighbour interaction, by 
  returning a field of neighbouring field values: each device is associated to value $\fvalue$, which in turn maps any neighbour $\deviceId$ in the domain $D_\e$ to its most recent available value of $\e$
  (e.g., obtained via periodic broadcast, as discussed in the operational semantics).
  Such neighbouring field values can then be manipulated
  and summarised with built-in operators. For instance, $\minHoodK(\nbrK\{\e\})$ maps each device to the minimum value
  of $\e$ across its neighbourhood.
\item
{\em Function  call:} $\e(\e_1,\ldots,\e_n)$, where $n\ge 0$ and  $\e$ evaluates to a field of function values. If the field is not constant, the application is evaluated separately in each \emph{cluster}, i.e., subdomain of events where $\e$ evaluates to the same function value. Either way, we reduce to the case where the field obtained from $\e$ is a constant function $\funvalue$ over a certain domain, and there can be two cases:

\begin{itemize}

\item If $\funvalue$ is a built-in operator $\oname$, $\e(\e_1,\ldots,\e_n)$ maps an event to the result of applying $\oname$ to the values at the same event of its $n\ge 0$ arguments $\e_1,\dots,\e_n$. Note that $\oname$ can be a \emph{pure operator}, involving neither state nor communication (e.g. mathematical functions like addition, comparison, and sine)---for instance, $+(1,2)$ is the expression evaluating to the constant-valued field $3$, also written $1+2$ for readability as for any other binary built-in operator. Alternatively, $\oname$ can be an \emph{environment-dependent operator}, modelling a sensor---for instance, 0-ary \texttt{sns-temp} is used to map each device $\deviceId$ where the built-in operator call
is evaluated to its local value of temperature, and the  0-ary \texttt{nbr-range} operator returns a neighbouring field value mapping each neighbour of the device $\deviceId$ where the built-in operator call is evaluated to an estimate of its current distance from $\deviceId$.

\item If $\funvalue$ is not a built-in operator, it can be a declared function $\fname$ with corresponding declaration $\defK \; \fname (\xname_1\,\dots\,\xname_n) \; \{\e\}$, or an anonymous function $(\xname_1\,\dots\,\xname_n) \; \toSymK \; \e$; then, expression $\e(\e_1,\ldots,\e_n)$ maps an event to the result of evaluating the closed expression $\e_0$ obtained from the body $\e$ of the function $\funvalue$ by replacing the occurrences of the formal parameters $\xname_1,\dots,\xname_n$ with the values of the 
expressions $\e_1,\ldots,\e_n$.

\end{itemize}

%
%\item 
%{\em Domain restriction:} $(\ifK \; \e_0\; \e_1 \; \e_2)$ is a branching construct, computing $\e_1$ in
%  the restricted domain where $\e_0$ is true, and $\e_2$ 
%  in its complement.
\end{enumerate}

\begin{remark}[Function Equality] \label{rmk:equality}
	In the definition of function calls given above, it is necessary to specify when two functional values $\funvalue$ are ``the same''. In the remainder of this paper, we assume that two such values are the same when they are \emph{syntactically equal}. This convention is carried over in the meaning of the builtin operator $\mathtt{=}$ when applied to function values, so that $\mathtt{=}(\funvalue_1, \funvalue_2)$ with $\funvalue_1$, $\funvalue_2$ values holds precisely when $\funvalue_1$, $\funvalue_2$ are syntactically identical  expressions.
\end{remark}

%
%According to the global viewpoint, any HFC computation can be viewed as a function $f$
%taking zero or more input fields \FDcomment{(Can we tell which are those fields?)} and returning one output field, i.e.,
%having the signature $f: (D \rightarrow \lvalueSet)^k \rightarrow (D
%\rightarrow \lvalueSet)$, where $D$ is the set of devices in the network.

According to the explanation given above, calling a declared or anonymous function 
acts as a branch, with each function in the range
applied only on the subspace of devices that hold that function.
Moreover, functional values allow code to be dynamically injected, moved, and executed in
network (sub)domains.
Namely:
\emph{(i)} functions can take functions as arguments and return a function as result (higher-order functions);
\emph{(ii)} (anonymous) functions can be created ``on the fly''; 
\emph{(iii)} functions can be moved between devices (via the $\nbrK{}$ construct); and 
\emph{(iv)} the function one executes can change over time (via the $\repK{}$ construct).

In this section, we have described the various constructs working in isolation: more involved examples dealing with combinations of constructs will be given in later sections, when the denotational and operational semantics are discussed.

%%
%Formally, this may be expressed as transforming a function-valued
%field $\phi$ into a function $f_\fvalue$ that is defined as:%\vspace{-2pt}
%\begin{equation}
%f_\fvalue(\fvalueAlt_1,\fvalueAlt_2,\dots) = \bigcup_{f \in \fvalue(D)} f(\fvalueAlt_1|_{\fvalue^{-1}(f)}, \fvalueAlt_2|_{\fvalue^{-1}(f)}, \dots)\vspace{-5pt}
%\end{equation}
%where $\fvalueAlt_i$ are the input fields, $\fvalue(D)$ is set of all functions held
%as data values by some devices in the domain $D$ of $\fvalue$, and
%$\fvalueAlt_i|_{\fvalue^{-1}(f)}$ is the restriction of $\fvalueAlt_i$ to the subspace of only
%those devices that $\fvalue$ maps to function $f$.
%%
%In fact, when the field of functions is constant, this reduces to
%be precisely equivalent to a standard function call.  This means that we
%can view ordinary evaluation of function $f$ as equivalent to creating
%a function-valued field with a constant value $f$, then making a function
%call applying that field to its argument fields.

\begin{remark}[Syntactic Sugar] \label{rmk:conditional}
Conventional branching can be implemented using a function call: in the examples we give, we will use the $\ifK$-expression $\ifK\;(\e_0)\;\{\e_1\}\; \elseK\;\{\e_2\}$ as syntactic sugar for:
\[
\muxK(\e_0, () \toSymK \sndK(\dcOf{\PairK}{\truevalue,\e_1}), () \toSymK \sndK(\dcOf{\PairK}{\falsevalue,\e_2}))()
\]
where function $\muxK$ is a built-in function multiplexer as defined in Section~\ref{s:conceptual:fields}, and function $\sndK$ extracts the second component of a pair. 
The result of branching is to partition the space-time domain in two subdomains: in the one where $\e_0$ evaluates to $\truevalue$, field $\e_1$ is computed and returned; in the complement where $\e_0$ evaluates to $\falsevalue$ , field $\e_2$ is computed and returned.
%and $\RepK(\e_0)\{\e_1\}$  as syntactic sugar for
%\[
%\begin{array}{l}
%\repK(()\toSymK\e_0)\{(\xname) \toSymK 
%\\
%\qquad\qquad\qquad                         ((\zname)\toSymK(()\toSymK\zname)) \;  (\e_1(\xname()))
%\\
%\qquad\qquad                \}()
%\end{array}
%\]
\end{remark}

 \subsection{Typing}\label{sec-typing}

In this section, we present a variant of the Hindley-Milner type system~\cite{Damas-Milner:POPL-1982} for the proposed calculus.
This type system has two main kinds of types, \emph{local types} (the types for local values) and \emph{field types} (the types for neighbouring field values), and is designed specifically to ground the denotational semantics presented in next section and to guarantee the following two properties:
\begin{description}
\item[Type Preservation] For every well-typed closed expression $\e$ of type $\type$, if the evaluation of $\e$ on event $\eventId$ (cf.\ the explanation at the beginning of Section~\ref{sec-calculus-values}) yields a result $\anyvalue$, then  $\anyvalue$ is of type $\type$.
\item[Domain Alignment] For every well-typed closed expression $\e$ of type $\type$, if the evaluation of $\e$ at event $\eventId$ yields a neighbouring field value $\fvalue$, then  the domain of $\fvalue$ consists of the device $\deviceId$ of event $\eventId$ and of its aligned neighbours, that is, the neighbours that have calculated the same expression $\e$ before the current evaluation started.
\end{description}
Domain alignment is key to guarantee that the semantics correctly relates the behaviour of $\nbrK$, $\repK$ and function application---namely, two neighbouring field values with different domain are never allowed to be combined. 
In essence, domain alignment is required in order to provide lexical scoping: without it, information may leak unexpectedly between devices that are evaluating different functions, or may be blocked from passing between devices evaluating the same function.

Since the type system is a customisation of the Hindley-Milner type system~\cite{Damas-Milner:POPL-1982} to the field calculus, there is an algorithm (not presented here) that, given an expression $\e$ and type assumptions for its free variables, either fails (if the expression cannot be typed under the given type assumptions)  or returns its \emph{principal type}, i.e., a type  such that all the types that can  be assigned to $\e$ by the type inference rules can be obtained from the principal type by substituting type variables with types. This algorithm is based on an \emph{unification} routine as in \cite{Smolka:SortedUnification} which exists since the type variables $\tvar$, $\ltvar$, $\rtvar$, $\stvar$ form a boolean algebra (i.e. $\stvar$ is exactly the intersection of $\ltvar$ and $\rtvar$, while $\tvar$ is their union).

%\subsection{Types, type schemes and type environments}\label{sec-types}

The syntax of types and local type schemes is given in  Figure~\ref{fig:SurfaceTyping} (top).
A \emph{type}  $\type$ is either a \emph{type variable} $\tvar$, or a local type, or a field type.
A \emph{local type} $\ltype$ is either a \emph{local type variable} $\ltvar$, or a built-in type $\builtintype$ (numbers, booleans, pairs, lists etc.), or the type of a function $(\overline{\type}) \rightarrow \rtype$ (possibly of arity zero). Note that a function always has local type, regardless of the local or field type of its arguments.
A \emph{return type} $\rtype$ is either a \emph{return type variable} $\rtvar$, or a local return type, or a field type.
A \emph{local return type} $\stype$ is either a \emph{local return type variable} $\stvar$, or a built-in type $\builtintype$, or the type of a function $(\overline{\type}) \rightarrow \stype$ (possibly of arity zero).
A \emph{field type} $\ftype$ is the type $\ftypeOf{\stype}$ of a field whose range contains values of local return type $\stype$.
Notice that the type system does not contemplate types of the kind $(\ldots) \to \cdots \to (\ldots) \to F$ (functions that return functions that return neighbouring field values), since expressions involving such types can be unsafe (as exemplified in the next subsection). Notice also that $\ltype$ is equal to $\stype$ together with functions $(\overline{\type}) \rightarrow \ftype$ (thus $\type$ is $\rtype$ together with such functions).

\emph{Local type schemes}, ranged over by $\ltypescheme$, support typing  polymorphic uses of data constructors, built-in operators and user defined-functions. Namely, for each data constructor, built-in operator or user-defined function $\ofname$ there is a \emph{local type scheme} $\forall\overline{\tvar}\overline{\ltvar}\overline{\rtvar}\overline{\stvar}.\ltype$, where $\overline{\tvar}$, $\overline{\ltvar}$, $\overline{\rtvar}$ and $\overline{\stvar}$ are all the type variables occurring in the type $\ltype$, respectively.
Each use of $\ofname$ can be typed with any type obtained from $\forall\overline{\tvar}\overline{\ltvar}\overline{\rtvar}\overline{\stvar}.\ltype$ by replacing the type variables $\overline{\tvar}$ with types, $\overline{\ltvar}$ with local types, $\overline{\rtvar}$ with return types and $\overline{\stvar}$ with local return types. 

\emph{Type environments}, ranged over by $\TtypEnv$ and written $\overline{\xname}:\overline{\type}$, are used to collect type assumptions for program variables (i.e., the formal parameters of the functions and the variables introduced by the $\repK$-construct). 
\emph{Local-type-scheme environments}, ranged over by $\LTStypEnv$ and written $\overline{\ofname}:\overline{\ltypescheme}$, are used to collect the local type schemes for the data constructors and built-in operators together with the local type schemes inferred for the user-defined functions. In particular, the distinguished  \emph{built-in local-type-scheme environment}  $\OStypEnv$ associates a local type scheme to each data constructor $\dc$ and built-in function $\oname$ --- Figure~\ref{fig:typeof-built-in} shows the local type schemes for the data constructors and built-in functions used throughout this paper. We distinguish the built-in functions in \emph{pure} (their evaluation only depends on arguments) and \emph{non-pure} (their evaluation can depend on the specific device and on its physical environment, like e.g. for sensors).

We use the convention that if $\oname$ is a built-in unary operator with local argument and return type, $\oname\texttt{[f]}$ denote the corresponding operator on neighbouring field values which apply $\oname$ pointwise to its argument: in other words, $\oname\texttt{[f]}(\fvalue)$, which is equivalent to $\texttt{map-hood}(\oname,\fvalue)$ (see Figure~\ref{fig:typeof-built-in}), at any device maps a neighbour $\deviceId$ to the result of applying $\oname$ to the value of $\fvalue$ at $\deviceId$. If $\oname$ is a multi-ary operator, a notation such as $\oname\texttt{[f,l]}$ or $\oname\texttt{[l,f,f]}$ is used to specify which parameters have to be promoted to neighbouring field values: for instance, at each device, \texttt{+[f,f]($\fvalue_1$,$\fvalue_2$)} gives a neighbouring field value mapping a neighbour $\deviceId$ to the sum of values of $\fvalue_1$ and $\fvalue_2$ at $\deviceId$.
Notice that the definition of built-in operator $\texttt{map-hood}$ is actually a schema defining such an operator for any positive ariety of its arguments $\overline{\ltvar}$.

%\subsection{Typing rules}\label{sec-typing-rules}

 \begin{figure}[t!]{
 \framebox[1\textwidth]{
 $\begin{array}{l}
%%%  TYPES
\textbf{Types:}\\
\begin{array}{rcl@{\hspace{7.5cm}}r}
\type & \BNFcce &  \tvar  \; \BNFmid \;  \ftype \; \BNFmid \;  \ltype        &   {\footnotesize \mbox{type}} \\
\ltype & \BNFcce &  \ltvar  \; \BNFmid \;  \builtintype  \; \BNFmid \;  (\overline{\type}) \rightarrow \rtype    &   {\footnotesize \mbox{local type}} \\
\rtype & \BNFcce &  \rtvar \; \BNFmid \;  \ftype \; \BNFmid \;  \stype  &   {\footnotesize \mbox{return type}} \\
\stype & \BNFcce &  \stvar  \; \BNFmid \;  \builtintype  \; \BNFmid \;  (\overline{\type}) \rightarrow \stype    &   {\footnotesize \mbox{local return type}} \\
\ftype & \BNFcce &  \ftypeOf{\stype}    &   {\footnotesize \mbox{field type}} \\
\end{array}\\
%%%  LOCAL TYPE SCHEMES
\textbf{Local type schemes:}\\
\begin{array}{rcl@{\hspace{8.5cm}}r}
\ltypescheme & \BNFcce &  \forall\overline{\tvar}\overline{\ltvar}\overline{\rtvar}\overline{\stvar}.\ltype      &   {\footnotesize \mbox{local type scheme}} \\
%
%\localenv & \BNFcce & \envS{\senstate}{\overline\vtree} &   {\footnotesize \mbox{local environment}} \\
\end{array}\\
\hline\\[-8pt]
%%%  TYPE RULES
\textbf{Expression typing:} 
  \hfill   \hfill \hfill   \hfill \hfill   \hfill \hfill   \hfill \hfill   \hfill \hfill   \hfill \hfill   \hfill \hfill   \hfill
 \hfill   \hfill \hfill   \hfill \hfill   \hfill \hfill   \hfill \hfill   \hfill \hfill   \hfill \hfill   \hfill \hfill   \hfill
\hfill   \hfill \hfill   \hfill \hfill   \hfill \hfill   \hfill \hfill   \hfill \hfill   \hfill \hfill   \hfill \hfill   \hfill
\hfill   \hfill \hfill   \hfill \hfill   \hfill \hfill   \hfill \hfill   \hfill \hfill   \hfill \hfill   \hfill \hfill   \hfill
  \hfill   \hfill \hfill   \hfill \hfill   \hfill \hfill   \hfill \hfill   \hfill \hfill   \hfill \hfill   \hfill \hfill   \hfill
 \hfill   \hfill \hfill   \hfill \hfill   \hfill \hfill   \hfill \hfill   \hfill \hfill   \hfill \hfill   \hfill \hfill   \hfill
\hfill   \hfill \hfill   \hfill \hfill   \hfill \hfill   \hfill \hfill   \hfill \hfill   \hfill \hfill   \hfill \hfill   \hfill
\hfill   \hfill \hfill   \hfill \hfill   \hfill \hfill   \hfill \hfill   \hfill \hfill   \hfill \hfill   \hfill \hfill   \hfill
  \hfill   \hfill \hfill   \hfill \hfill   \hfill \hfill   \hfill \hfill   \hfill \hfill   \hfill \hfill   \hfill \hfill   \hfill
 \hfill   \hfill \hfill   \hfill \hfill   \hfill \hfill   \hfill \hfill   \hfill \hfill   \hfill \hfill   \hfill \hfill   \hfill
\hfill   \hfill \hfill   \hfill \hfill   \hfill \hfill   \hfill \hfill   \hfill \hfill   \hfill \hfill   \hfill \hfill   \hfill
\hfill   \hfill \hfill   \hfill \hfill   \hfill \hfill   \hfill \hfill   \hfill \hfill   \hfill \hfill   \hfill \hfill   \hfill
  \hfill   \hfill \hfill   \hfill \hfill   \hfill \hfill   \hfill \hfill   \hfill \hfill   \hfill \hfill   \hfill \hfill   \hfill
 \hfill   \hfill \hfill   \hfill \hfill   \hfill \hfill   \hfill \hfill   \hfill \hfill   \hfill \hfill   \hfill \hfill   \hfill
\hfill   \hfill \hfill   \hfill \hfill   \hfill \hfill   \hfill \hfill   \hfill \hfill   \hfill \hfill   \hfill \hfill   \hfill
\hfill   \hfill \hfill   \hfill \hfill   \hfill \hfill   \hfill \hfill   \hfill \hfill   \hfill \hfill   \hfill \hfill   \hfill
  \boxed{\expTypJud{\LTStypEnv}{\TtypEnv}{\e}{\type}}
\vspace{0.1cm}
  \\
\begin{array}{c}
\nullsurfaceTyping{T-VAR}{
\expTypJud{\LTStypEnv}{\TtypEnv,\xname:\type}{\xname}{\type}
}
\quad\!
\surfaceTyping{T-VAL}{ \qquad
\funTypJud{\OStypEnv}{\dc}{(\overline{\ltype}) \rightarrow \stype} \qquad
\expTypJud{\LTStypEnv}{\TtypEnv}{\overline{\lvalue}}{\overline{\ltype}} }
{
\expTypJud{\LTStypEnv}{\TtypEnv}{\dcOf{\dc}{\overline{\lvalue}}}{\stype}
}
\skiptransition
\surfaceTyping{T-N-FUN}{  \quad \ltype' = \ltype \substitution{\overline{\tvar}}{\overline{\type}} \substitution{\overline{\ltvar}}{\overline{\ltype}} \substitution{\overline{\rtvar}}{\overline{\rtype}} \substitution{\overline{\stvar}}{\overline{\stype}}
}
{\expTypJud{\LTStypEnv,\ofname:\forall\overline{\tvar}\overline{\ltvar}\overline{\rtvar}\overline{\stvar}.\ltype}
                {\TtypEnv}
                {\ofname}
                {\ltype' }
 }
\skiptransition
\surfaceTyping{T-A-FUN}{ \qquad
\overline{\yname}=\FV{(\overline{\xname}) \toSymK \e} 
\quad
\TtypEnv(\overline{\yname}) \; \mbox{local types}
\quad
\expTypJud{\LTStypEnv}{\;\TtypEnv,\,\overline{\xname}:\overline{\type}}{\e}{\rtype}
}{ \expTypJud{\LTStypEnv}{\TtypEnv}{ (\overline{\xname}) \toSymK \e}{(\overline{\type})\rightarrow\rtype} }
\skiptransition
\surfaceTyping{T-APP}{ \qquad
\expTypJud{\LTStypEnv}{\TtypEnv}{\e}{(\overline{\type})\rightarrow\rtype} \qquad
\expTypJud{\LTStypEnv}{\TtypEnv}{\overline{\e}}{\overline{\type}} }{
\expTypJud{\LTStypEnv}{\TtypEnv}{\e(\overline{\e})}{\rtype} }
\skiptransition
\surfaceTyping{T-REP}{ \qquad
\expTypJud{\LTStypEnv}{\TtypEnv}{\e_1}{\stype}
\qquad \expTypJud{\LTStypEnv}{\TtypEnv,\xname:\ltype}{\e_2}{\stype} }{
\expTypJud{\LTStypEnv}{\TtypEnv}{\repK(\e_1)\{(\xname) \toSymK \e_2\}}{\stype} }
\qquad\qquad
\surfaceTyping{T-NBR}{ \qquad
\expTypJud{\LTStypEnv}{\TtypEnv}{\e}{\stype}
}{ \expTypJud{\LTStypEnv}{\TtypEnv}{\nbrK\{\e\}}{\ftypeOf{\stype}} }
\skiptransition
\end{array}
\\
\textbf{Function typing:} 
  \hfill   \hfill \hfill   \hfill \hfill   \hfill \hfill   \hfill \hfill   \hfill \hfill   \hfill \hfill   \hfill \hfill   \hfill
 \hfill   \hfill \hfill   \hfill \hfill   \hfill \hfill   \hfill \hfill   \hfill \hfill   \hfill \hfill   \hfill \hfill   \hfill
\hfill   \hfill \hfill   \hfill \hfill   \hfill \hfill   \hfill \hfill   \hfill \hfill   \hfill \hfill   \hfill \hfill   \hfill
\hfill   \hfill \hfill   \hfill \hfill   \hfill \hfill   \hfill \hfill   \hfill \hfill   \hfill \hfill   \hfill \hfill   \hfill
  \hfill   \hfill \hfill   \hfill \hfill   \hfill \hfill   \hfill \hfill   \hfill \hfill   \hfill \hfill   \hfill \hfill   \hfill
 \hfill   \hfill \hfill   \hfill \hfill   \hfill \hfill   \hfill \hfill   \hfill \hfill   \hfill \hfill   \hfill \hfill   \hfill
\hfill   \hfill \hfill   \hfill \hfill   \hfill \hfill   \hfill \hfill   \hfill \hfill   \hfill \hfill   \hfill \hfill   \hfill
\hfill   \hfill \hfill   \hfill \hfill   \hfill \hfill   \hfill \hfill   \hfill \hfill   \hfill \hfill   \hfill \hfill   \hfill
  \hfill   \hfill \hfill   \hfill \hfill   \hfill \hfill   \hfill \hfill   \hfill \hfill   \hfill \hfill   \hfill \hfill   \hfill
 \hfill   \hfill \hfill   \hfill \hfill   \hfill \hfill   \hfill \hfill   \hfill \hfill   \hfill \hfill   \hfill \hfill   \hfill
\hfill   \hfill \hfill   \hfill \hfill   \hfill \hfill   \hfill \hfill   \hfill \hfill   \hfill \hfill   \hfill \hfill   \hfill
\hfill   \hfill \hfill   \hfill \hfill   \hfill \hfill   \hfill \hfill   \hfill \hfill   \hfill \hfill   \hfill \hfill   \hfill
  \hfill   \hfill \hfill   \hfill \hfill   \hfill \hfill   \hfill \hfill   \hfill \hfill   \hfill \hfill   \hfill \hfill   \hfill
 \hfill   \hfill \hfill   \hfill \hfill   \hfill \hfill   \hfill \hfill   \hfill \hfill   \hfill \hfill   \hfill \hfill   \hfill
\hfill   \hfill \hfill   \hfill \hfill   \hfill \hfill   \hfill \hfill   \hfill \hfill   \hfill \hfill   \hfill \hfill   \hfill
\hfill   \hfill \hfill   \hfill \hfill   \hfill \hfill   \hfill \hfill   \hfill \hfill   \hfill \hfill   \hfill \hfill   \hfill
  \boxed{\funTypJud{\LTStypEnv}{\FUNCTION}{\ltypescheme}}
  \\
%\vspace{-0.5cm}
\begin{array}{c}
\surfaceTyping{T-FUNCTION}{
\qquad
\expTypJud{\LTStypEnv,\,\fname:\forall\emptyseq.\overline{\type}\rightarrow\rtype}{\overline{\xname}:\overline{\type}}{\e}{\rtype}
\qquad
\overline{\tvar}\overline{\ltvar}\overline{\rtvar}\overline{\stvar}=\FTV{(\overline{\type})\rightarrow\rtype}
}{ \funTypJud{\LTStypEnv}{\defK \; \fname (\overline{\xname}) \; \{\e\}}{\forall\overline{\tvar}\overline{\ltvar}\overline{\rtvar}\overline{\stvar}.(\overline{\type})\rightarrow\rtype}}
\skiptransition
\end{array}
\\
\textbf{Program typing:} 
  \hfill   \hfill \hfill   \hfill \hfill   \hfill \hfill   \hfill \hfill   \hfill \hfill   \hfill \hfill   \hfill \hfill   \hfill
 \hfill   \hfill \hfill   \hfill \hfill   \hfill \hfill   \hfill \hfill   \hfill \hfill   \hfill \hfill   \hfill \hfill   \hfill
\hfill   \hfill \hfill   \hfill \hfill   \hfill \hfill   \hfill \hfill   \hfill \hfill   \hfill \hfill   \hfill \hfill   \hfill
\hfill   \hfill \hfill   \hfill \hfill   \hfill \hfill   \hfill \hfill   \hfill \hfill   \hfill \hfill   \hfill \hfill   \hfill
  \hfill   \hfill \hfill   \hfill \hfill   \hfill \hfill   \hfill \hfill   \hfill \hfill   \hfill \hfill   \hfill \hfill   \hfill
 \hfill   \hfill \hfill   \hfill \hfill   \hfill \hfill   \hfill \hfill   \hfill \hfill   \hfill \hfill   \hfill \hfill   \hfill
\hfill   \hfill \hfill   \hfill \hfill   \hfill \hfill   \hfill \hfill   \hfill \hfill   \hfill \hfill   \hfill \hfill   \hfill
\hfill   \hfill \hfill   \hfill \hfill   \hfill \hfill   \hfill \hfill   \hfill \hfill   \hfill \hfill   \hfill \hfill   \hfill
  \hfill   \hfill \hfill   \hfill \hfill   \hfill \hfill   \hfill \hfill   \hfill \hfill   \hfill \hfill   \hfill \hfill   \hfill
 \hfill   \hfill \hfill   \hfill \hfill   \hfill \hfill   \hfill \hfill   \hfill \hfill   \hfill \hfill   \hfill \hfill   \hfill
\hfill   \hfill \hfill   \hfill \hfill   \hfill \hfill   \hfill \hfill   \hfill \hfill   \hfill \hfill   \hfill \hfill   \hfill
\hfill   \hfill \hfill   \hfill \hfill   \hfill \hfill   \hfill \hfill   \hfill \hfill   \hfill \hfill   \hfill \hfill   \hfill
  \hfill   \hfill \hfill   \hfill \hfill   \hfill \hfill   \hfill \hfill   \hfill \hfill   \hfill \hfill   \hfill \hfill   \hfill
 \hfill   \hfill \hfill   \hfill \hfill   \hfill \hfill   \hfill \hfill   \hfill \hfill   \hfill \hfill   \hfill \hfill   \hfill
\hfill   \hfill \hfill   \hfill \hfill   \hfill \hfill   \hfill \hfill   \hfill \hfill   \hfill \hfill   \hfill \hfill   \hfill
\hfill   \hfill \hfill   \hfill \hfill   \hfill \hfill   \hfill \hfill   \hfill \hfill   \hfill \hfill   \hfill \hfill   \hfill
  \boxed{\proTypJud{\PROGRAM}{\ltype}}
  \\
%\vspace{-0.5cm}
\begin{array}{c}
\surfaceTyping{T-PROGRAM}{
\\
\LTStypEnv_0=\OStypEnv
\\
\FUNCTION_i=(\defK \; \fname_i (\_) \; \_)
\qquad
\funTypJud{\LTStypEnv_{i-1}}{\FUNCTION_i}{\ltypescheme_i}
\qquad
\LTStypEnv_i=\LTStypEnv_{i-1},\, \fname_i:\ltypescheme_i
\qquad
 (i \in 1..n)
\\
\expTypJud{\LTStypEnv_n}{\emptyset}{\e}{\ltype}
}{ \proTypJud{\FUNCTION_1\cdots\FUNCTION_n  \;
\e}{\ltype}}
\end{array}
\end{array}$}
} \caption{\HFC: types, local type schemes and type rules for expressions, function
declarations, and programs.} \label{fig:SurfaceTyping}
\end{figure}

\begin{figure}[t]{
\centerline{\framebox[\textwidth]{ $\begin{array}{l}
%%%  PURE BUILT-IN
\textbf{Built-in constructors:}\\
\begin{array}{lcl@{\hspace{0.2cm}}r}
\typeof{\truevalue} & = & ()\to\btype
 \\
\typeof{\falsevalue} & = & ()\to\btype
 \\
\typeof{\zerovalue} & = & ()\to\ntype
 \\
\typeof{\PairK} & = & \forall\stvar_1\stvar_2.(\stvar_1,\stvar_2)\rightarrow \pairltypeOf{\stvar_1}{\stvar_2}  
 \\
\typeof{\NullK} & = & \forall\stvar. ()\to \listltypeOf{\stvar}
 \\
\typeof{\ConsK} & = & \forall\stvar. (\stvar,\listltypeOf{\stvar})\to \listltypeOf{\stvar}
 \\
 \end{array}
\\
\textbf{Pure built-in functions  (independent from  the current device and value-tree environment):}\\
\begin{array}{lcl@{\hspace{0.2cm}}r}
\typeof{\pairK\texttt{[f,f]}} & = & \forall\stvar_1\stvar_2.(\ftypeOf{\stvar_1},\ftypeOf{\stvar_2})\rightarrow\ftypeOf{\pairltypeOf{\stvar_1}{\stvar_2}}  
 \\
\typeof{\pairK\texttt{[l,f]}} & = & \forall\stvar_1\stvar_2.(\stvar_1,\ftypeOf{\stvar_2})\rightarrow\ftypeOf{\pairltypeOf{\stvar_1}{\stvar_2}}
 \\
\typeof{\fstK} & = & \forall\stvar_1\stvar_2.(\pairltypeOf{\stvar_1}{\stvar_2})\rightarrow\stvar_1  
 \\
\typeof{\sndK} & = & \forall\stvar_1\stvar_2.(\pairltypeOf{\stvar_1}{\stvar_2})\rightarrow\stvar_2  
 \\
\typeof{\headK} & = & \forall\stvar.(\listltypeOf{\stvar}) \rightarrow \stvar
 \\
\typeof{\tailK} & = & \forall\stvar.(\listltypeOf{\stvar}) \rightarrow \listltypeOf{\stvar}
 \\
\typeof{\minHoodK} & = & \forall\stvar.(\ftypeOf{\stvar})\to\stvar
 \\
\typeof{\texttt{min-hood+}} & = & \forall\stvar.(\ftypeOf{\stvar})\to\stvar
 \\
\typeof{\texttt{pick-hood}} & = & \forall\stvar.(\ftypeOf{\stvar})\to\stvar
 \\
\typeof{\texttt{map-hood}} & = & \forall \overline{\stvar}. \forall \stvar' ((\overline{\stvar}) \to \stvar', \ftypeOf{\overline{\stvar}}) \to \ftypeOf{\stvar'}
 \\
\typeof{\texttt{fold-hood}} & = & \forall\stvar.((\stvar, \stvar) \to \stvar, \ftypeOf{\stvar}) \to \stvar
\\
\typeof{\texttt{mux}} & = & \forall\stvar.(\btype,\stvar,\stvar)\to\stvar  
% \\
%\typeof{\texttt{mux[f,l,l]}} & = & \forall\stvar.(\ftypeOf{\btype},\stvar,\stvar)\to\ftypeOf{\stvar}
 \\
\typeof{\texttt{mux[f,f,l]}} & = & \forall\stvar.(\ftypeOf{\btype},\ftypeOf{\stvar},\stvar)\to\ftypeOf{\stvar}
\\
\typeof{\texttt{and}} & = & (\btype,\btype)\to\btype
\\
\typeof{\texttt{*}} & = & (\ntype,\ntype)\to\ntype
\\
\typeof{\texttt{-}} & = & (\ntype,\ntype)\to\ntype
\\
\typeof{\texttt{+}} & = & (\ntype,\ntype)\to\ntype
 \\
\typeof{\texttt{+[f,f]}} & = & (\ftypeOf{\ntype},\ftypeOf{\ntype})\to\ftypeOf{\ntype}
 \\
\typeof{\texttt{<[f,l]}} & = & \forall\stvar.(\ftypeOf{\stvar},\stvar)\to\ftypeOf{\btype}
 \\
\typeof{\texttt{=}} & = & \forall\tvar.(\tvar,\tvar) \to \btype
\\
 \end{array}
\\
%%%  NON-PURE BUILT-IN
\textbf{Non-pure built-in functions (depend from  the current device and value-tree environment):}\\
\begin{array}{lcl@{\hspace{0.2cm}}r}
\typeof{\texttt{sns-range}} & = & ()\to\ntype
\\
\typeof{\texttt{sns-injection-point}} & = & ()\to\btype
\\
\typeof{\texttt{sns-injected-function}} & = & ()\to(()\to\ntype)
 \\
\typeof{\texttt{nbr-range}} & = & ()\to\ftypeOf{\ntype}
 \\
\typeof{\selfK} & = & () \to \ntype  
 \\
 \end{array}
\end{array}
 $}}}
\caption{Local type schemes for the built-in functions used throughout this paper.}
\label{fig:typeof-built-in}
\end{figure}

The type rules are given in Figure~\ref{fig:SurfaceTyping} (bottom).
The typing judgement for expressions is of the form ``$\expTypJud{\LTStypEnv}{\TtypEnv}{\e}{\type}$'', to be read: ``$\e$ has type $\type$ under the local-type-scheme assumptions 
$\LTStypEnv$ (for data constructors, built-in operators and user-defined functions) and the type assumptions
$\TtypEnv$ (for the program variables occurring in $\e$)". %, respectively''.
As a standard syntax in type systems~\cite{FJ}, given
$\overline{\type}=\type_1,\ldots,\type_n$ and
$\overline{\e}=\e_1,\ldots,\e_n$ ($n\ge 0$), we write
$\expTypJud{\LTStypEnv}{\TtypEnv}{\overline{\e}}{\overline{\type}}$
as short for $\expTypJud{\LTStypEnv}{\TtypEnv}{\e_1}{\type_1}$
$\cdots$ $\expTypJud{\LTStypEnv}{\TtypEnv}{\e_n}{\type_n}$.
Note that the type rules are syntax-directed, so they straightforwardly describe a type inference algorithm.

Rule \ruleNameSize{[T-VAR]} (for  variables) lookups the type assumptions for $\xname$ in $\TtypEnv$. 

Rule \ruleNameSize{[T-VAL]} (for values) lookups the specifications of the data constructor $\dc$ in $\typeofNAME$ and checks that are met by its arguments, which need to be \emph{values}. This request allows us to recognize whether a well-typed expression is a value by only checking its outermost syntactic block, which is convenient for later proofs. For convenience of presentation, in the following we shall assume that every constructor $\dc : (\overline{\ltype}) \to \stype$ comes with an associated function $\dc'$ defined as $(\overline{\xname}) \; \toSymK \; \dcOf{\dc}{\overline{\xname}}$. Notice that specifications for constructors are only allowed to involve local types.

%Rule \ruleNameSize{[T-VAL]} (for values) exploits the auxiliary function $\typeOfValueNAME$, defined in Figure~\ref{fig:SurfaceTyping} (middle).\footnote{Field values may not occur in source programs, and  there is no need to use pair values  in source programs (a pair value $\pairValueOf{\lvalue_1}{\lvalue_2}$ can be created by the expression $(\pairK\, \lvalue_1\, \lvalue_2)$). However, typing field  values and pair values is required by the statement of Theorem~\ref{the-DeviceTypePreservationAndDomainAlignment} (in Appendix~\ref{sec-properties}).}

Rule \ruleNameSize{[T-N-FUN]} (for built-in functions and user-defined function names) ensures that the local type scheme $\forall\overline{\tvar}\overline{\ltvar}\overline{\rtvar}\overline{\stvar}.\ltype$ associated to the built-in function or user-defined function name $\ofname$ is instantiated by substituting the type variables $\overline{\tvar}$, $\overline{\ltvar}$, $\overline{\rtvar}$, $\overline{\stvar}$ correctly with types in $\type$, $\ltype$, $\rtype$, $\stype$ respectively.

Rule \ruleNameSize{[T-A-FUN]} (for anonymous functions) ensures that anonymous functions  $(\overline{\xname}) \toSymK \e$ have return type in $\rtype$ and do not contain free variables of field type in order to avoid domain alignment errors (as exemplified in the next subsection).

Rule \ruleNameSize{[T-APP]} (for function applications) is standard.

Rule \ruleNameSize{[T-REP]} (for $\repK$-expressions) ensures that both the variable $\xname$, its initial value $\e_1$ and the body $\e_2$ have (the same) local return type. In fact, allowing field types might produce domain mismatches, while a $\repK$-expression of type $(\overline{\type})\to\ftype$ would be a non-constant (thus not safely applicable) function returning neighbouring field values.

Rule \ruleNameSize{[T-NBR]} (for $\nbrK$-expressions) ensures that the body $\e$ of the expression has a local return type. This prevents the attempt to create a ``field of fields'' (i.e., a neighbouring field value that maps device identities to neighbouring field values). 
%We remark that neighbouring field values $\phi$ of functions returning neighbouring field values could be safely allowed, however, there would be no type-safe way to produce or use a non-constant such $\phi$.

Function declaration typing (represented by judgement ``$\funTypJud{\LTStypEnv}{\FUNCTION}{\ltypescheme}$'') and program typing (represented by judgement ``$\proTypJud{\PROGRAM}{\ltype}$'') are almost standard.

%\footnote{To simplify presentation, we have not considered the issue of typing mutually recursive user-defined functions (which are orthogonal to the notion of local type) or recursion in anonymous functions. If desired, both of these can be accomplished in a straightforward manner by exploiting standard techniques.}

We say that a program $\PROGRAM$  is \emph{well-typed} to mean that $\proTypJud{\PROGRAM}{\ltype}$ holds for some local type $\ltype$.

\begin{remark}[On Termination] \label{rmk:termination}
Termination of a device firing is clearly not decidable. In the rest of the paper we assume without loss of generality for the results of this paper that a decidable subset of the termination fragment has been identified  by applying some static analysis technique for termination.
\end{remark}

\subsection{Examples}

Though the type system mostly trivially assigns a local or field type to an expression, it enforces some peculiar restrictions that we now clarify with the help of few examples. 
In the code to come, syntax is coloured to increase readability: grey for comments, red for field calculus keywords, and blue
for functions (both user defined and built in).
In the first-order type system presented in \cite{VDB-FOCLASA-CIC2013} one peculiar check was introduced in the type system: a conditional \texttt{if} expression could not have field type. In fact, such an expression produces a field combining two subfields where values, which are neighbouring field values, are restricted in different ways, each to the neighbours that evaluated the conditional guard in the same way. Such a combination, hence, is shown to contradict domain alignment as described by the following example---in the present language conditional branching is modeled by the branching implicit in function application (see Remark~\ref{rmk:conditional}), thus similar issues apply to functions returning neighbouring field values. 
Consider the expression $\e_{\mathtt{wrong}}$:
\begin{Verbatim}[%fontsize=\footnotesize,
	                  frame=single,
	                  commandchars=\\\{\}]
(\km{if} (\pr{uid}=1) \{(x)=>x\} \km{else} \{(x)=>x \pr{+[f,f]} \km{nbr}\{\pr{uid}\}\} )(\km{nbr}\{\pr{0}\}) \pr{+[f,f]} \km{nbr}\{\pr{uid}\}
\end{Verbatim}
This expression violates domain alignment, thus provoking conflicts between field domains. When the conditional expression is evaluated on a device with \texttt{uid} (unique identifier) equals to $1$, function $\funvalue = (\xname) \toSymK \xname$ is obtained whence applied to $\nbrK\{0\}$ \emph{in the restricted domain of devices who computed the same $\funvalue$ in their last evaluation round}, that is the domain $\{1\}$. Thus at device $1$ the function application returns the neighbouring field value $\fvalue = \envmap{1}{0}$ which cannot be combined with $\nbrK\{\selfK\}$ whose (larger) domain consists of \emph{all} neighbours of device $1$.  
A complementary violation occurs for all neighbors of device $1$, which compute a neighbouring field value whose domain lacks device $1$.

However, not all expressions involving functions returning fields are unsafe. For instance, consider the similar expression $\e_{\mathtt{safe}}$:
\begin{Verbatim}[%fontsize=\footnotesize,
	                  frame=single,
	                  commandchars=\\\{\}]
((x)=>x)(\km{nbr}\{\pr{0}\}) \pr{+[f,f]} \km{nbr}\{\pr{uid}\}
\end{Verbatim}
In this case, on every device the same function $\funvalue = (\xname) \toSymK \xname$ is obtained whence applied to $\nbrK\{0\}$, that is, no alignment is required thus the function application returns the whole neighbouring field value $\fvalue = \nbrK\{0\}$ which can be safely combined with $\nbrK\{\selfK\}$. This suggest that functions returning neighbouring field values are safe \emph{as long as they evaluate to the same function regardless of the device and surrounding environment}.

The type system presented in the previous Section \ref{sec-typing} ensures %is mainly built with the aim of ensuring 
 this distinction. Besides performing standard checks, the type system perform the following additional checks in order to ensure domain alignment:
\begin{itemize}
	\item
	\emph{Functions returning neighbouring field values are not allowed as return type}. That is, all functions (both user defined and built-ins, and as a consequence also $\repK$ statements) don't return a ``function returning neighbouring field values''. This prevents the possibility of having a well-typed expression $\e$ which evaluates to different functions returning neighbouring field values on different devices, thus allowing undesired behaviours such as in the example described above. In fact, if we expand the conditional in $\e_{\mathtt{wrong}}$ according to Remark \ref{rmk:conditional}, we obtain\\
	
	\begin{Verbatim}[%fontsize=\footnotesize,
	                  frame=single,
	                  commandchars=\\\{\}]
\pr{mux}(\pr{uid}=1, ()=>\pr{snd}(\pr{Pair}(\pr{True},(x)=>x)),
           ()=>\pr{snd}(\pr{Pair}(\pr{False},(x)=>x\pr{+[f,f]}\km{nbr}\{\pr{uid}\})))(...)
	\end{Verbatim}
	in which both the entire \texttt{mux} expression and both of its two branches have the disallowed type $()\to \ftypeOf{\ntype} \to \ftypeOf{\ntype}$.
	
	\item
	\emph{In an anonymous function $(\overline{\xname}) \toSymK \e$, the free variables $\overline{\yname}$ of $\e$ that are not in $\overline{\xname}$ have local type}. This prevents a device $\deviceId$ from creating a closure $\e'=\applySubstitution{(\overline{\xname}) \toSymK \e}{\substitution{\overline{\yname}}{\overline{\fvalue}}}$ containing neighbouring field values $\overline{\fvalue}$ (whose domain is by construction equal to the subset of the aligned  neighbours of  $\deviceId$). The closure $\e'$ may lead to a domain alignment error since it  may be shipped (via the $\nbrK$  construct) to another device $\deviceId'$ that may use it (i.e., apply $\e'$  to some arguments); and the evaluation of the body of $\e'$ may involve use of a neighbouring field value $\fvalue$ in $\overline{\fvalue}$  such that the set of aligned neighbours of $\deviceId'$ is different from the domain of $\fvalue$.
	For instance, the expression $\e'_{\mathtt{wrong}}$:\\ 
	\begin{Verbatim}[%fontsize=\footnotesize,
	                  frame=single,
	                  commandchars=\\\{\}]
((x) => \pr{pick-hood}(\km{nbr}\{() => \pr{min-hood}(x \pr{+[f,f]} \km{nbr}\{\pr{0}\})\}))(\km{nbr}\{\pr{0}\})()
	\end{Verbatim}
	(where \texttt{pick-hood} is a built-in function that returns the value of  a randomly chosen device among  a device neighbours) that should have type $\ntype$, is ill-typed. Its body will fail to type-check since it contains the function $() \toSymK \minHoodK(\xname + \nbrK\{0\})$ with free variable $\xname$ of field type. This prevents conflicts between field domains since:
	\begin{itemize}
		\item
		when the expression is evaluated on a device $\deviceId$, the closure
		\[
		\lvalue=\applySubstitution{\texttt{() => \pr{min-hood}(x \pr{+[f,f]} \km{nbr}\{\pr{0}\})}}{\substitution{\xname}{\phi_2}}
		\]
		where $\phi_2$ is the neighbouring field value produced by the evaluation of \texttt{nbr\{0\}} on $\deviceId$ (whose domain consists of the aligned neighbours of the device $\deviceId$---i.e., the neighbours that have evaluated a corresponding occurrence of $\e'_{\mathtt{wrong}}$ in their last evaluation round), will be made available to other devices; and 
		
		\item
		when the expression is evaluated on a device $\deviceId'$ that has $\deviceId$ as neighbour  and the evaluation of the application of \texttt{pick-hood} returns the closure $\lvalue$ received from $\deviceId$; then the neighbouring field value $\phi'_1$ produced by the evaluation of the subexpression \texttt{nbr\{0\}} of $\lvalue$ on $\deviceId'$ would contain the aligned neighbours of the device $\deviceId'$ (i.e., the neighbours that have evaluated a corresponding occurrence of $\e'_{\mathtt{wrong}}$ in their last evaluation round) hence may have a domain different from the domain of $\phi_2$, leaving the neighbouring field values mismatched in domain at the evaluation of the sum occurring in $\lvalue$ on $\deviceId'$.
	\end{itemize}
	
	\item
	\emph{In  a $\repK$-expression  $\repK(\e_1)\{(\xname) \toSymK \e_2\}$ it holds that $\xname$, $\e_1$ and $\e_2$ have (the same) local return type}. This prevents a device $\deviceId$  from storing in $\xname$ a neighbouring field value $\fvalue$ that may be reused in the next computation round of $\deviceId$, when the set of the set of aligned neighbours may be different from the domain of $\fvalue$.
	For instance, the expression $\e''_{\mathtt{wrong}}$:\\
	\begin{Verbatim}[%fontsize=\footnotesize,
	                  frame=single,
	                  commandchars=\\\{\}]
\pr{min-hood}(\km{rep}(\km{nbr}\{0\})\{(x) => x \pr{+[f,f]} \km{nbr}\{\pr{uid}\}\})
	\end{Verbatim}
	that should have type $\ntype$, is ill-typed.
		
	\item
	\emph{In a $\nbrK$-expression $\nbrK\{\e\}$ the expression $\e$ has local type}. This prevents the attempt to create a ``field of fields'' (i.e., a neighbouring field value that maps device identifiers to neighbouring field values)---which is pragmatically often overly costly to maintain and communicate, as well as further complicating the issues involved in ensuring domain alignment.
\end{itemize}

\section{Denotational Semantics} \label{sec-denotational-semantics}

We now introduce a denotational semantics for the field calculus. 
In this semantics, we are posed with an additional challenge with respect to the denotational semantics for lambda calculus or ML-like programming languages (see among many \cite{Ohori:Polymorphism,Smolka:SortedUnification,Winskel:FormalSemantics}): 
several devices and firing events are involved in the computation, possibly influencing each other's outcomes. Even though this scenario seems similar to classical concurrent programming (see e.g. \cite{Baeten:ProcessAlgebras,Bakker:DenotationalConcurrency}), it cannot be treated using the same tools because of two crucial differences:
\begin{itemize}
	\item communication within a computational round is strongly connected with the underlying space-time properties of the physical world in which devices are located;
	\item the whole outcome of the computation is not a single value, but a \emph{field} of spatially and temporally distributed values.
\end{itemize}
In order to reflect these characteristics of field calculus, the denotational semantics of an expression is given in terms of the resulting space-time field, 
formalised as a partial mapping from the ``evolving'' domain (the set of participating devices may change over time) to values. 
The domain is defined as a set of (firing) events, each carrying a node identifier and equipped with a neighbour relationship modelling causality, i.e., reachability of communications. 
Values are defined analogously to ML-like languages, with the addition of fields and formulating functions as mathematical operators on computational fields instead of single values 
(which is necessary because the computation of a function might involve state communication among devices or events).

This semantics is nicely compositional, allowing one to formalise each construct separately. 
Furthermore, it represents the global result of computation as a single ``space-time object'', thus giving a perspective that is more proper for the designer of a computation and more convenient in proving certain properties of the calculus---the operational semantics (Section \ref{sec-calculus-operational-semantics}) is instead more useful in designing a platform for the equivalent distributed execution of field computations on actual devices.

\subsection{Preliminary definitions}

Recall that we let $\deviceIdSet$ be the set of \emph{devices}, ranged over by meta-variable $\deviceId$; we now also let $\EventS$ be the set of \emph{events}, ranged over by meta-variable $\eventId$. An event models a firing in a network, and is labeled by a device identifier $\devF{\eventId}$. We use $\eventS$ to range over subsets of $\EventS$.
 
We model the neighbour relationship as a global-level, fixed predicate $\neighbour{\eventId}{\eventId'}$ which holds if the device at $\eventId$ is aware of the result of computation at $\eventId'$. This relationship is based on the topology and time evolution of involved devices, hence we require that $\neighbour{\eventId}{\eventId'}$ satisfies the following properties:
 \begin{enumerate}
 	\item the graph on $\EventS$ induced by $\textit{neigh}$ is a DAG (directed acyclic graph);
 	\item every $\eventId'$ linked from $\eventId$ has a different label $\devF{\eventId'}$, that is, there exists no $\eventId'$, $\eventId''$ such that $\devF{\eventId'} = \devF{\eventId''}$ and both $\neighbour{\eventId}{\eventId'}$ and $\neighbour{\eventId}{\eventId''}$ hold;
 	\item every $\eventId'$ is linked from at most one $\eventId$ with the same label $\devF{\eventId} = \devF{\eventId'}$, that is, there exists no $\eventId_0$, $\eventId_1$ such that $\devF{\eventId_0} = \devF{\eventId_1} = \devF{\eventId'}$ and both $\neighbour{\eventId_0}{\eventId'}$ and $\neighbour{\eventId_1}{\eventId'}$ hold;
 \end{enumerate}
Property 1 ensures that $\textit{neigh}$ is causality-driven, property 2 that the neighbours of an event $\eventId$ are indexed by devices, and property 3 that restricting $\textit{neigh}$ to a single device we obtain a set of directed paths (thus modeling that the firing of a device is aware of either its immediate predecessor on the same device or nothing).

\begin{figure}[t]
\centering
\includegraphics[width=0.4\textwidth]{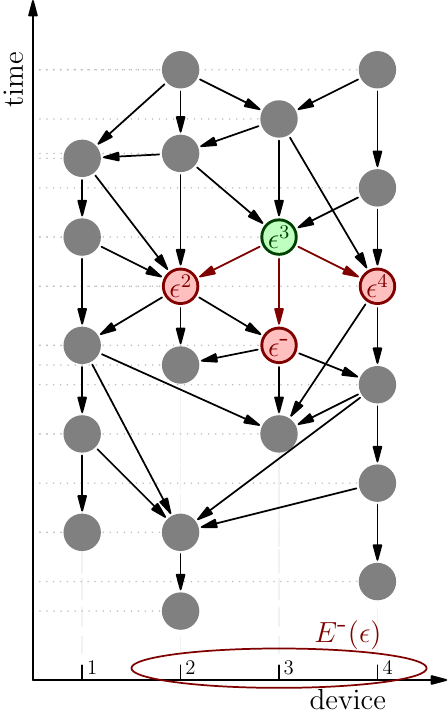}
\caption{A sample neighbourhood graph on four asynchronously firing devices.  The neighbours of the green event $\eventId = \eventId^3$ are shown in red, forming a neighbourhood over devices 2, 3, and 4.}\label{f:neigh}
\end{figure}

Figure~\ref{f:neigh} shows a sample neighbourhood graph involving four devices, each firing from four to six times. 
Notice that device 2 is restarted after its second firing, and that the set of neighbouring devices changes over time for each device, in particular, device 4 drops its connection with device 2 from its fourth firing on. This can be explained by assuming devices to be moving in space---though movement is not represented in Figure \ref{f:neigh}. Nonetheless, the depicted graph satisfies all of the three above mentioned properties.
 
For all $\eventId$ and $\deviceId$, we define $\nbrdevice{\eventId}{\deviceId}$ as the latest event at $\deviceId$ that $\eventId$ can be aware of, namely the one satisfying $\neighbour{\eventId}{\nbrdevice{\eventId}{\deviceId}}$ if $\deviceId \neq \deviceId_\eventId$ or $\eventId$ itself in case $\deviceId = \deviceId_\eventId$. Notice that if $\nbrdevice{\eventId}{\deviceId}$ exists, it is unique by property 2. 
We use $\repdevice{\eventId}$ to denote the previous event of $\eventId$ at the same device \emph{if it exists}.
These notations are exemplified in the picture above, where $\eventId = \eventId^3$.
We also define $\predevices{\eventS}(\eventId)$ where $\eventS \subseteq \EventS$ as the neighbourhood of $\eventS$, namely, the set of devices $\deviceId$ such that $\nbrdevice{\eventId}{\deviceId}$ exists in $\eventS$. For example, $\predevices{\eventS}(\eventId) = \{2,3,4\}$ for the green event $\eventId$ in the picture above.

We chose such a generic approach to model neighbouring in order to abstract from the particular conditions and implementations that might occur in practice when an execution platform has to handle device to device communication. 

\begin{example}[Unit-Disc Communication] \label{ex:denotational}
A typical scenario for the computations we aim at modelling and designing is that of a mobile set of wirelessly communicating devices, such that the neighbourhood relationship depends primarily on physical position---e.g., devices within a certain range can communicate. 
In this case, the predicate $\neighbour{\eventId}{\eventId'}$ could be defined from a set of \emph{paths} $\PathS$ for moving devices, labels $\timeF{\eventId}$ modelling passage of time, a timeout value $\decay$, and a predicate $\neighbour{\posS}{\posS'}$ between positions, where:
\begin{itemize}
 \item $\neighbour{\posS}{\posS'}$ is a global-level, fixed reflexive and symmetric predicate which holds if the two devices at positions $\posS$ and $\posS'$ are neighbours.
 \item $\PathS$ is a mapping from device identifiers $\deviceId$ to space-time paths $\pathS$. A path $\pathS$ is a continuous function from $\mathbb{R}^+$ (times) to the set of possible positions, defined on the union of a finite number of disjoint closed intervals (the time intervals in which the device is turned on).\footnote{We remark that this definition for paths allows (in an extension of the present language) consideration of devices in which some stored values are preserved while turned off.}
 \item $\decay \in \mathbb{R}^+$ models a timeout expiration after which non-communicating devices are considered ``removed,'' 
 allowing adaptation of the network to device removal and topology changes.
 \item $\neighbour{\eventId}{\eventId'}$ holds if and only if:
 \begin{enumerate}
 	\item $\timeF{\eventId'} \in [\timeF{\eventId}-\decay,\timeF{\eventId})$ (i.e. $\eventId'$ happened in the time interval of size $\decay$ before $\eventId$);
 	\item $\PathS(\devF{\eventId})$ is defined in the interval $[\timeF{\eventId'}, \timeF{\eventId}]$ (i.e. $\devF{\eventId}$ was constantly turned on during the time between events $\eventId'$ and $\eventId$);
 	\item $\neighbour{\PathS(\devF{\eventId})(\timeF{\eventId'})}{\PathS(\devF{\eventId'})(\timeF{\eventId'})}$ holds (i.e. the two devices where neighbours when $\eventId'$ happened);
 	\item there exists no further event $\eventId''$ with $\devF{\eventId''} = \devF{\eventId'}$ and $\timeF{\eventId''} > \timeF{\eventId'}$ satisfying the above conditions (i.e. $\eventId'$ is the last firing of $\devF{\eventId'}$ recorded by $\devF{\eventId}$ before $\eventId$).
 \end{enumerate} 
\end{itemize}
\end{example}

\subsection{Denotational semantics of types}

A necessary preliminary step in the definition of denotational semantics for the field calculus is to clarify the denotation of types.
As usual, the denotation of a type gives a set over which the denotation of expressions that are given that type range---denotation of expressions will be presented in next section.
The denotational semantics of a type is given by two intertwined functions: a function $\denotval{\cdot}$ mapping a type $\type$ without type variables to a set of local value denotations (i.e. values at individual devices), and a function $\denottype{\cdot}$ mapping $\type$ to a set of \emph{field evolutions}, ranged over by meta-variable $\dvalue$, assigning local values to every device in each firing event.

If $\builtintype$ is a built-in local type, we assume that $\denotval{\builtintype}$ is given. For derived types, $\denotval{\cdot}$ and $\denottype{\cdot}$ are altogether defined by rules:
\[\begin{array}{rcl}
\denottype{\type} & = & \EventS \pto \denotval{\type} \\
\denotval{\ftypeOf{\ltype}} & = & \deviceIdSet \pto \denotval{\ltype}\\
\denotval{(\type_1,\ldots,\type_n)\rightarrow\type} & = & \setFS \times (\denottype{\type_1}\times\ldots\times\denottype{\type_n}) \rightarrow \denottype{\type}
\end{array}\]
where $\EventS \pto \denotval{\type}$ (resp. $\deviceIdSet \pto \denotval{\type}$) is the set of partial functions from $\EventS$ (resp. $\deviceIdSet$) to $\denotval{\type}$, and $\setFS$ is a set of function tags uniquely characterizing each function.

The denotation of a type $\type$ is a set of field evolutions, that is, partial maps from events $\EventS$ to local value denotations $\denotval{\type}$. This reflects the fact that an expression $\e$ evaluates to (possibly different) local values in each device and event of the computation.

The local value denotation of a field type $\ftypeOf{\ltype}$ is the set of partial functions from devices to local value denotations, which are intended to map a neighbourhood (or an ``aligned subset'' of it: in both cases, a subset of $\deviceIdSet$) to local value denotations of the corresponding local type.

The denotation of a function $(\type_1,\ldots,\type_n)\rightarrow\type$ is instead a set of pairs with the following two components:

\begin{itemize}
  \item The function tag in $\setFS$ (e.g. a syntactic function value as in Figure \ref{fig:values}), needed in order to reflect the choice to compare functions by \emph{syntactic} equality instead of \emph{semantic} equality, which would not allow a computable operational semantics (see Remark~\ref{rmk:equality}). In fact, the presence of such tags is used to grant that two differently specified but identically behaving functions $\funvalue, \funvalue'$ get distinct denotations.

  \item A mapping from input field evolutions in $\denottype{\type_1}, \ldots \denottype{\type_n}$ to an output field evolution in $\denottype{\type}$.
\end{itemize}

The local execution environment under which the computation of the function is assumed to happen is implicitly determined as the (common) domain of its input field evolutions; and the same domain will be retained for the output. 
This environment can influence the outcome of the computation through $\repK$ and $\nbrK$ statements and through non-pure built-in functions.
  
Since local denotational values are not connected to specific events or domains, the common domain of the input field evolutions can be any subset of $\EventS$. In particular, this fact implies that a field evolution $\dvalue$ of function type and domain $\eventS$ is built of functions $\sndK(\dvalue(\eventId))$ which can take arguments of arbitrary domain, \emph{including domains $\eventS' \nsubseteq \eventS$}. Notice that this property grants that a local denotational function value can be meaningfully moved around devices (through operators $\nbrK$, $\repK$).
  
Notice that the definition of $\denotval{(\type_1,\ldots,\type_n)\rightarrow\type}$ by means of a function on whole field evolutions instead of local denotational values is required by the nature of the basic blocks of the language ($\nbrK$, $\repK$), which cannot be computed pointwise event by event. We also remark that the denotation of a function type consists of \emph{total} functions: this reflects the assumption that every function call is guaranteed to terminate (see Remark~\ref{rmk:termination}).

In the remainder of this paper, we use $\denotf{\xname \in D}{\funvalue}$ to denote the mathematical function with domain $D$ assigning each $\xname \in D$ to the corresponding value of expression $\funvalue$. We use $\proj{\dvalue}{\eventS}$ for the restriction of the field evolution $\dvalue$ to $\eventS$, defined by $\denotf{\eventId \in \eventS}{} \dvalue(\eventId)$ for denotational values $\dvalue$ of local type and by $\denotf{\eventId \in \eventS}{} \proj{\dvalue(\eventId)}{\predevices{\eventS}(\eventId)}$ for denotational values of field type. Whenever a sequence of field evolutions $\overline\dvalue$ is assumed to share a common domain, we use $\domof{\overline\dvalue}$ with abuse of notation to denote their common domain.

Notice that we have not given an interpretation for parametric types containing free type variables $\overline{\tvar}, \overline{\ltvar}, \overline{\rtvar}, \overline{\stvar}$, even though these types are contemplated in the present system. Since the denotational semantics of parametric types is a well-understood topic (see e.g. \cite{Ohori:Polymorphism}) and it is entirely orthogonal to the core semantics of field computations, we prefer for sake of clarity to give the definitions only for monomorphic types---extending those definition to polymorphic types would lead to a much heavier, though straightforward extension.\footnote{In \cite{Ohori:Polymorphism} this is achieved by setting $\denottype{\forall\overline{\tvar}\overline{\ltvar}\overline{\rtvar}\overline{\stvar}.\type} = \prod_{\overline\tvar \in \mathbf{T}} \prod_{\overline\ltvar \in \mathbf{L}} \prod_{\overline\rtvar \in \mathbf{R}} \prod_{\overline\ltvar \in \mathbf{S}} \denottype{\type}$ where $\mathbf{T}$, $\mathbf{L}$, $\mathbf{R}$, $\mathbf{S}$ are the involved set of types. In other words, the elements of $\denottype{\forall\overline{\tvar}\overline{\ltvar}\overline{\rtvar}\overline{\stvar}.\type}$ are functions mapping all possible concrete instantiations of the type parameters to the denotational function values of the corresponding type.}

\subsection{Denotational semantics of expressions}

The denotational semantics of a well-typed expression $\e$ of type $\type$ in domain $\eventS$ under assumptions $\VarS = \envmap{\overline{\xname}}{\overline{\dvalue}}$ is written $\denotexp{\e}{\VarS}{\eventS}$ and yields a field evolution in $\denottype{\type}$ with domain $\eventS$. As for the denotation of types, we assume that the denotations of built-in functions and constructors are given. In particular, this is represented by the function $\builtindenot{C}{\dc}$ in $(\denotval{\ltype_1}\times\ldots\times\denotval{\ltype_n}) \rightarrow \denotval{\ltype}$ translating the behaviour of built-in constructors $\dc$ of type $(\overline{\ltype})\to\ltype$;\footnote{Since a constructor does not depend on the environment, we do not need an element of $\denotval{(\overline{\ltype}) \to \ltype}$ in this case.} and by the function $\builtindenot{B}{\oname}$ in $\denottype{(\overline{\type}) \to \type}$ translating the behaviour of built-in operators $\oname$ of type $(\overline{\type}) \to \type$ to denotational values (and possibly implicitly depending on sensor values and global environment status).

The interpretation function $\denotexp{\cdot}{}{}$ is then defined by the following rules:
\[
\begin{array}{rcl}
	\denotexp{\xname}{\VarS}{\eventS} & = & \proj{\VarS(\xname)}{\eventS} \\[6pt]

	\denotexp{\envmap{\overline\deviceId}{\overline\anyvalue}}{\VarS}{\eventS} & = & \denotf{\eventId \in \eventS}{} \envmap{\overline\deviceId}{\denotexp{\overline\anyvalue}{\VarS}{\eventS}(\eventId)} \\[6pt]

	\denotexp{\dcOf{\dc}{\overline\lvalue}}{\VarS}{\eventS} & = & \denotf{\eventId \in \eventS}{} \builtindenot{C}{\dc}(\denotexp{\overline\lvalue}{\VarS}{\eventS}(\eventId)) \\[6pt]

	\denotexp{\oname}{\VarS}{\eventS} & = & \denotf{\eventId \in \eventS}{} \langle \oname, ~ \builtindenot{B}{\oname} \rangle \\[6pt]

	\denotexp{\fname}{\VarS}{\eventS} & = & \denotf{\eventId \in \eventS}{} \langle \fname, ~ \lim_{n} \builtindenot{D}{\fname}_n \rangle \\[6pt]

	\denotexp{(\overline{\xname}) \; \toSymK\;\e}{\VarS}{\eventS} & = & \denotf{\eventId \in \eventS}{} \langle \texttt{(}\overline{\xname}\texttt{)} \; \toSymK\;\e, ~ \denotf{\overline\dvalue \in \denottype{\overline\type}}{} \denotexp{\e}{\VarS \cup \envmap{\overline{\xname}}{\overline\dvalue}}{\domof{\overline\dvalue}} \rangle \\[6pt]

	\denotexp{\e'(\overline\e)}{\VarS}{\eventS} & = & \denotf{\eventId \in \eventS}{} \sndK\left(\denotexp{\e'}{\VarS}{\eventS}(\eventId)\right)\left(\proj{\denotexp{\overline\e}{\VarS}{\eventS}}{\eventS(\e',\eventId)}\right)(\eventId) \\[6pt]

	\denotexp{\nbrK\{\e\}}{\VarS}{\eventS} & = & \denotf{\eventId \in \eventS}{} \denotf{\deviceId \in \predevices{\eventS}(\eventId)}{} \denotexp{\e}{\VarS}{\eventS}(\nbrdevice{\eventId}{\deviceId}) \\[6pt]

	\denotexp{\repK(\e_1)\{(\xname) \; \toSymK\; \e_2\}}{\VarS}{\eventS} & = & \lim_n \builtindenot{R}{\repK(\e_1)\{(\xname) \; \toSymK\; \e_2\}}_n
\end{array}
\]
Where:
\begin{itemize}
	\item
	$\builtindenot{D}{\fname}_n$ is the partial function translating the behaviour of $\fname$ when recursion is bounded to depth $n$, and is defined by rules:
	\[
	\begin{array}{lcl}
		\builtindenot{D}{\fname}_0 & = & \emptyset \\
		\builtindenot{D}{\fname}_{n+1} & = & \denotf{\overline\dvalue \in \denottype{\overline\type}}{} \denotexp{\body{\fname}}{\VarS \cup \envmap{\args{\fname}}{\overline\dvalue}, \envmap{\fname}{\denotf{\eventId \in \eventS}{} \builtindenot{D}{\fname}_n}}{\domof{\overline\dvalue}}
	\end{array}
	\]
	
	\item
	$\eventS(\e', \eventId)$ is equal to $\eventS$ if $\denotexp{\e'}{\VarS}{\eventS}(\eventId)$ is a built-in operator, and to $\{\eventId' \in \eventS: ~ \denotexp{\e'}{\VarS}{\eventS}(\eventId') = \denotexp{\e'}{\VarS}{\eventS}(\eventId)\}$ otherwise (that is, the set of events in $\eventS$ aligned with $\eventId$ with respect to the computation of $\e'$).
	
	\item
	$\builtindenot{R}{\e}_n$ with $\e = \repK(\e_1)\{(\xname) \; \toSymK\; \e_2\}$ denotes the $\repK$ construct as bounded to $n$ loop steps, and it is defined by rules:
	\[
	\begin{array}{lcl}
		\builtindenot{R}{\e}_0 & = & \denotexp{\e_1}{\VarS}{\eventS} \\
		\builtindenot{R}{\e}_{n+1} & = & \denotexp{\e_2}{\VarS \cup \envmap{\xname}{\shift{\builtindenot{R}{\e}_n, \builtindenot{R}{\e}_0}}}{\eventS}
	\end{array}
	\]
	where $\shift{\dvalue, \dvalue_0} = \denotf{\eventId \in \eventS}{} \dvalue(\repdevice{\eventId}) \ifK \; \repdevice{\eventId} \text{ exists } \elseK \; \dvalue_0(\eventId)$ pushes each value in $\dvalue$ to the next future event, while falling back to $\dvalue_0$ for starting events. In the remainder of this paper, for ease of notation we shall drop the reference to $\dvalue_0$ and keep it implicit in the definition of $\shift{\cdot}$.
\end{itemize}

The rules above provide a definition of $\denotexp{\cdot}{}{}$ by induction on the structure of the expressions. In the remainder of this paper we shall feel free to omit the subscript $\VarS$ whenever $\VarS = \emptyset$ and the superscript $\eventS$ whenever $\eventS = \EventS$. Notice that syntactic values are always denoted by constant field evolutions, and can be reconstructed from their denotation (with possibly the exception of constructors).

The denotation of variables is straightforward, while the denotation of constructors and built-in operators is abstracted away assuming that corresponding $\builtindenot{C}{\dc}$ and $\builtindenot{B}{\oname}$ are given. In order to produce neighbouring field values with the correct domain, we require that in case the return type $\type$ of $\oname$ is a \emph{field} type, then $\domof{\builtindenot{B}{\oname}(\overline{\dvalue})(\eventId)} = \predevices{\eventS}(\eventId)$ for all possible $\overline{\dvalue}$ in $\denottype{\overline{\type}}$ of domain $\eventS$ and $\eventId \in \eventS$. Even though most built-in operators (pure operators, local sensors) could be defined pointwise in the same way constructors are defined, this is not possible for relational sensors (as \texttt{nbr-range}) thus we opted for a more general and simpler formulation. The denotation of neighbouring field values is given for convenience, but since neighbouring field values does not occur in source programs is not needed for their denotation.

The denotation of defined functions, as usual, is defined as a \emph{fixpoint} of an iterated process starting from the function $\builtindenot{D}{\fname}_0$ with empty domain. At each subsequent step $n+1$, the body of $\fname$ is evaluated with respect to the context that associates the name $\fname$ itself with the previously obtained function $\builtindenot{D}{\fname}_n$ (and the arguments of $\fname$ with the respective values). We assume that the resulting function $\builtindenot{D}{\fname}_{n+1}$ is undefined if it calls $\builtindenot{D}{\fname}_n$ with arguments outside of its domain. It follows by easy induction that each such step is a conservative extension, i.e., $\builtindenot{D}{\fname}_n \subseteq \builtindenot{D}{\fname}_{n+1}$, hence the limit of the process is well-defined. 
Since function calls are guaranteed to terminate, this limit will be a total function as required by the denotation of function types (see Remark~\ref{rmk:termination}).

The denotation of a function application $\e'(\overline\e)$ is given pointwise by event, and applies the second coordinate of $\denotexp{\e'}{\VarS}{\eventS}(\eventId)$ (that is, the mathematical function corresponding to $\e'$) interpreted in the restricted domain $\eventS(\e', \eventId)$ (computed through the first coordinate of $\denotexp{\e'}{\VarS}{\eventS}$) containing $\eventId$ to the arguments $\denotexp{\overline\e}{\VarS}{\eventS}$. Such domain restriction is used in order to prevent interference among non-aligned devices: in fact, no restriction is needed for built-in operators while restriction to devices computing the same function $\e'$ is needed otherwise. The importance of this aspect shall be further clarified in the following sections. As the rule above is formulated, it seems that a whole field evolution $\dvalue$ is calculated for each event $\eventId$, while being used only to produce the local value $\dvalue(\eventId)$. However, the whole field evolution is actually used since each event in its domain $\eventS(\e', \eventId)$ computes the same function on the same arguments, hence producing the same output field evolution. Thus the rule could also be reformulated as follows:
\[
\denotexp{\e'(\overline\e)}{\VarS}{\eventS} = \bigcup_{\eventId \in \eventS} \sndK\left(\denotexp{\e'}{\VarS}{\eventS}(\eventId)\right)\left(\proj{\denotexp{\overline\e}{\VarS}{\eventS}}{\eventS(\e',\eventId)}\right) \\[5pt]
\]

The denotation of operator $\nbrK$ yields in each event $\eventId$ a neighbouring field of domain $\predevices{\eventS}(\eventId)$ mapping to the values of expression $\e$ in the corresponding events.

The denotation of operator $\repK$ is carried out by a fixpoint process as for recursive functions. First, a field evolution $\builtindenot{R}{\cdot}_0$ is computed holding the initial values computed by $\e_1$ in each event. At each subsequent step, the results computed by $\builtindenot{R}{\cdot}_n = \dvalue$ in each event are made available to their subsequent events through the new assumption $\envmap{\xname}{\shift{\dvalue}}$ in $\VarS$. It follows that once the value at each event in $\predevices{\eventS}(\eventId)$ stabilizes, the value at $\eventId$ also stabilizes in one more iteration. Since the events form a DAG and values at source events (events without predecessor) are steadily equal to the initial value by construction, the whole process stabilizes after a number of iterations at most equal to the cardinality of $\EventS$, hence the limit of the process is well-defined.

\subsection{Example}

We now illustrate the denotational semantics by applying it to representative example expressions.

\begin{figure}[p]
	\centering
	\vspace{-10pt}
	\includegraphics[height=1.00\textheight]{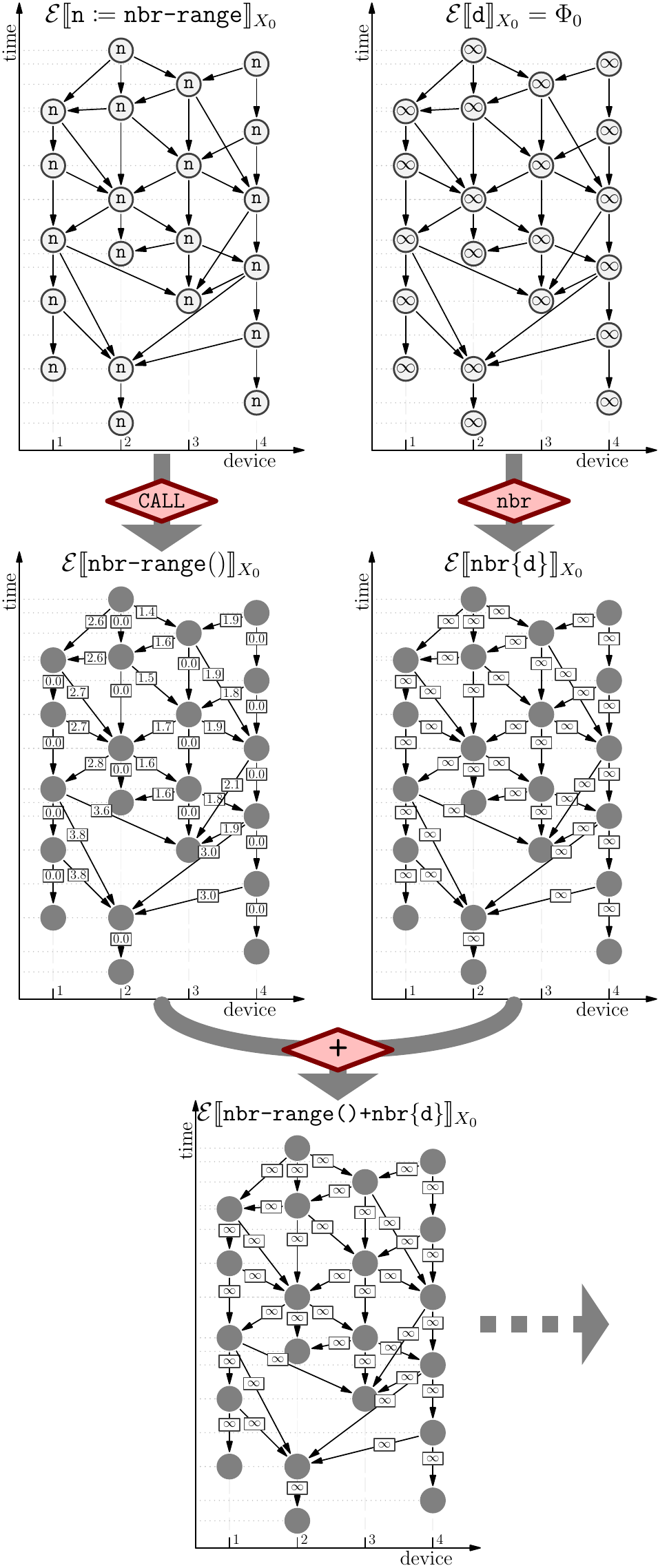}
	\vspace{-4pt}
	\caption{(a) The denotational semantics of a distance-to calculation.}\label{f:gradient}
\end{figure}
\begin{figure}[p]
	\ContinuedFloat
	\centering
	\vspace{-10pt}
	\includegraphics[height=1.00\textheight]{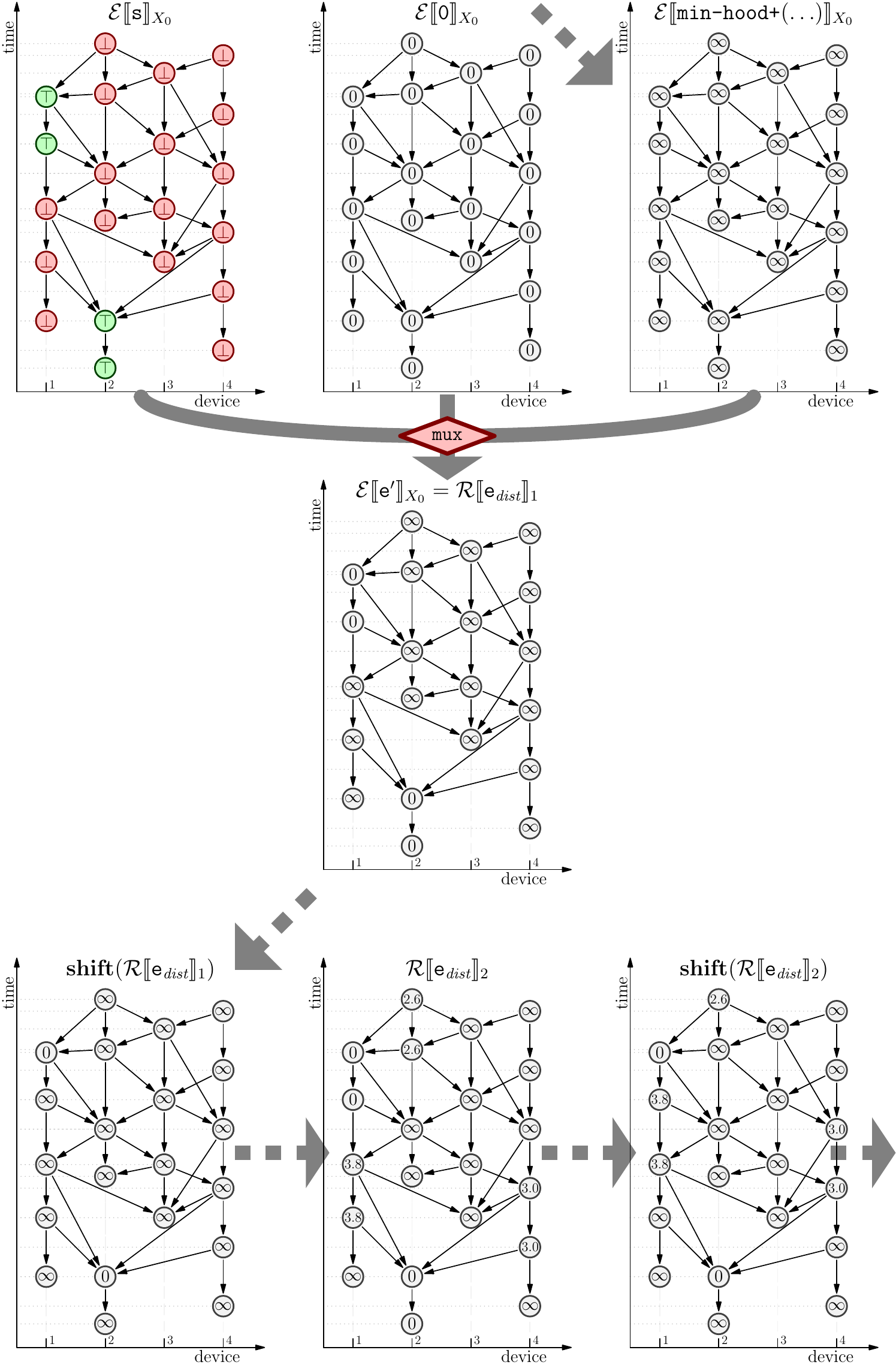}
	\vspace{-4pt}
	\caption{(b) The denotational semantics of a distance-to calculation (cont.)}
\end{figure}
\begin{figure}[t]
	\ContinuedFloat
	\centering
	\vspace{-10pt}
	\includegraphics[width=0.99\textwidth]{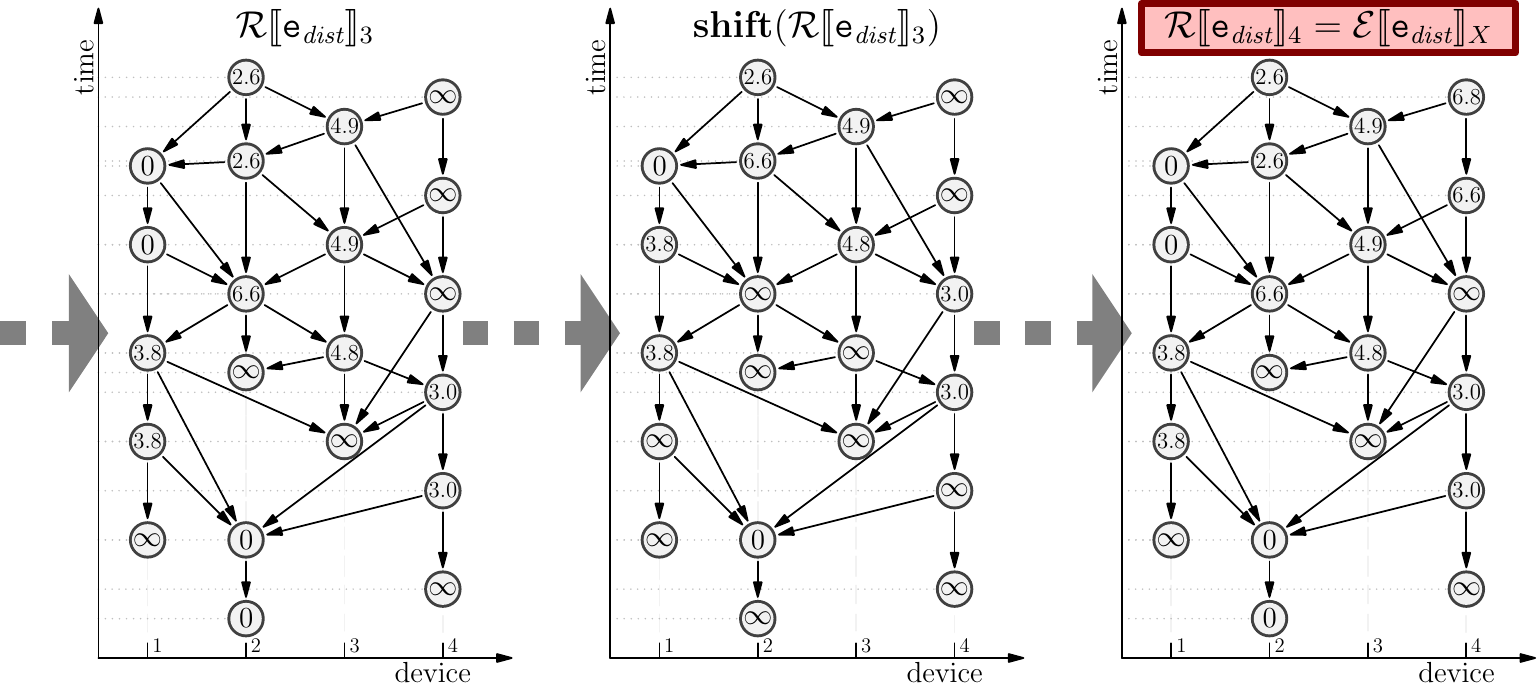}
	\vspace{-4pt}
	\caption{(c) The denotational semantics of a distance-to calculation (cont.)}
\end{figure}

\subsubsection*{Distance-To}

For a first example, consider the expression $\e_\textit{dist}$, computing the distance of every device from devices in a given source set indicated by the Boolean-valued field \texttt{s}:
\begin{Verbatim}[%fontsize=\footnotesize,
	                  frame=single,
	                  commandchars=\\\{\}]
\pr{mux}( s, 0, \pr{min-hood+}(\pr{nbr-range}() + \km{nbr}\{d\}) )  \il{\texttt{ e'}}
\km{rep} (infinity) \{ (d) => e' \}  \il{\texttt{ e\_dist}}
\end{Verbatim}
where \texttt{min-hood+} is a built-in function that returns the minimum value amongst a device's neighbours, excluding itself.
Figure~\ref{f:gradient} shows the evaluation of the denotational semantics for this expression, as evaluated with respect to the neighbourhood graph shown in Figure~\ref{f:neigh}.
We consider as input a source set \texttt{s} consisting of device 2 before its reboot, and device 1 beginning at its fourth firing,
represented by the Boolean field evolution $\dvalue_\texttt{s}$, with corresponding environment $X = \envmap{\texttt{s}}{\dvalue_\texttt{s}}$, shown in the top left of Figure \ref{f:gradient} (b). We assume the devices to be moving\footnote{Note that the $x$-axis in Figure \ref{f:gradient} is indexed by \emph{device} and not by \emph{position}.} so that their relative distance changes over time as depicted in the center left of Figure \ref{f:gradient} (a).

The outermost component of expression $\e_\textit{dist}$ is a $\repK$ operator, thus $\denotexp{\e_\textit{dist}}{X}{}$ is calculated via the following procedure:
\begin{itemize}
	\item First, $\builtindenot{R}{\e_\textit{dist}}_0$ is calculated as $\denotexp{\infty}{X}{} = \dvalue_0$ (since $\infty$ is the initial value of the $\repK$-expression), a constant field evolution.

	\item This value is then shifted in time (in this case leaving it unchanged, $\shift{\dvalue_0} = \dvalue_0$) and incorporated in the substitution $X_0 = X \cup \envmap{\texttt{d}}{\dvalue_0}$. Thus $\builtindenot{R}{\e_\textit{dist}}_1$ is calculated as $\denotexp{\e'}{X_0}{}$ giving a new field evolution $\dvalue_1$. This evaluation is illustrated step-by-step in Figure \ref{f:gradient} (a) and (b) top and center, breaking $\e'$ into all its subexpressions.

	\item The process of shift and evaluation is then repeated: $\dvalue_1$ is shifted in time (Figure \ref{f:gradient} (b) bottom left), incorporated in a new substitution $X_1 = X \cup \envmap{\texttt{d}}{\shift{\dvalue_1}}$ and $\denotexp{\e'}{X_1}{}$ is expanded into another field evolution $\dvalue_2$ (Figure \ref{f:gradient} (b) bottom center).  Shifting and evaluation continues until a fixed point is reached, which in the case of this example happens at stage $n=4$ (Figure \ref{f:gradient} (c) right).
\end{itemize}

Notice that due to the characteristics of the $\repK$ operator, the values $\nbrK\{\texttt{d}\}$ collected from neighbour devices are not the latest but instead the ones before them (that is, the values fed to the update function in the latest event). 
The latest outcome of a $\repK$ operator could instead be obtained via $\nbrK\{ \repK (\cdot) \{\cdot\} \}$, but this construct is not used here as that would not allow the distance calculation to propagate across multiple hops in the network.
This additional delay sometimes leads to counterintuitive behaviours: for example, the third firing $\eventId$ of device 3 calculates gradient $4.9$ obtained through device 4, which however holds value $\infty$ in its latest firing available to $\eventId$. In fact, the value to which $\eventId$ refers to is the previous one, which is equal to $3.0$. This behaviour can slow down the propagation of updates through a network, but appears necessary for ensuring both safe and general composition.

\subsubsection*{Distance avoiding obstacles}

\begin{figure}[p]
	\centering
	\vspace{-10pt}
	\includegraphics[height=1.00\textheight]{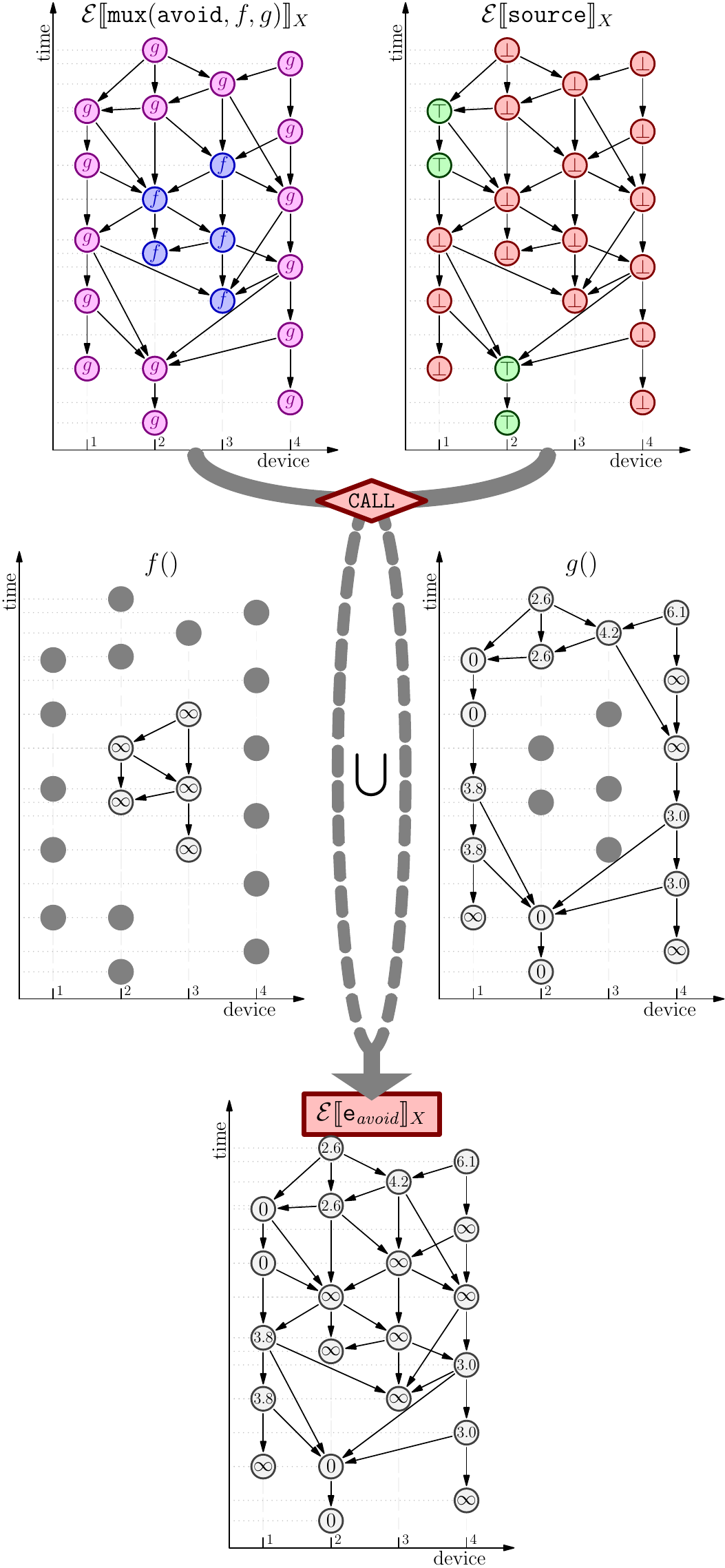}
	\vspace{-4pt}
	\caption{The denotational semantics of a gradient calculation avoiding obstacles.} \label{f:avoid}
\end{figure}

The previous example allowed us to show the denotation of data values (\texttt{nbr-range}, $0$), variable lookups (\texttt{d}, \texttt{s}),  builtin functions (\texttt{nbr-range}, \texttt{+[f,f]}, \texttt{mux}, \texttt{min-hood+}), $\nbrK$ and $\repK$ constructs. It did not, however, show an example of \emph{branching}, which in the present calculi is modeled by function calls $\e(\overline\e)$ where the functional expression $\e$ is \emph{not} constant.

To illustrate branching, we now expand on the previous example by considering the expression $\e_\textit{avoid}$ which computes the distance of each device from a given source set \emph{avoiding some obstacles}:
\begin{Verbatim}[%fontsize=\footnotesize,
	                  frame=single,
	                  commandchars=\\\{\}]
\km{def} f(s) \{ \pr{infinity} \}
\km{def} g(s) \{ e_dist \}
\texttt{e\_avoid} \pr{mux}(avoid, f, g)(source)  \il{\texttt{e\_avoid}}
\end{Verbatim}
The denotational semantics of $\e_\textit{avoid}$ on the sample network in Figure \ref{f:neigh} is shown in Figure \ref{f:avoid}. We consider as input the same source set \texttt{source} (with corresponding field evolution $\dvalue_\texttt{s}$) as in the previous example, and a set of obstacles \texttt{avoid} corresponding to the first firings of device 3 and device 2 after its reboot (blue nodes in Figure \ref{f:avoid} top left), together enclosed in environment $X$.

When the field of functions (top left) is called on the argument (top right), the computation branches in two parts:
\begin{itemize}
	\item the events holding $f$, which just compute the constant value $\infty$ (center left);
	\item the events holding $g$, which compute a distance from the source set, as in the previous example (center right).
\end{itemize}
Notice that since the events holding $g$ compute their distances in isolation, their final values differ from the ones obtained in the previous example.

Finally, the two different branches are merged together in order to form the final outcome of the function call, in Figure \ref{f:avoid} bottom.

\subsection{Properties}\label{ssec:properties}

The denotational semantics can be used in order to formulate conveniently intuitive facts about the calculus.

\subsubsection*{Alignment}

Given any expression $\e$ of field type, $\denotexp{\e}{}{\eventS}(\eventId)$ is a neighbouring field with domain $\predevices{\eventS}(\eventId)$. This fact can be easily checked in the rule for $\nbrK$. In the case of function application, notice that $\e'$ has to be a value since it has type $(\overline{\type}) \to \ftype$. Thus $\eventS(\e', \eventId) = \eventS$ and the thesis follows by inductive hypothesis using $\nbrK$ and built-in operators as base case.

\subsubsection*{Restriction}

Given any expression $\e_0$ executed in domain $\eventS$ and $V = \denotexp{\e_0}{}{\eventS}(\eventId)$ for some $\eventId \in \eventS$, we say that $({\denotexp{\e_0}{}{\eventS}})^{-1}(V) = \eventS(\e_0, \eventId)$ is a cluster.
We say that function call has the restriction property to mean that well-typed expression $\e_0(\e_1,\ldots,\e_n)$ computes in isolation in each such cluster.

Namely, given $\e_0$ and any of its clusters $\eventS(\e_0, \eventId)$, let $\e_1, \ldots, \e_n$ and $\e'_1, \ldots, \e'_n$ be such that the denotation of $\e_i$ coincides with that of $\e'_i$ on $\eventS(\e_0, \eventId)$, that is $\proj{\denotexp{\e_i}{}{\eventS}}{\eventS(\e_0, \eventId)}  = \proj{\denotexp{\e'_i}{}{\eventS}}{\eventS(\e_0, \eventId)}$. Then $\denotexp{\e_0(\e_1,\ldots,\e_n)}{}{\eventS}(\eventId)=\denotexp{\e_0(\e'_1,\ldots,\e'_n)}{}{\eventS}(\eventId)$ for any $\eventId \in \eventS(\e_0, \eventId)$.

This means that computation of $\e_0$ inside cluster $\eventS(\e_0, \eventId)$ is independent of the outcome of computation outside it, and of events outside it. This fact reflects the intuition beyond function application given in Section \ref{sec-calculus-syntax-and-typing}, and implies the analogous property for the first-order calculus with conditionals (since the conditional expression can be simulated by function application).

\subsubsection*{Repeating statements}

Consider now an expression $\e = \repK(\e_1)\{(\xname) \; \toSymK\; \e_2\}$ and suppose that $\eventId$ is a ``source'' event in $\eventS$, that is, there exists no $\eventId'$ in $\eventS$ such that $\neighbour{\eventId}{\eventId'}$. Then $\denotexp{\e}{}{\eventS}(\eventId) = \denotexp{((\xname) \; \toSymK \; \e_2)(\e_1)}{}{\eventS}$.

Furthermore, assume now that $\e$ does not contain $\nbrK$ statements or non-pure operators and $\eventId$ is any event in $\eventS$. Then $\denotexp{\e}{}{\eventS}(\eventId) = \denotexp{\e}{}{\eventS'}(\eventId)$ where $\eventS'$ is the subset of $\eventS$ containing only the events on device $\deviceId_\eventId$.

\section{Operational Semantics}\label{sec-calculus-operational-semantics}

Having established a denotational semantics describing the behavior of the aggregate of devices in the previous section, 
we now develop an operational semantics that describes the equivalent computation as carried out by an individual device at a particular event, and hence, by a whole network of devices over time.
In particular, this section presents a formal semantics that can serve as a specification for implementation of programming languages based on the calculus.

\newcommand{\piIofOv}[1]{\overline{\pi}(#1)}

\subsection{Device Semantics (Big-Step Operational Semantics)}\label{sec-calculus-device-semantics}

According to the ``local'' viewpoint, individual devices undergo computation
in rounds.
In each round, a device sleeps for some time, wakes up,
gathers information about messages received from neighbours while sleeping, performs an evaluation of the program, and finally emits a message to all
neighbours with information about the outcome of computation before
going back to sleep.
The scheduling of such rounds across the network is fair and
non-synchronous.

We base operational semantics on the syntax introduced in Section~\ref{sec-calculus-syntax} (Figure~\ref{fig:source:syntax}),
%
%In the following, 
%
To simplify the notation, we shall assume a fixed program $\PROGRAM$. %and write $\e_{\main}$ to denote the main expression of the program.
We say that ``device $\deviceId$ \emph{fires}'', to mean that the main expression of $\PROGRAM$ is evaluated on $\deviceId$ at a particular round.

We model device computation by a big-step 
operational semantics where the result of evaluation is a \emph{value-tree} $\vtree$, which is an ordered tree of values that tracks the results of all evaluated subexpressions.
Intuitively, the evaluation of an expression at a given time in a device $\deviceId$ is performed against the recently-received
value-trees of neighbours, namely, its outcome depends on those value-trees.
The result is a new value-tree that is conversely made available to $\deviceId$'s neighbours (through a broadcast) for their
firing; this includes $\deviceId$ itself, so as to support a form of state across computation rounds (note that any implementation might massively compress the value-tree, storing only informations about sub-expressions which are relevant for the computation).
A \emph{value-tree environment} $\Trees$ is a map from device identifiers  to value-trees, collecting the outcome of the last evaluation on the neighbours.
This is written $\envmap{\overline\deviceId}{\overline\vtree}$ as short for $\envmap{\deviceId_1}{\vtree_1},\ldots,\envmap{\deviceId_n}{\vtree_n}$.

The syntax of value-trees and value-tree environments is given in Figure~\ref{fig:deviceSemantics} (first frame).
Figure~\ref{fig:deviceSemantics} (second frame) defines: the auxiliary functions $\rho$ and $\pi$ for extracting the root value and a subtree of a value-tree, respectively (further explanations about function $\pi$ will be given later); the extension of functions $\rho$ and $\pi$ to value-tree environments; and the auxiliary functions $\argsNAME$ and $\bodyNAME$ for extracting the formal parameters and the body of a (user-defined or anonymous) function, respectively.
The computation that takes place on a single device is formalised  by the big-step operational semantics rules given in
Figure~\ref{fig:deviceSemantics} (fourth frame). The derived judgements are of the form $\bsopsem{\deviceId}{\Trees}{\e}{\vtree}$,
to be read ``expression $\e$ evaluates to  value-tree $\vtree$
on device $\deviceId$  with respect to\ the value-tree environment
$\Trees$'', where:
%\begin{itemize}
%\item 
\emph{(i)} $\deviceId$ is the identifier of the current device;
%\item 
\emph{(ii)} $\Trees$ is the neighbouring field of the value-trees produced by the most recent evaluation of (an expression corresponding to) $\e$ on 
%the current device
$\deviceId$'s neighbours;
%\item
\emph{(iii)}
 $\e$ is a run-time expression (i.e., an expression that may contain neighbouring field values);
%\item 
\emph{(iv)} the value-tree $\vtree$  represents the values
    computed for all the expressions encountered during the
    evaluation of $\e$---in particular $\vrootOf{\vtree}$
    is the resulting value of expression $\e$. 
%\end{itemize}
%The first firing of a device $\deviceId$ after activation or reset is performed with respect to\ the empty tree environment, while any other firing must consider the outcome of the most recent firing of $\deviceId$ (i.e., whenever  $\Trees$ is not empty, it includes the value of the most recent evaluation of $\e$ on $\deviceId$). %---this is needed to support the stateful semantics of the \repK\ construct.

\begin{figure}[!t]{
 \framebox[1\textwidth]{
 $\begin{array}{l}
%%%  SYNTAX
\textbf{Value-trees  %sensor-value maps, 
and value-tree environments:}\\
\begin{array}{rcl@{\hspace{8.5cm}}r}
\vtree%,\vtreealt 
& \BNFcce &  \mkvtree{E-Rule}{\anyvalue}{\overline{\vtree}}   
 &   {\footnotesize \mbox{value-tree}} \\
%
%\deviceId & \BNFcce &  \envmap{\overline\snsname}{\overline\lvalue}    &   {\footnotesize \mbox{sensor-value map}} \\
%
\Trees & \BNFcce & \envmap{\overline{\deviceId}}{\overline{\vtree}}  
 &   {\footnotesize \mbox{value-tree environment}} \\
%
%\localenv & \BNFcce & \envS{\deviceId}{\overline\vtree} &   {\footnotesize \mbox{local environment}} \\
\end{array}\\
\hline\\[-8pt]
%%%%  SYNTAX
%\textbf{Alignment labels:}\\
%\begin{array}{rcl@{\hspace{4cm}}r}
%%
%\alignLab & \BNFcce &    \funLab \overline{\fname}
%\\
%\end{array}\\
%\hline\\[-8pt]
%%%  SYNTAX
\textbf{Auxiliary functions:}\\
\begin{array}{l}
\begin{array}{l@{\hspace{1.4cm}}l}
\vrootOf{\mkvtree{E-Rule}{\anyvalue}{\overline{\vtree}}}  =   \anyvalue
&
\\
\piIof{i}{\mkvtree{E-Rule}{\anyvalue}{\vtree_1,\ldots,\vtree_n}}  =   \vtree_i
\quad \mbox{if} \; 1\le i \le n
                         &    
\piBof{\lvalue}{\mkvtree{E-Rule}{\anyvalue}{\vtree_1,\ldots,\vtree_{n+2}}}  =   \vtree_{n+2}
\quad \mbox{if} \;  \vrootOf{\vtree_{n+1}} = \lvalue
\\

\piIof{i}{\vtree}  =   \emptyseq \quad \mbox{otherwise} 
                         &  
 \piBof{\lvalue}{\vtree}  =   \emptyseq \quad \mbox{otherwise}
\\  
\end{array}
\\
%%
%\piBof{\not=0}{\mkvtree{E-Rule}{\anyvalue}{\vtree_1,\vtree_2}}  =   \vtree_2
%\quad \mbox{if} \;  \vrootOf{\vtree_{1}} \not= 0
%                         &
%\\
%%
%\piBof{\not=0}{\vtree}  =   \emptyseq \quad \mbox{otherwise}
%                         &    
%\\
\mbox{For } \auxNAME\in\rho,\piI{i},\piB{\lvalue}: %,\piB{\not=0}:  
\quad 
\left\{\begin{array}{lcll}
 \aux{\envmap{\deviceId}{\vtree}}  & =  & \envmap{\deviceId}{\aux{\vtree}} & \quad \mbox{if} \; \aux{\vtree} \not=\emptyseq  
\\
\aux{\envmap{\deviceId}{\vtree}}  & =   & \emptyseq  & \quad \mbox{if} \; \aux{\vtree}=\emptyseq  
\\
\aux{\Trees,\Trees'}  & =  &  \aux{\Trees},\aux{\Trees'}
\end{array}\right.   
\\
\begin{array}{l@{\hspace{2.28cm}}l}
\args{\fname} = \overline{\xname} \quad \mbox{if } \, \defK \; \fname (\overline{\xname}) \; \{\e\}
&
\body{\fname} = \e  \quad \mbox{if } \, \defK \; \fname (\overline{\xname}) \; \{\e\}
\\
\args{ (\overline{\xname}) \; \toSymK \; \e} = \overline{\xname}
&
\body{(\overline{\xname}) \; \toSymK \; \e} = \e
\end{array}
%%
%\updLab{\e}{\fname}  =  \e' \quad \mbox{obtained from $\e$ by replacing each $\alignLab$ by $\alignLab\fname$}
%                         &    
%\\
%\multicolumn{2}{l}{\initialV{\deviceId} \textrm{~~maps any sensor $\snsname$ to $\initialOf{\typeof{\snsname}}$.}}\\
\end{array}\\
\hline\\[-10pt]
\textbf{Syntactic shorthands:}\\
\begin{array}{l@{\hspace{20pt}}l@{\hspace{20pt}}l}
\bsopsem{\deviceId}{\piIofOv{\Trees}}{\overline{\e}}{\overline{\vtree}}
&
  \textrm{where~~} |\overline{\e}|=n
&
  \textrm{for~~}
  \bsopsem{\deviceId}{\piIof{1}{\Trees}}{\e_1}{\vtree_1}
  \quad
    \cdots
    \quad
    \bsopsem{\deviceId}{\piIof{n}{\Trees}}{\e_n}{\vtree_n}\\
\vrootOf{\overline{\vtree}}
&
  \textrm{where~~} |\overline{\vtree}|=n
  & \textrm{for~~}
\vrootOf{\vtree_1},\ldots,\vrootOf{\vtree_n}\\
\substitution{\overline{\xname}}{\vrootOf{\overline{\vtree}}}
&   \textrm{where~~} |\overline{\xname}|=n
  &
  \textrm{for~~}
\substitution{\xname_1}{\vrootOf{\vtree_1}}~\ldots\quad\substitution{\xname_n}{\vrootOf{\vtree_n}}
\end{array}\\
\hline\\[-10pt]
%%%  EVALUATION RULES
\textbf{Rules for expression evaluation:} \hspace{5.7cm} %\hfill
%\vspace{-0.2cm}
  \boxed{\bsopsem{\deviceId}{\Trees}{\e}{\vtree}}
\skiptransition%[-5pt]
\begin{array}{c}
%\vspace{-0.1cm}
\nullsurfaceTyping{E-LOC}{
\bsopsem{\deviceId}{\Trees}{\lvalue}{\mkvtree{E-LOC}{\lvalue}{}}
}
\qquad\qquad
\surfaceTyping{E-FLD}{\qquad \fvalue' = \proj{\fvalue}{\domof{\Trees}\cup\{\deviceId\}}}{
\bsopsem{\deviceId}{\Trees}{\fvalue}{\mkvtree{E-FLD}{\fvalue'}{}}
}
\skiptransition\\[-3pt]
\surfaceTyping{E-B-APP}{  \quad
\begin{array}{c}
  \bsopsem{\deviceId}{\piIofOv{\Trees}}{\overline{\e},\e}{\overline{\vtree},\vtree}
  \qquad \anyvalue=\builtinop{\vrootOf{\vtree}}{\deviceId}{\Trees}(\vrootOf{\overline{\vtree}})
\end{array}
 }{
\bsopsem{\deviceId}{\Trees}{\e(\overline{\e})}{\mkvtree{E-B-APP}{\anyvalue}{\overline{\vtree},\vtree}}
}
\skiptransition\\
\surfaceTyping{E-D-APP}{ \quad
\begin{array}{c}
  \bsopsem{\deviceId}{\piIofOv{\Trees}}{\overline{\e},\e}{\overline{\vtree},\vtree} \quad 
  \funvalue = \vrootOf{\vtree} \quad
  \bsopsem{\deviceId}{\piBof{\funvalue}{\Trees}}{\applySubstitution{\body{\funvalue}}{\substitution{\args{\funvalue}}{\vrootOf{\overline{\vtree}}}}}{\vtree'}
\end{array}
 }{
\bsopsem{\deviceId}{\Trees}{\e(\overline{\e})}{\mkvtree{E-D-APP}{\vrootOf{\vtree'}}{\overline{\vtree},\vtree,\vtree'}}
}
\skiptransition\\
\surfaceTyping{E-NBR}{
         \qquad
\Trees_1=\piIof{1}{\Trees}
\qquad
     \bsopsem{\deviceId}{\Trees_1}{\e}{\vtree_1}
\qquad
 \fvalue=\mapupdate{\vrootOf{\Trees_{1}}}{\envmap{\deviceId}{\vrootOf{\vtree_1}}}
 }{
\bsopsem{\deviceId}{\Trees}{\nbrK\{\e\}}{\mkvtree{E-NBR}{\fvalue}{\vtree_1}}
}
\skiptransition\\
\surfaceTyping{E-REP}{
	\quad
	\begin{array}{ll}
     \bsopsem{\deviceId}{\piIof{1}{\Trees}}{\e_1}{\vtree_1} & \lvalue_1=\vrootOf{\vtree_{1}}\\
     \bsopsem{\deviceId}{\piIof{2}{\Trees}}{\applySubstitution{\e_2}{\substitution{\xname}{\lvalue_0}}}{\vtree_2}~~& \lvalue_2=\vrootOf{\vtree_{2}}
	\end{array}
	\quad
	\lvalue_0 = \left\{\begin{array}{lc}
                             \vrootOf{\piIof{2}{\Trees}}(\deviceId) & \mbox{if} \;  \deviceId \in \domof{\Trees} \\
                             \lvalue_1 & \mbox{otherwise}
                           \end{array}\right.
 }{
\bsopsem{\deviceId}{\Trees}{\repK(\e_1)\{(\xname) \; \toSymK \; \e_2\}}{\mkvtree{E-REP}{\lvalue_2}{\vtree_1,\vtree_2}}
}
%%
%\skiptransition
%%
%\surfaceTyping{E-THEN}{
%\qquad
%     \bsopsem{\deviceId}{\piIof{1}{\Trees}}{\e}{\vtree_1}
%\qquad
%\vrootOf{\vtree_{1}}=\truevalue
%\qquad
%     \bsopsem{\deviceId}{\piB{\truevalue,0}{\Trees}}{\e'}{\vtree_2}
%\qquad
% \lvalue=\vrootOf{\vtree_{2}}
% }{
%\bsopsem{\deviceId}{\Trees}{(\ifK \; \e\; \e' \; \e'')}{\mkvtree{E-THEN}{\lvalue}{\vtree_1,\vtree_2}}
%}
%%
%\skiptransition
%%
%\surfaceTyping{E-ELSE}{
%\qquad
%     \bsopsem{\deviceId}{\piIof{1}{\Trees}}{\e}{\vtree_1}
%\qquad
%\vrootOf{\vtree_{1}}=\falsevalue
%\qquad
%     \bsopsem{\deviceId}{\piB{\falsevalue,0}{\Trees}}{\e''}{\vtree_2}
%\qquad
% \lvalue=\vrootOf{\vtree_{2}}
% }{
%\bsopsem{\deviceId}{\Trees}{(\ifK \; \e\; \e' \; \e'')}{\mkvtree{E-ELSE}{\lvalue}{\vtree_1,\vtree_2}}
%}
%%
\end{array}
\end{array}$}
} %%%\capskip
\vspace{-0.1cm}
 \caption{Big-step operational semantics for expression evaluation.} \label{fig:deviceSemantics}
\end{figure}

The operational semantics  rules  are based on rather standard rules for functional languages, extended so as to be able to evaluate a subexpression $\e'$ of $\e$ with respect to\ the value-tree environment $\Trees'$ obtained from $\Trees$ by extracting the corresponding subtree (when present) in the value-trees in the range of $\Trees$. This process, called \emph{alignment}, is modeled by the 
 auxiliary function $\pi$, defined in Figure~\ref{fig:deviceSemantics} (third frame). The function $\pi$ has two different behaviours (specified by its subscript or superscript): $\piIof{i}{\vtree}$ extracts the $i$-th subtree of $\vtree$, if it is present; and $\piBof{\lvalue}{\vtree}$ extracts the last subtree of $\vtree$, if it is present and the root of the second last subtree of $\vtree$ is equal to the local value $\lvalue$.

Rules \ruleNameSize{[E-LOC]} and \ruleNameSize{[E-FLD]} model the evaluation of expressions that are either a local value or a neighbouring field value, respectively. For instance, evaluating the expression $\mathtt{1}$ produces (by rule  \ruleNameSize{[E-LOC]}) the value-tree $\mkvtree{E-LOC}{1}{}$, while evaluating the expression $\mathtt{+}$ produces the value-tree $\mkvtree{E-LOC}{+}{}$. Note that, in order to ensure that domain restriction is obeyed (cf.\ Section~\ref{sec-concepts}), rule  \ruleNameSize{[E-FLD]} restricts the domain of the neighbouring field value $\fvalue$ to the domain of $\Trees$ augmented by $\deviceId$.

Rule \ruleNameSize{[E-B-APP]} models the application of built-in functions. It is used to evaluate expressions of the form $\e_{n+1}(\e_1 \cdots \e_n)$ such that the evaluation of $\e_{n+1}$ produces a value-tree $\vtree_{n+1}$ whose root $\rho(\vtree_{n+1})$ is a built-in function $\oname$. It produces the value-tree $\mkvtree{E-B-APP}{\anyvalue}{\vtree_{1},\ldots,\vtree_{n},\vtree_{n+1}}$, where  $\vtree_{1},\ldots,\vtree_{n+1}$ are the value-trees produced by the evaluation of the actual parameters and functional expression $\e_{1},\ldots,\e_{n+1}$ ($n\ge 0$) and $\anyvalue$ is the value returned by the function.
Rule \ruleNameSize{[E-B-APP]}  exploits the  special auxiliary function $\builtinop{\oname}{\deviceId}{\Trees}$, whose actual definition is abstracted away. This is such that $\builtinop{\oname}{\deviceId}{\Trees}(\overline\anyvalue)$ computes the result of applying built-in function $\oname$ to values $\overline\anyvalue$ in the current environment of the device $\deviceId$.
As in the denotational case, we require that $\builtinop{\oname}{\deviceId}{\Trees}$ always yields values of the expected type $\type$ where $\oname : (\overline{\type}) \to \type \in \OStypEnv$.

In particular, for the examples in this paper, we assume that the built-in 0-ary function $\selfK$ gets evaluated to the current device identifier (i.e.,  $\builtinop{\selfK}{\deviceId}{\Trees}() =\deviceId$), and that mathematical operators have their standard meaning, which is independent from $\deviceId$ and $\Trees$ (e.g., $\builtinop{+}{\deviceId}{\Trees}(1,2)=3$). We also assume that \texttt{map-hood}, \texttt{fold-hood} reflect the rules for function application, so that for instance $\texttt{map-hood}(\funvalue, \envmap{\overline\deviceId}{\overline\anyvalue}) = \envmap{\overline\deviceId}{\funvalue(\overline\anyvalue)}$ (where $\funvalue(\anyvalue_i)$ is computed w.r.t. the empty value-tree environment $\Trees = \emptyset$). The $\builtinop{\oname}{\deviceId}{\Trees}$ function also encapsulates measurement variables such as \texttt{nbr-range} and interactions with the external world via sensors and actuators.

In order to ensure that domain restriction is obeyed, for each built-in function $\oname$ we assume that $\builtinop{\oname}{\deviceId}{\Trees}(\anyvalue_1,\cdots,\anyvalue_n)$ is defined only if all the neighbouring field values in $\anyvalue_1,\ldots,\anyvalue_n$ have domain $\domof{\Trees}\cup\{\deviceId\}$; and if $\builtinop{\oname}{\deviceId}{\Trees}(\anyvalue_1,\cdots,\anyvalue_n)$ returns a neighbouring field value $\phi$, then  $\domof{\phi}=\domof{\Trees}\cup\{\deviceId\}$.
For example, evaluating the expression $\mathtt{+(1\; 2)}$ produces the value-tree 
$\mkvtree{E-B-APP}{3}{\mkvtree{E-LOC}{1}{},\mkvtree{E-LOC}{2}{},\mkvtree{E-LOC}{+}{}}$.
The value of the whole expression, $3$, has been computed by using rule \ruleNameSize{[E-B-APP]} to evaluate the application of the sum operator $+$  (the root of the third subtree of the value-tree) to the values $1$  (the root of the first subtree of the value-tree) and $2$  (the root of the second subtree of the value-tree). In the following, for sake of readability, we sometimes write the value $\anyvalue$ as short for the value-tree $\mkvtree{E-LOC}{\anyvalue}{}$. Following this convention, the value-tree 
$\mkvtree{E-B-APP}{3}{\mkvtree{E-LOC}{1}{},\mkvtree{E-LOC}{2}{},\mkvtree{E-LOC}{+}{}}$
is shortened to $\mkvtree{E-B-APP}{3}{1,2,+}$.

Rule \ruleNameSize{[E-D-APP]} models the application of user-defined or  anonymous functions, i.e., it is used to evaluate expressions of the form $\e_{n+1}(\e_1 \cdots \e_n)$ such that the evaluation of $\e_{n+1}$ produces a value-tree $\vtree_{n+1}$ whose root  $\lvalue=\vrootOf{\vtree_{n+1}}$ is a user-defined function name or an anonymous function.
It is similar to rule \ruleNameSize{[E-B-APP]}, however it produces a value-tree which has one more  subtree, $\vtree_{n+2}$, which is produced by  evaluating the body of the function $\lvalue$ with respect to\ the value-tree environment $\piBof{\lvalue}{\Trees}$ containing only the value-trees associated to the evaluation of the body of the same function $\lvalue$.

To illustrate rule \ruleNameSize{[E-REP]} ($\repK$ construct), as well as computational rounds, we consider program \mbox{\texttt{rep(0)\{(x) => +(x, 1)\}}} (cf.\ Section~\ref{sec-calculus-syntax}).
The first firing of a device $\deviceId$ after activation or reset is performed against the empty tree environment. Therefore, according to rule  \ruleNameSize{[E-REP]}, to evaluate  \mbox{\texttt{rep(0)\{(x) => +(x, 1)\}}} means to evaluate the subexpression \mbox{\texttt{+(0, 1)}}, obtained from \mbox{\texttt{+(x, 1)}} by replacing \mbox{\texttt{x}} with \mbox{\texttt{0}}. This produces the  value-tree $\vtree=\mkvtree{E-REP}{1}{0, \mkvtree{E-B-APP}{1}{0,1,+}}$, where root $1$ is the overall result as usual, while its sub-trees are the result of evaluating the first and second argument respectively.
Any subsequent firing of the device $\deviceId$ is performed with respect to\ a tree environment $\Trees$ that associates to $\deviceId$ the outcome of the most recent firing of $\deviceId$. Therefore, evaluating    \mbox{\texttt{rep(0)\{(x) => +(x, 1)\}}} at the second firing means to evaluate the subexpression \mbox{\texttt{+(1, 1)}}, obtained from \mbox{\texttt{+(x, 1)}} by replacing \mbox{\texttt{x}} with \mbox{\texttt{1}}, which is the root of $\vtree$. Hence the results of computation are $1$, $2$, $3$, and so on.

Notice that in both rules \ruleNameSize{[E-REP]}, \ruleNameSize{[E-NBR]} we do not assume that $\Trees$ is empty whenever it does not contain $\deviceId$. This might seem unnatural at a first glance, since every time a device is restarted its first firing is computed with respect to the empty value-tree environment, and all the subsequent firings will contain $\deviceId$ their domains. However, this fact is not inductively true for the sub-expressions of $\emain$: for example, the first time a conditional guard evaluates to $\truevalue$ the if-expression will be evaluated w.r.t. an environment not containing $\deviceId$ but possibly containing other devices whose guard evaluated to $\truevalue$ in their previous round of computation.

Value-trees also support modelling information exchange through the $\nbrK$ construct, as of rule \ruleNameSize{[E-NBR]}. Consider 
the program $\e'=\minHoodK(\nbrK\{\snsNumK()\})$, where the 1-ary built-in function $\minHoodK$ returns the lower limit of values in the range of its neighbouring field argument, and  the 0-ary built-in function $\snsNumK$ returns the numeric value measured by a sensor. Suppose that the program runs  on a network of three fully connected devices $\deviceId_A$,  $\deviceId_B$, and $\deviceId_C$ where $\snsNumK$ returns  \texttt{1} on $\deviceId_A$,  \texttt{2} on $\deviceId_B$, and \texttt{3} on $\deviceId_C$. Considering an initial empty tree-environment $\emptyset$ on all devices, we have the following:
the evaluation of  $\snsNumK()$ on $\deviceId_A$ yields $\mkvtree{E-B-APP}{1}{\snsNumK}$ (by rules  \ruleNameSize{[E-LOC]} and \ruleNameSize{[E-B-APP]}, since $\builtinop{\snsNumK}{\deviceId_A}{\emptyset}()=1$); the evaluation of  $\nbrK\{\snsNumK()\}$ on $\deviceId_A$ yields $\mkvtree{E-NBR}{(\envmap{\deviceId_A}{1})}{\mkvtree{E-B-APP}{1}{\snsNumK}}$ (by rule \ruleNameSize{[E-NBR]}); and the evaluation of  $\e'$ on $\deviceId_A$ yields
\[
\begin{array}{l@{\quad}c@{\quad}l}
	\vtree_A & = & \mkvtree{E-B-APP}{1}{\mkvtree{E-NBR}{(\envmap{\deviceId_A}{1})}{\mkvtree{E-B-APP}{1}{\snsNumK}},\minHoodK}
\end{array}
\] 
(by rule \ruleNameSize{[E-B-APP]}, since $\builtinop{\minHoodK}{\deviceId_A}{\emptyset}(\envmap{\deviceId_A}{1})=1$). Therefore, after its first firing, device $\deviceId_A$ produces the value-tree $\vtree_A$.
 Similarly, after their first firing,  devices $\deviceId_B$ and $\deviceId_C$ produce the value-trees
\[
\begin{array}{l@{\quad}c@{\quad}l}
	\vtree_B & = & \mkvtree{E-B-APP}{2}{\mkvtree{E-NBR}{(\envmap{\deviceId_B}{2})}{\mkvtree{E-B-APP}{2}{\snsNumK}},\minHoodK} \\
	\vtree_C & = & \mkvtree{E-B-APP}{3}{\mkvtree{E-NBR}{(\envmap{\deviceId_C}{3})}{\mkvtree{E-B-APP}{3}{\snsNumK}},\minHoodK}
\end{array}
\]
respectively. Suppose that device $\deviceId_B$ is the first device that fires a second time. Then the  evaluation of $\e'$ on $\deviceId_B$ is now performed with respect to\ the value tree environment \mbox{$\Trees_{B} = (\envmap{\deviceId_A}{\vtree_A},\;\envmap{\deviceId_B}{\vtree_B},\;\envmap{\deviceId_C}{\vtree_C})$} and the evaluation of its subexpressions  $\nbrK\{\snsNumK()\}$  and $\snsNumK()$ is performed, respectively, with respect to\ the  following  value-tree environments  obtained from $\Trees_{B}$ by alignment:
\[
\begin{array}{l}
	\Trees'_{B}  \; = \;  \piIof{1}{\Trees_{B}} \; = \; (\envmap{\deviceId_A}{\mkvtree{E-NBR}{(\envmap{\deviceId_A}{1})}{\mkvtree{E-B-APP}{1}{\snsNumK}}},\;\;\envmap{\deviceId_B}{\cdots},\;\;\envmap{\deviceId_C}{\cdots})
	\\
	\Trees''_{B}     \; = \;  \piIof{1}{\Trees'_{B}}  \; = \;   (\envmap{\deviceId_A}{\mkvtree{E-B-APP}{1}{\snsNumK}},\;\;\envmap{\deviceId_B}{\mkvtree{E-B-APP}{2}{\snsNumK}},\;\;\envmap{\deviceId_C}{\mkvtree{E-B-APP}{3}{\snsNumK}})
\end{array}
\] 
We have that $\builtinop{\snsNumK}{\deviceId_B}{\Trees''_B}()=2$; the evaluation of  $\nbrK\{\snsNumK()\}$ on $\deviceId_B$ with respect to\ $\Trees'_B$ yields $\mkvtree{E-NBR}{\phi}{\mkvtree{E-B-APP}{2}{\snsNumK}}$ where $\phi = (\envmap{\deviceId_A}{1},\envmap{\deviceId_B}{2},\envmap{\deviceId_C}{3})$; and $\builtinop{\minHoodK}{\deviceId_B}{\Trees_B}(\phi)=1$. Therefore the  evaluation of $\e'$ on $\deviceId_B$ produces the value-tree $\mkvtree{E-B-APP}{1}{\mkvtree{E-NBR}{\phi}{\mkvtree{E-B-APP}{2}{\snsNumK}},\minHoodK}$.
Namely, the computation at device $\deviceId_B$ after the first round yields $1$, which is the minimum of $\snsNumK$ across neighbours---and similarly for $\deviceId_A$ and $\deviceId_C$.

We now present an example illustrating first-class functions. 
Consider the program $\pickHoodK(\nbrK\{\snsFunK()\})$, where the 1-ary built-in function $\pickHoodK$ returns at random a value in the range of its neighbouring field argument, and  the 0-ary built-in function $\snsFunK$ returns a 0-ary function returning a value of type $\ntype$. Suppose that the program runs again on a network of three fully connected devices $\deviceId_A$,  $\deviceId_B$, and $\deviceId_C$ where $\snsFunK$ returns  $\lvalue_0=() \toSymK 0$ on $\deviceId_A$ and  $\deviceId_B$, and returns  $\lvalue_1=() \toSymK \e'$ on $\deviceId_C$, where $\e'=\minHoodK(\nbrK\{\snsNumK()\})$ is the program illustrated in the previous example. Assume that $\snsNumK$ returns  \texttt{1} on $\deviceId_A$,  \texttt{2} on $\deviceId_B$, and \texttt{3} on $\deviceId_C$. Then after its first firing, device $\deviceId_A$ produces the value-tree
\[
\begin{array}{l}
	\vtree'_A \; = \; \mkvtree{E-D-APP}{0}{
	\mkvtree{E-B-APP}{\lvalue_0}{
	\mkvtree{E-NBR}{(\envmap{\deviceId_A}{\lvalue_0})}{\mkvtree{E-B-APP}{\lvalue_0}{\snsFunK}},\pickHoodK}, 0}
\end{array}
\] 
where the root  of the first subtree of $\vtree'_A$ is the anonymous function $\lvalue_0$ (defined above), and the second subtree  of $\vtree'_A$, $0$, has been produced by the evaluation of the body $0$ of $\lvalue_0$.
After their first firing, devices $\deviceId_B$ and $\deviceId_C$ produce the value-trees
\[
\begin{array}{l}
	\vtree'_B \; = \; \mkvtree{E-D-APP}{0}{
	\mkvtree{E-B-APP}{\lvalue_0}{
	\mkvtree{E-NBR}{(\envmap{\deviceId_B}{\lvalue_0})}{\mkvtree{E-B-APP}{\lvalue_0}{\snsFunK}},\pickHoodK}, 0}
	\\
	\vtree'_C \; = \; \mkvtree{E-D-APP}{3}{
	\mkvtree{E-B-APP}{\lvalue_1}{
	\mkvtree{E-NBR}{(\envmap{\deviceId_C}{\lvalue_1})}{\mkvtree{E-B-APP}{\lvalue_1}{\snsFunK}},\pickHoodK}, \vtree_C}
\end{array}
\] 
respectively, where $\vtree_C$ is the value-tree for $\e$ given in the previous example.

Suppose that device $\deviceId_A$ is the first device that fires a second time, and its $\pickHoodK$ selects the function shared by device $\deviceId_C$. The computation is performed with respect to\ the value tree environment
$\Trees'_{A} =  (\envmap{\deviceId_A}{\vtree'_A},\;\envmap{\deviceId_B}{\vtree'_B},\;\envmap{\deviceId_C}{\vtree'_C})$
and produces the value-tree
$\mkvtree{E-D-APP}{1}{
\mkvtree{E-B-APP}{\lvalue_1}{
\mkvtree{E-NBR}{\phi'}{\mkvtree{E-B-APP}{\lvalue_1}{\snsFunK}},\pickHoodK}, \vtree''_A}$, where $\phi' = (\envmap{\deviceId_A}{\lvalue_1},\envmap{\deviceId_C}{\lvalue_1})$ and $\vtree''_A  =  
 \mkvtree{E-B-APP}{1}{\mkvtree{E-NBR}{(\envmap{\deviceId_A}{1},\envmap{\deviceId_C}{3})}{\mkvtree{E-B-APP}{1}{\snsNumK}},\minHoodK}$, since, according to rule \ruleNameSize{[E-D-APP]},  the evaluation of the body $\e'$ of $\lvalue_1$  (which produces the value-tree $\vtree''_A$) is performed with respect to the value-tree environment
$
\piBof{\lvalue_1}{\Trees'_A} =  (\envmap{\deviceId_C}{\vtree_C}).
$
Namely, device $\deviceId_A$ executed the anonymous function $\lvalue_1$ received from $\deviceId_C$, and this was able to correctly align with execution of $\lvalue_1$ at $\deviceId_C$, gathering values perceived by $\snsNumK$ of $1$ at $\deviceId_A$ and $3$ at $\deviceId_C$.

\subsection{Network Semantics (Small-Step Operational Semantics)}\label{sec-calculus-network-semantics}

\begin{figure}[!t]{
 \framebox[1\textwidth]{
 $\begin{array}{l}
 %%%  SYNTAX
 \textbf{System configurations and action labels:}\\
\begin{array}{lcl@{\hspace{8cm}}r}
\Field & \BNFcce &  \envmap{\overline\deviceId}{\overline\Trees}    &   {\footnotesize \mbox{status field}} \\
\Topo & \BNFcce &  \envmap{\overline\deviceId}{\overline\devset}    &   {\footnotesize \mbox{topology}} \\
\Sens & \BNFcce &  \envmap{\overline\deviceId}{\overline\senstate}    &   {\footnotesize \mbox{sensors-map}} \\
\Envi & \BNFcce &  \EnviS{\Topo}{\Sens}    &   {\footnotesize \mbox{environment}} \\
\Cfg & \BNFcce &  \SystS{\Envi}{\Field}    &   {\footnotesize \mbox{network configuration}} \\
\act & \BNFcce &  \deviceId \;\BNFmid\; \envact    &   {\footnotesize \mbox{action label}} \\
\end{array}\\
\hline\\[-8pt]
\textbf{Environment well-formedness:}\\
\begin{array}{l}
\wfn{\EnviS{\Topo}{\Sens}} \textrm{~~holds if $\Topo,\Sens$ have same domain, and $\Topo$'s values do not escape it.}
\\
\end{array}\\
\hline\\[-8pt]
%%%%%%%%%%
%%%  REDUCTION RULES
\textbf{Transition rules for network evolution:} \hfill
  \boxed{\nettran{\Cfg}{\act}{\Cfg}}
  \\[0.2cm]
\vspace{0.5cm}
\begin{array}{c}
\netopsemRule{N-FIR}
                 {\quad \Envi=\EnviS{\Topo}{\Sens}
                   \quad \Topo(\deviceId)= \overline\deviceId 
                  \quad \bsopsem{\deviceId}{\filter(\Field)(\deviceId)}{\emain}{\vtree} \; (\mbox{w.r.t.} \; \Sens(\deviceId))
                  \quad
                 \Field_1=\envmap{\overline\deviceId}{\{\envmap{\deviceId}{\vtree}\}}}
                 {\nettran{\SystS{\Envi}{\Field}}{\deviceId}{\SystS{\Envi}{\mapupdate{\filter(\Field)}{\Field_1}}}%\nopsem{\EnviS{\Topo}{\Sens}}{\Field}{\deviceId}{\mapupdate{\Field}{\deviceId}{\vtree}}
                 }
\skiptransition
\netopsemRule{N-ENV}
                 {\qquad \wfn{\Envi'}\qquad \Envi'=\EnviS{\Topo}{\envmap{\overline\deviceId}{\overline\senstate}} \qquad
                  %\bsopsem{\senstate_1}{\emptyset}{\emain}{\vtree_1}\quad\cdots\quad\bsopsem{\senstate_n}{\emptyset}{\emain}{\vtree_n} \qquad
                  \Field_0=\envmap{\overline\deviceId}{\emptyset}
                 }
                 {\nettran{\SystS{\Envi}{\Field}}{\envact}{\SystS{\Envi'}{\mapupdate{\Field_0}{\Field}}}
                 }\\[-10pt]
\end{array}\\
%\textbf{Initial device state $\initialvtree$:}\\[0.2cm]
%\begin{array}{c}
%
%\qquad\qquad\netopsemRule{Init}
%                 {\senstate_0(\snsname)=\initialOf{\typeof{\snsname}}}
%                 {\bsopsem{\senstate_0}{\emptyset}{\emain}{\initialvtree}}
%
%\end{array}
%
\end{array}$}
} %%%\capskip
 \caption{Small-step operational semantics for network evolution.} \label{fig:networkSemantics}
\end{figure}

We now provide an operational semantics for the evolution of whole networks, namely,
for modelling the distributed evolution of computational fields over time.
Figure \ref{fig:networkSemantics} (top) defines key syntactic elements to this end.
$\Field$ models the overall status of the devices in the network at a given time, as a map from device identifiers to value-tree environments. From it we can define the state of the field at that time by summarising the current values held by devices as the partial map from device identifiers to values defined by $\fvalue(\deviceId) = \vrootOf{\Field(\deviceId)(\deviceId)}$ if $\Field(\deviceId)(\deviceId)$ exists.
$\Topo$ models \emph{network topology}, namely, a directed neighbouring graph, as a map from device identifiers to set of identifiers.
$\Sens$ models \emph{sensor (distributed) state}, as a map from device identifiers to (local) sensors (i.e., sensor name/value maps).
Then, $\Envi$ (a couple of topology and sensor state) models the system's environment.
So, a whole network configuration $\Cfg$ is a couple of a status field and environment.

$\filter(\cdot)$ in Figure \ref{fig:networkSemantics} (bottom), rule \ruleNameSize{[N-FIR]}, models a filtering operation that clears out old stored values from the value-tree environments in $\Field$, implicitly based on space/time tags.
We use the following notation for status fields. Let $\envmap{\overline\deviceId}{\Trees}$ denote the map sending each device identifier in $\overline\deviceId$ to the same value-tree environment $\Trees$. Let $\mapupdate{\Trees_0}{\Trees_1}$ denote the value-tree environment with domain $\domof{\Trees_0} \cup \domof{\Trees_1}$ coinciding with $\Trees_1$ in the domain of $\Trees_1$ and with $\Trees_0$ otherwise. Let $\mapupdate{\Field_0}{\Field_1}$ denote the status field with the \emph{same domain} as $\Field_0$ made of $\envmap{\deviceId}{\mapupdate{\Field_0(\deviceId)}{\Field_1(\deviceId)}}$ for all $\deviceId$ in the domain of $\Field_1$, $\envmap{\deviceId}{\Field_0(\deviceId)}$ otherwise.

We define network operational semantics in terms of small-steps transitions of the kind $\nettran{\Cfg}{\act}{\Cfg'}$, where $\act$ is either a device identifier in case it represents its firing, or label $\envact$ to model any environment change.
This is formalised in Figure \ref{fig:networkSemantics} (bottom).
Rule \ruleNameSize{[N-FIR]} models a computation round (firing) at device $\deviceId$: it takes the local value-tree environment filtered out of old values $\filter(\Field)(\deviceId)$; then by the single device semantics it obtains the device's value-tree $\vtree$, which is used to update the system configuration of $\deviceId$'s neighbours--the local sensors $\Sens(\deviceId)$ are implicitly used by the auxiliary function
$\builtinop{\oname}{\deviceId}{\Trees}$ that gives the semantics to the built-in functions.
Rule \ruleNameSize{[N-ENV]} takes into account the change of the environment to a new well-formed environment $\Envi'$. Let $\overline\deviceId$ be the domain of $\Envi'$. We first construct a status field $\Field_0$ associating to all the devices of $\Envi'$ the empty context $\emptyset$. Then, we adapt the existing status field $\Field$ to the new set of devices: $\mapupdate{\Field_0}{\Field}$ automatically handles removal of devices, map of new devices to the empty context, and retention of existing contexts in the other devices.

\begin{example} \label{ex:operational}
	In a possible implementation (by adding a ``time tag'' to every value tree $\timeF{\vtree}$ and action label $\timeF{\act}$) we can define $\filter(\Field)$ as the mapping from $\overline\deviceId$ to $\overline{\filter(\Trees)}$ where
	\[
	\filter(\Trees) = \{\envmap{\deviceId}{\vtree} \in \Trees: ~ \timeF{\vtree} \geq \timeF{\act} - \decay\}
	\]
	(recall that $\decay$ is the decay parameter).
	
	Furthermore, we can proceed in analogy with Example \ref{ex:denotational} and define a sequence of network evolution rules from a set of paths $\PathS$ together with a neighbouring predicate between positions and time tags for firings. In particular:
	\begin{itemize}
		\item We introduce an occurrence of Rule \ruleNameSize{[N-ENV]} updating with the topology $\Topo$ given by
		\[
		\Topo(\deviceId) = \{ \deviceId' : ~ \neighbour{\PathS(\deviceId)(t)}{\PathS(\deviceId')(t)}\}
		\]
		for any time $t$ corresponding to an activation change for a device (i.e. a border of an interval in which a path $\PathS(\delta)$ is defined);
		\item We also introduce an occurrence of Rule \ruleNameSize{[N-ENV]} as above for any time $t_\eventId$ corresponding to the firing $\eventId$ of a device, each of them followed by
		\item an occurrence of Rule \ruleNameSize{[N-FIR]} on device $\deviceId_\eventId$.
	\end{itemize}
	The above sequence of rules is to be intended as sorted time-wise.
\end{example}

\newcommand{\tcomp}{t_{\texttt{c}}}

\section{Properties of HFC}\label{sec-properties}

In this section we present the main properties of HFC, namely:
1) type preservation and domain alignment, intuitively meaning that HFC is ``safe'' in that it maintains lexical scoping in its handling of fields and computations with fields, and
2) adequacy and full abstraction, intuitively meaning that any aggregate-level program described within the denotational semantics is correctly implemented by the corresponding local actions of the operational semantics.

We shall defer all the proofs to Appendix \ref{apx-properties}, focusing here on statements and definitions. 
We remark that throughout this paper we assume that the execution of every firing event $\eventId$ terminates and is instantaneous and that every event happens at a distinct moment in time. 
Termination of the execution of a firing event can be enforced by means of several different techniques (see among others \cite{Gasarch:Termination}), whose specific details fall outside the scope of this paper.

\subsection{Device Computation Type Preservation and Domain Alignment}\label{sec-preservation}

Here we formally state the device computation type preservation and domain alignment properties (cf. Section~\ref{sec-typing}) for the HFC calculus. In order to state these properties we introduce the notion of well-formed value-tree environment for an expression.
 
Given a closed expression $\e$, a local-type-scheme environment $\LTStypEnv$, a type environment $\TtypEnv=\overline{\xname}:\overline{\type}$, and a type $\type$ such that $\expTypJud{\LTStypEnv}{\TtypEnv}{\e}{\type}$ holds, the set $\WFVT{\LTStypEnv}{\TtypEnv}{\e}$ of the \emph{well-formed value-trees} for $\e$ is inductively defined as follows. $\vtree \in \WFVT{\LTStypEnv}{\TtypEnv}{\e}$ if and only if $\anyvalue = \vrootOf{\vtree}$ has type $T$ and
\begin{itemize}
	\item if $\e$ is a value, $\vtree$ is of the form $\mkvtree{}{\anyvalue}{}$;
	\item if $\e = \nbrK\{\e_1\}$, $\vtree$ is of the form $\mkvtree{}{\anyvalue}{\vtree_1}$;
	\item if $\e = \repK(\e_1)\{(\xname) \; \toSymK \; \e_2\}$, $\vtree$ is of the form $\mkvtree{}{\anyvalue}{\vtree_1,\vtree_2}$ where $\vtree_2$ is well-formed for $\e_2$ with the additional assumption $\xname : \type$;
	\item if $\e = \e_{n+1}(\overline{\e})$, $\vtree$ is of one of the following two forms:
	\begin{itemize}
		\item $\mkvtree{}{\anyvalue}{\overline{\vtree}, \vtree_{n+1}}$ where $\funvalue = \vrootOf{\vtree_{n+1}}$ is a built-in operator,
		\item $\mkvtree{}{\anyvalue}{\overline{\vtree}, \vtree_{n+1}, \vtree_{n+2}}$ where $\funvalue$ is not a built-in operator and $\vtree_{n+2}$ is well-formed with respect to $\e_{n+2} = \body{\funvalue}$ with the additional assumptions that $\args{\funvalue} : \overline{\type}'$ where $\overline{\e} : \overline{T}'$.
	\end{itemize}
\end{itemize}
In the above definition, $\vtree_i$ is always assumed to be in the corresponding $\WFVT{\LTStypEnv}{\TtypEnv}{\e_i}$. The set $\WFVTE{\LTStypEnv}{\TtypEnv}{\e}$ of the \emph{well-formed value-tree environments} for $\e$ is such that $\Trees \in \WFVTE{\LTStypEnv}{\TtypEnv}{\e}$ if and only if $\Trees = \envmap{\overline{\deviceId}}{\overline{\vtree}}$ where each $\vtree_i$ is in $\WFVT{\LTStypEnv}{\TtypEnv}{\e}$.

As these notions are defined we can now formally state the type preservation and domain alignment properties (cf.\ Section~\ref{sec-typing}).

\begin{thm}[Device computation type preservation and domain alignment] \label{the-DeviceTypePreservationAndDomainAlignment}
	Let $\TtypEnv=\overline{\xname}:\overline{\type}$, $\expTypJud{\LTStypEnv}{\emptyset}{\overline{\anyvalue}}{\overline{\type}}$, so that $\lengthOf{\overline{\anyvalue}}=\lengthOf{\overline{\xname}}$.
	If $\expTypJud{\LTStypEnv}{\TtypEnv}{\e}{\type}$, $\Trees\in\WFVTE{\LTStypEnv}{\TtypEnv}{\e}$ and $\bsopsem{\deviceId}{\Trees}{\applySubstitution{\e}{\substitution{\overline{\xname}}{\overline{\anyvalue}}}}{\vtree}$, then:
	\begin{enumerate}
		\item 
		$\vtree \in \WFVT{\LTStypEnv}{\TtypEnv}{\e}$,
		\item
		$\expTypJud{\LTStypEnv}{\emptyset}{\vrootOf{\vtree}}{\type}$, and
		\item
		if $\vrootOf{\vtree}$ is a neighbouring field value $\fvalue$ then $\domof{\fvalue}=\domof{\Trees}\cup\{\deviceId\}$. 
	\end{enumerate}
\end{thm}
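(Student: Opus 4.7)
The plan is to proceed by structural induction on the derivation of the big-step judgement $\bsopsem{\deviceId}{\Trees}{\applySubstitution{\e}{\substitution{\overline{\xname}}{\overline{\anyvalue}}}}{\vtree}$, proving the three claims simultaneously through case analysis on the last rule applied. Two preliminary lemmas would be useful: (i) a standard substitution lemma for the type system, stating that replacing variables by closed values of matching types preserves typing, proved by straightforward induction on the typing derivation; and (ii) an alignment lemma asserting that if $\Trees \in \WFVTE{\LTStypEnv}{\TtypEnv}{\e}$ then the projected environments $\piIof{i}{\Trees}$ and $\piBof{\funvalue}{\Trees}$ are themselves well-formed for the relevant subexpressions of $\e$, and that in those cases where the shape of every tree in $\Trees$ forces the $i$-th subtree to exist (e.g.\ rules for $\nbrK$, $\repK$, and the argument/function positions of an application), the projection $\piIof{i}{\Trees}$ preserves the domain of $\Trees$.

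The cases [E-LOC] and [E-FLD] are immediate: the former from the typing of $\lvalue$, the latter because the domain restriction $\proj{\fvalue}{\domof{\Trees}\cup\{\deviceId\}}$ is built into the rule. For [E-B-APP], [E-NBR], and [E-REP], the induction hypothesis on each sub-derivation yields well-formed subtrees of the correct types; combining this with our assumptions on the codomain and domain behaviour of $\builtinop{\oname}{\deviceId}{\Trees}$ gives typing and domain of the final result, while the alignment lemma ensures that the projected $\piIof{i}{\Trees}$ share the domain of $\Trees$, so any neighbouring field value produced inherits the target domain $\domof{\Trees}\cup\{\deviceId\}$. For [E-REP] the substitution lemma is additionally invoked to accommodate the replacement of $\xname$ by $\lvalue_0$.

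The delicate case is [E-D-APP], where the body of the evaluated function $\funvalue$ is run with respect to the further restricted environment $\piBof{\funvalue}{\Trees}$. The substitution lemma plus induction on the arguments and on the functional expression handle well-formedness and typing smoothly, but for the domain-alignment claim the induction yields only $\domof{\fvalue}=\domof{\piBof{\funvalue}{\Trees}}\cup\{\deviceId\}$, whereas the theorem demands $\domof{\Trees}\cup\{\deviceId\}$. Closing this gap is the principal obstacle and, I anticipate, requires an auxiliary \emph{function-sameness} invariant: any expression of type $(\overline{\type})\rightarrow\ftype$ evaluates to the same syntactic function value at every aligned device. This rests on the deliberate asymmetry between $\rtype$ and $\stype$ in the grammar of types, which prevents functions returning fields from being themselves returned by other functions, stored by $\repK$, transmitted by $\nbrK$, or branched on via the $\muxK$-based encoding of $\ifK$, together with the T-A-FUN restriction that anonymous-function free variables have local type. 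Consequently an expression of type $(\overline{\type})\rightarrow\ftype$ is forced to be a literal function or a variable whose bound value is, by induction applied at the enclosing call site, uniform across $\Trees$. Granted function sameness, $\piBof{\funvalue}{\Trees}$ and $\Trees$ coincide in domain whenever $\funvalue$'s return type is $\ftype$, and the third claim follows; the first two claims are then obtained from the inductive hypothesis on the body evaluation together with the definition of $\WFVT{\LTStypEnv}{\TtypEnv}{\e}$ in the application case.
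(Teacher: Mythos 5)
Your proposal is correct and follows essentially the same route as the paper: induction over the (syntax-directed) evaluation, a standard substitution lemma, and, for the delicate \ruleNameSize{[E-D-APP]} case with field return type, the observation that the $\rtype$/$\stype$ asymmetry in the type grammar forces an expression of type $(\overline{\type})\rightarrow\ftype$ to be (after substitution) a fixed syntactic function value, so that $\piBof{\funvalue}{\Trees}$ does not shrink the domain. Your explicit ``function-sameness'' invariant and projection/alignment lemma are just a more spelled-out packaging of what the paper argues directly from the syntax and the definition of well-formed value-tree environments.
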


\begin{proof}
	See Appendix \ref{apx-properties}.
\end{proof}

%Note that the type system does not guarantee that evaluation will complete:
%   for example, a program such as {\tt (def nohalt (x) (nohalt (+ x 1)))} {\tt (nohalt 0)},
%   which infinitely recurses, is well-typed.

\subsection{Adequacy and full abstraction}\label{sec-adequacy}

We are now able to prove that the denotational semantics introduced in Section \ref{sec-denotational-semantics} is adequate and fully abstract with respect to the operational semantics introduced in Section \ref{sec-calculus-operational-semantics}. We shall prove these properties for monomorphic types, which will imply that given any parametric type \emph{all of its possible instantiations} will satisfy the same property.

The notions of \emph{adequacy} and \emph{full-abstraction} are presented in literature in many (slightly) different forms (see e.g., \cite{Curien:FullAbstraction,Stoughton:FullyAbstract}), none of them fitting without modifications into the setting of HFC. In order to fix a reference, for an ML-like calculus we say that:
\begin{itemize}
	\item
	The denotational semantics is \emph{adequate} iff for any expression $\e : \type$,
	\begin{itemize}
		\item $\e$ converges operationally iff it converges denotationally,
		\item if $\e$ evaluates to $\anyvalue$ then their denotations are equal ($\llbracket \e \rrbracket = \llbracket \anyvalue \rrbracket$),
		\item and the converse of the last property holds whenever $\type$ is a built-in type.
	\end{itemize}
	\item
	\emph{Full abstraction} holds iff given any two expressions $\e_1$, $\e_2$ of a certain type $\type$, their denotations coincide ($\llbracket \e_1 \rrbracket = \llbracket \e_2 \rrbracket$) if and only if for all contexts $\e'$ of built-in type with free variable $\xname : \type$, $\applySubstitution{\e'}{\substitution{\xname}{\e_1}}$ and $\applySubstitution{\e'}{\substitution{\xname}{\e_2}}$ evaluate to the same value.
\end{itemize}
Remark that the forward direction of full abstraction is implied by adequacy.

These notions need to be adjusted for the field calculus in order to accommodate for the influence of the environment and of the previous rounds of computations. In general, we achieve this roughly by adding a ``for all environments'' quantifier inside any operational or denotational statement.

Consider a program $\emain$ interpreted both operationally and denotationally, where each event $\eventId$ is in bijection with one occurrence of rule \ruleNameSize{[N-FIR]}. Let $\Field_\eventId$, $\Topo_\eventId$ denote the status of the status field and topology just before the occurrence corresponding to $\eventId$. Let $\Trees_\eventId = \filter(\Field_\eventId)(\deviceId_\eventId)$ denote the value-tree environment used in the computation of the firing corresponding to $\eventId$, and let $\vtree_\eventId$ denote the outcome of this computation.

We say that the denotational environment is \emph{coherent} with the operational evolution of a network if and only if:
\begin{enumerate}
	\item For each $\eventId$ in $\EventS$, $\Trees_\eventId = \left\{\envmap{\deviceId_{\eventId'}}{\vtree_{\eventId'}}: ~ \neighbour{\eventId}{\eventId'} \right\}$. This is equivalent to the assertion that for each $\eventId$, $\eventId'$ in $\EventS$, $\neighbour{\eventId}{\eventId'}$ holds if and only if:
	\begin{itemize}
		\item $\deviceId_\eventId \in \Topo_{\eventId'}(\deviceId_{\eventId'})$;
		\item there is no further $\eventId''$ between $\eventId'$ and $\eventId$ such that $\deviceId_{\eventId'} = \deviceId_{\eventId''}$ and $\deviceId_\eventId \in \Topo_{\eventId''}(\deviceId_{\eventId''})$;
		\item $\filter(\Field_\eventId)$ does not filter out $\eventId'$ (referring to Example \ref{ex:operational}: $t_{\eventId'} \geq t_\eventId - \decay$).
	\end{itemize}
	\item $\builtindenot{B}{\oname}$ operates pointwise on its arguments (i.e. does not incorporate communication between events) and correctly translates the behaviour of $\builtinop{\oname}{\deviceId_\eventId}{\Trees}$, that is:
	\[
	\denotexp{ \builtinop{\oname}{\deviceId_\eventId}{\Trees}(\overline\anyvalue) }{}{\eventS} = \builtindenot{B}{\oname}\left( \denotexp{\overline\anyvalue}{}{\eventS} \right)
	\]
	for all values $\overline\anyvalue$ of the correct type and $\eventS$ matching $\domof{\Trees}$.
\end{enumerate}

We remark that the possible implementations outlined in Examples \ref{ex:denotational} and \ref{ex:operational} are coherent.

\begin{thm}[Adequacy] \label{thm:Adequacy}
	Assume that the denotational environment is coherent with the operational evolution of the network and $\e$ is well-typed (that is, $\expTypJud{\LTStypEnv}{\emptyset}{\e}{\type}$).
	
	Then $\denotexp{\e}{}{\EventS}(\eventId) = \denotexp{\anyvalue_\eventId}{}{\EventS}(\eventId)$ for each $\eventId$ in $\EventS$, where $\anyvalue_\eventId = \vrootOf{\vtree_\eventId}$ is the operational outcome of fire $\eventId$.
\end{thm}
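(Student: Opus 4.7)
The plan is to proceed by structural induction on the expression $\e$, strengthened with a subsidiary induction along the causal DAG induced by $\textit{neigh}$, so that at each event the induction hypothesis is available for all strictly earlier events (the termination assumption of Remark~\ref{rmk:termination} guarantees the recursion through function bodies is also well-founded). The key strengthening I would state as a lemma is that, for every sub-derivation occurring in the operational evaluation at $\eventId$ producing a value-tree $\vtree_{\e',\eventId}$ for a subexpression $\e'$, the value-tree environment $\Trees_{\e',\eventId}$ supplied to that sub-derivation coincides with $\{\envmap{\deviceId_{\eventId'}}{\vtree_{\e',\eventId'}} : \neighbour{\eventId}{\eventId'} \text{ and } \eventId' \text{ evaluates the corresponding occurrence of } \e'\}$. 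From this, together with the inductive hypothesis that $\denotexp{\e'}{}{\EventS}(\eventId) = \denotexp{\vrootOf{\vtree_{\e',\eventId}}}{}{\EventS}(\eventId)$ on every event where $\e'$ is actually evaluated, the statement of the theorem follows by instantiating $\e' = \e$ at each $\eventId$.

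For the base cases---local values, neighbouring field values, built-in names, declared names, and anonymous functions---rules \ruleNameSize{[E-LOC]}, \ruleNameSize{[E-FLD]} produce a value-tree whose root equals the value (with the domain restriction in \ruleNameSize{[E-FLD]} matching the $\predevices{\EventS}(\eventId)$ domain restriction performed when evaluating field arguments denotationally); and functional names denote constant functional field evolutions whose second component is exactly the denotational function being invoked. For $\nbrK\{\e'\}$, the invariant guarantees that $\piIof{1}{\Trees}$ ranges over precisely $\{\nbrdevice{\eventId}{\deviceId} : \deviceId \in \predevices{\EventS}(\eventId)\}$, and then by IH on $\e'$ the assembled neighbouring field $\mapupdate{\vrootOf{\Trees_1}}{\envmap{\deviceId}{\vrootOf{\vtree_1}}}$ coincides pointwise with $\denotf{\deviceId \in \predevices{\EventS}(\eventId)}{}\denotexp{\e'}{}{\EventS}(\nbrdevice{\eventId}{\deviceId})$. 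For $\repK(\e_1)\{(\xname)\toSymK \e_2\}$, the invariant forces the stored value $\vrootOf{\piIof{2}{\Trees}}(\deviceId)$ to equal the result produced at $\repdevice{\eventId}$, which is exactly what $\shift{\cdot}$ delivers in the denotational rule; an inner induction on the number of unfoldings $\builtindenot{R}{\cdot}_n$ matches the two fixpoint constructions, using the fact that for source events $\lvalue_0 = \lvalue_1$ operationally corresponds to $\builtindenot{R}{\cdot}_0$ denotationally. For built-in application the case is immediate from the coherence assumption on $\builtindenot{B}{\oname}$.

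The hardest case is function application $\e_0(\overline{\e})$ with $\vrootOf{\vtree_0} = \funvalue$ a user-defined or anonymous function, since here the denotational semantics explicitly restricts the computation of the body to the cluster $\EventS(\e_0,\eventId) = (\denotexp{\e_0}{}{\EventS})^{-1}(\denotexp{\e_0}{}{\EventS}(\eventId))$, whereas operationally the restriction is performed implicitly by the projection $\piBof{\funvalue}{\Trees}$. The main obstacle is to show that these two alignment mechanisms coincide: I would prove, using Remark~\ref{rmk:equality} (syntactic equality of functional values) together with the strengthened invariant applied to $\e_0$, that $\piBof{\funvalue}{\Trees}$ selects exactly the value-trees produced at neighbours $\eventId' \in \EventS(\e_0,\eventId)$ whose last evaluation of the same occurrence of $\e_0$ produced $\funvalue$. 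Once this is established, the body $\body{\funvalue}$ is evaluated at $\eventId$ with a value-tree environment that satisfies the invariant relative to the cluster $\EventS(\e_0,\eventId)$, so by IH on $\body{\funvalue}$ (with arguments substituted according to $\vrootOf{\overline{\vtree}}$, which by IH have the same denotation as $\overline{\e}$) the operational root matches $\sndK(\denotexp{\e_0}{}{\EventS}(\eventId))(\proj{\denotexp{\overline{\e}}{}{\EventS}}{\EventS(\e_0,\eventId)})(\eventId)$. Theorem~\ref{the-DeviceTypePreservationAndDomainAlignment} is invoked to guarantee that any neighbouring field values flowing through substitutions have domains consistent with the restricted cluster, and the Restriction property stated in Section~\ref{ssec:properties} justifies that isolating the body's computation to $\EventS(\e_0,\eventId)$ does not change its outcome at events inside the cluster.
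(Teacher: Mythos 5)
Your proposal is correct and follows essentially the same route as the paper's proof: a structural induction on $\e$ carried out simultaneously for all coherent operational/denotational environment pairs, an inner induction along the event DAG to match the $\builtindenot{R}{\cdot}_n$ fixpoint for $\repK$, and, in the function-application case, the identification of the operational alignment $\piBof{\funvalue}{\Trees}$ with the denotational cluster $\EventS(\e_0,\eventId)$ via syntactic equality of function values. The only presentational difference is that the paper formalises the strengthening as a substitution lemma (free variables instantiated by per-event values whose denotations agree with the assumed field evolutions, plus an explicit recursion-depth induction matching $\builtindenot{D}{\fname}_n$ for declared functions), whereas you phrase it as an invariant on the value-tree environments; these amount to the same bookkeeping, and your version implicitly uses the substitution step when you apply the induction hypothesis to the instantiated function body.
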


\begin{proof}
	See Appendix \ref{apx-properties}.
\end{proof}

We say that \emph{full abstraction} holds for a field calculus iff given any two expression $\e_1$, $\e_2$ of a certain type $\type$, $\denotexp{\e_1}{}{\eventS} = \denotexp{\e_2}{}{\eventS}$ for all $\eventS$ if and only if for all environments $\Envi$ and contexts $\e'$ of built-in local type $\builtintype$ with free variable $\xname : \type$, $\applySubstitution{\e'}{\substitution{\xname}{\e_1}}$ and $\applySubstitution{\e'}{\substitution{\xname}{\e_2}}$ evaluate to the same value trees in each firing.

The operational and denotational semantics hereby presented satisfy the full abstraction property, \emph{provided that it holds for values of built-in local type}: we say that built-in constructors are faithful iff any two syntactically different expressions $\dcOf{\dc}{\overline{\lvalue}}$, $\dcOf{\dc'}{\overline{\lvalue}'}$ necessarily denote different objects.\footnote{This property fails for example if we include constructors \texttt{Succ}, \texttt{Pred} for type $\ntype$. On the other hand, it can hold for example if we assume to have a distinguished constructor $n$ for every integer $n$.}

\begin{thm}[Full Abstraction] \label{thm:FullAbstraction}
	Suppose that constructors for built-in local types are faithful. Then the full abstraction property holds.
\end{thm}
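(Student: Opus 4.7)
The plan is to establish both implications of the biconditional, with the forward direction ($\Rightarrow$) following essentially as a corollary of adequacy (Theorem \ref{thm:Adequacy}) and the backward direction ($\Leftarrow$) requiring a contrapositive argument by structural induction on the type $\type$. For the forward direction, assume $\denotexp{\e_1}{}{\eventS} = \denotexp{\e_2}{}{\eventS}$ for all $\eventS$. Given any environment $\Envi$ and context $\e'$ of built-in local type, compositionality of $\denotexp{}{}{}$ yields $\denotexp{\applySubstitution{\e'}{\substitution{\xname}{\e_1}}}{}{\EventS}(\eventId) = \denotexp{\applySubstitution{\e'}{\substitution{\xname}{\e_2}}}{}{\EventS}(\eventId)$ for every $\eventId$ in the $\EventS$ induced by the operational evolution; two applications of adequacy then force the value trees produced at each firing to coincide.

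For the backward (contrapositive) direction, suppose $\denotexp{\e_1}{}{\eventS_0} \neq \denotexp{\e_2}{}{\eventS_0}$ for some $\eventS_0$ with a witness event $\eventId \in \eventS_0$; the goal is to construct a context $\e'$ of built-in local type and an environment $\Envi$ whose induced event set extends $\eventS_0$ and which exposes distinct operational value trees. In the base case $\type = \builtintype$, I take $\e' = \xname$ and pick $\Envi$ coherent with an event graph containing $\eventS_0$, which exists by the construction of Example \ref{ex:operational}. Adequacy then implies that the operational outcome $\anyvalue_\eventId$ at $\eventId$ satisfies $\denotexp{\anyvalue_\eventId}{}{}= \denotexp{\e_i}{}{\EventS}(\eventId)$ for $i=1,2$, and faithfulness of built-in constructors ensures the resulting value trees are syntactically distinct.

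For the inductive cases, I design a context that probes the difference and reduces it to the base case. When $\type = \ftypeOf{\stype}$, the two neighbouring fields diverge on some device $\deviceId \in \predevices{\eventS_0}(\eventId)$, and a context extracting that device's component (for instance via $\minHoodK$ composed with a sensor-based mask, or via $\texttt{fold-hood}$ with an appropriate accumulator) projects to a value of type $\stype$, invoking the inductive hypothesis. When $\type = (\overline{\type}) \to \rtype$, two distinct denotations either differ in their syntactic tag, in which case a context wrapping $\xname$ in an equality test against a chosen function literal distinguishes them by Remark \ref{rmk:equality}, or they share a tag (hence the same syntactic function) but diverge on some tuple of input field evolutions; the latter subcase is reduced by exhibiting expressions $\overline{\e}$ whose denotations realize the distinguishing inputs on the relevant domain, forming the context $\xname(\overline{\e})$ and appealing to the inductive hypothesis on $\rtype$.

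The main obstacle lies in this last subcase for function types: I must argue that the calculus is \emph{expressive enough} to realize sufficiently many input field evolutions as denotations of actual expressions, and that restriction to the aligned cluster preserves precisely the distinguishing behaviour. The restriction property from Section \ref{ssec:properties} ensures I only need to match the desired inputs on the cluster where the function is invoked, while coherence of the environment gives full control over sensor readings at each device; together with $\nbrK$ and $\repK$ these suffice to build any field evolution whose values at each event are drawn from an expressible local type. The remaining delicate point is scheduling firings so that the induced $\EventS$ extends $\eventS_0$ without perturbing denotations on $\eventS_0$, which again follows from the locality guarantees of Section \ref{ssec:properties}.
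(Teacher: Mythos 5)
Your forward direction follows the paper's: both reduce it to adequacy. One placement issue, though: to pass from equal denotations of the two contexts to \emph{equal operational values} you need that $\denotexp{\cdot}{}{\eventS}$ is injective on values of built-in local type, and that is exactly where the faithfulness hypothesis is used in the paper's proof. You instead invoke faithfulness in the base case of the converse, where it is not needed (there you go from \emph{unequal} denotations to \emph{unequal} values, which holds trivially since equal values always have equal denotations).

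For the converse your route diverges sharply from the paper's, and this is where the genuine gap lies. The paper needs no induction on types at all: since the built-in $\mathtt{=}$ has type $\forall\tvar.(\tvar,\tvar)\to\btype$ and tests \emph{syntactic} equality (Remark~\ref{rmk:equality}), the single context $\e' = (\e_1 = \xname)$ of type $\btype$ works uniformly for every $\type$ --- by adequacy the operational outcomes $\anyvalue_1 \neq \anyvalue_2$ at the witness event, so $\applySubstitution{\e'}{\substitution{\xname}{\e_1}}$ and $\applySubstitution{\e'}{\substitution{\xname}{\e_2}}$ evaluate there to $\truevalue$ and $\falsevalue$ respectively. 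Your type-directed construction, by contrast, leaves its hardest subcase unproven: for function types with equal tags but ``diverging on some tuple of input field evolutions'' you must realise the distinguishing inputs as denotations of actual expressions, and the claim that sensors, $\nbrK$ and $\repK$ ``suffice to build any field evolution'' is precisely the definability problem on which classical full-abstraction proofs founder; you assert it rather than prove it. (In fact that subcase is vacuous here: the denotation of a function value is the pair of its syntactic tag and a mathematical function determined by that same syntax, so equal tags force equal denotations --- but your proof does not observe this.) The field-type case is similarly left at the level of ``for instance via $\minHoodK$ composed with a sensor-based mask.'' So as written the converse is incomplete; all of its difficulties evaporate if you replace the induction by the equality-operator context, which is the move the paper makes.
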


\begin{proof}
	See Appendix \ref{apx-properties}.
\end{proof}

\section{A Pervasive Computing Example}\label{sec-examples}

We now illustrate the application of field calculus, with a focus on first-class functions, using a pervasive computing example.
In this scenario, people wandering a large environment (like an
outdoor festival, an airport, or a museum) each carry a personal
device with short-range point-to-point ad-hoc capabilities (e.g. a
smartphone sending messages to others nearby via Bluetooth or Wi-Fi).
All devices run a minimal ``virtual machine'' that allows runtime
injection of new programs: any device can initiate a new distributed
process (in the form of a 0-ary anonymous function), which the virtual
machine spreads to all other devices within a specified range (e.g.,
30 meters).
For example, a person might inject a process that estimates crowd
density by counting the number of nearby devices or a process that
helps people to rendezvous with their friends, with such processes likely
implemented via various self-organisation mechanisms.
The virtual machine then executes these using the first-class function
semantics above, providing predictable deployment and execution of an
open class of runtime-determined processes.

\subsection{Virtual Machine Implementation}

\begin{figure}[t]
\begin{Verbatim}[fontsize=\fontsize{8pt}{9pt}, frame=single, commandchars=\\\{\}, codes={\catcode`$=3\catcode`^=7\catcode`_=8}]
\il{Computes a field of minimum distance from 'source' devices}
\km{def} \fn{distance-to} (source) \{  \il{has type:  $(\btype) \to \ntype$}
  \km{rep}(infinity) \{ (d) $\toSymK$ \fn{mux}(source, 0, \fn{min-hood+}( \fn{+[f,f]}(\km{nbr}\{d\}, \fn{nbr-range}()))) \}
\}

\il{Computes a field of pairs of distance to nearest 'source' device, and the most recent value of 'v' there}
\km{def} \fn{gradcast} (source, v) \{   \il{has type:  $\forall \ltvar. (\btype, \ltvar) \to  \ltvar$} 
  \fn{snd}( (x) $\toSymK$
          \km{rep}(x) \{
            (t) $\toSymK$ \fn{mux}(source, \fn{Pair}(0, v),
                       \fn{min-hood+}(\fn{Pair[f,f]}(\fn{+[f,f]}( \fn{nbr-range}(), \km{nbr}\{\fn{fst}(t)\}), \km{nbr}\{\fn{snd}(t)\})))
          \}
       (\fn{Pair}(infinity, v)))
\}

\il{Evaluate a function field, running 'f' from 'source' within 'range' meters, and 'no-op' elsewhere}
\km{def} \fn{deploy} (range, source, g, no-op) \{  \il{has type:  $\forall \ltvar. (\ntype, \btype, ()\to\ltvar, ()\to\ltvar) \to \ltvar$}
    \km{if} (\fn{distance-to}(source) \fn{<} range) \{\fn{gradcast}(source, g)\} \km{else} \{no-op\} ()
\}
    
\il{The entry-point function executed to run the virtual machine on each device}
\km{def} \fn{virtual-machine} () \{  \il{has type:  $() \to \ntype$}
    \fn{deploy}( \fn{sns-range}(), \fn{sns-injection-point}(), \fn{sns-injected-fun}(), () $\toSymK$ 0)
\}
\end{Verbatim}
%\vspace{-0.72cm}
\begin{Verbatim}[fontsize=\fontsize{8pt}{9pt}, frame=single, commandchars=\\\{\}, codes={\catcode`$=3\catcode`^=7\catcode`_=8}]
\il{Sums values of 'summand' into a minimum of 'potential', by descent}
\km{def} \fn{converge-sum} (potential, summand) \{  \il{has type:  $(\ntype, \ntype) \to \ntype$}
  \km{rep}(summand) \{
    (v) $\toSymK$ summand \fn{+} 
           \fn{sum-hood+}( \fn{mux[f,f,l]}( \km{nbr}\{\fn{parent}(potential)\} \fn{=[f,l]} \fn{uid}(), \km{nbr}\{v\}, 0))
  \}				          
\}

\il{Maps each device to the uid of the neighbour with minimum value of 'potential'}
\km{def} \fn{parent} (potential) \{  \il{has type:  $(\ntype) \to \ntype$}
  \fn{snd}( \fn{min-hood}( \fn{Pair[l,f]}( potential, 
                            \fn{mux[f,f,l]}( \km{nbr}\{potential\} \fn{<[f,l]} potential, \km{nbr}\{\fn{uid}()\}, NaN))))
\}

\il{Simple low-pass filter for smoothing noisy signal 'value' with rate constant 'alpha'}
\km{def} \fn{low-pass} (alpha, value) \{  \il{has type:  $(\ntype, \ntype) \to \ntype$}
  \km{rep}(value) \{ (filtered) $\toSymK$ \fn{*}(value, alpha) \fn{+} \fn{*}(filtered, \fn{-}(1, alpha)) \}
\}
\end{Verbatim}
%\vspace{-0.5cm}
\caption{Virtual machine code (top) and application-specific code (bottom).}\label{f:code}\vspace{-10pt}
\end{figure}

The complete code for our example is listed in Figure~\ref{f:code}.
We use the following naming conventions for built-ins:
functions \texttt{{sns-*}} embed sensors that return a value
perceived from the environment (e.g., \texttt{sns-injection-point}
returns a Boolean indicating whether a device's user wants to inject a
function); 
functions \texttt{{*-hood}} yield a local value $\lvalue$ obtained
by aggregating over the neighbouring field value $\fvalue$ in the input (e.g.,
\texttt{{sum-hood}} sums all values in each neighbourhood);
functions \texttt{{*-hood+}} behave the same but exclude the value
associated with the current device;
and built-in functions \texttt{{Pair}}, \texttt{{fst}}, and \texttt{{snd}}
respectively create a pair of locals and access a pair's first and
second component.
%
%Additionally, given a built-in {o} that takes $n\ge 1$ locals an returns a local, the built-ins \texttt{{o[*,...,*]}} are variants of \texttt{{o}} where one or more inputs are fields (as indicated in the bracket, \texttt{l} for local or \texttt{f} for field), and the return value is a field, obtained by applying operator \texttt{{o}} in a point-wise manner. For instance, as \texttt{{=}} compares two locals returning a Boolean, \texttt{{=[f,f]}} is the operator taking two field inputs and returns a Boolean field where each element is the comparison of the corresponding elements in the inputs, and similarly \texttt{{=[f,l]}} takes a field and a local and returns a Boolean field where each element is the comparison of the corresponding element of the field in input with the local.

The first two functions in Figure~\ref{f:code} implement frequently
used self-organisation mechanisms.
As already discussed, function \texttt{{distance-to}}, also known as \emph{gradient}
\cite{clement2003self,original-gradient}, computes a field of minimal
distances from each device to the nearest ``source'' device (those
mapping to {\em true} in the Boolean input field).
%
%This is computed by repeated application of the triangle inequality (via \texttt{{rep}}): at every round, source devices take distance zero, while all others update their distance estimates \texttt{d} to the minimum distance estimate through their neighbours (\texttt{{min-hood+}} of each neighbour's distance estimate \texttt{{nbr}\{d\}} plus the distance to that neighbour \texttt{{nbr-range}}); source and non-source are discriminated by \texttt{{mux}}, a built-in ``multiplexer'' that operates like an \texttt{{if}} but differently from it always evaluates both branches on every device.
%
Note that the process of estimating distances self-stabilises into the desired field of distances, regardless of any transient perturbations or faults~\cite{GradientSelfStabilization}.
The second self-organisation mechanism, \texttt{{gradcast}}, is a
directed broadcast,
achieved by a computation identical to that of
\texttt{{distance-to}}, except that the values are pairs 
(note that \texttt{{Pair[f,f]}} produces a neighbouring field of pairs, not a pair of neighbouring fields), with the
second element set to the value of {\tt v} at the source: \texttt{{min-hood}}
operates on pairs by applying lexicographic ordering, so the second
value of the pair is automatically carried along shortest paths from
the source.  
The result is a field of pairs of distance and most recent value of {\tt v} at
the nearest source, of which only the value is returned.

The latter two functions in Figure~\ref{f:code} use these
self-organisation methods to implement our simple virtual machine.
Code mobility is implemented by function \texttt{{deploy}}, which
spreads a 0-ary function \texttt{g} via \texttt{{gradcast}},
keeping it bounded within distance \texttt{range} from sources, and
holding 0-ary function \texttt{{no-op}} elsewhere.  The corresponding 
field of functions is then executed (note the double parenthesis).
The \texttt{{virtual-machine}} then simply calls
\texttt{{deploy}}, linking its arguments to sensors configuring
deployment range and detecting who wants to inject which functions
(and using \texttt{()$\toSymK$0} as \texttt{{no-op}} function).

In essence, this virtual machine implements a code-injection model
much like those used in a number of other pervasive computing
approaches (e.g.,~\cite{tota,linda,butera})---though of course it has
much more limited features, since it is only an illustrative example.
With these previous approaches, however, all code shares the same (global) lexical scope and
cannot have its network domain externally controlled: this means that injected
code may spread through the network unpredictably and may interact
unpredictably with other injected code that it encounters.  The
extended field calculus semantics that we have presented, however, thanks to the restriction property (Section \ref{ssec:properties}), 
ensures that injected code moves only within the range specified to the
virtual machine and remains lexically isolated from different injected
code, so that no variable can be unexpectedly affected by interactions
with neighbours.

\subsection{Simulation of Example Application}

We further illustrate the example in a simulated scenario, considering a museum whose docents
monitor their efficacy in part by tracking the number of patrons
nearby while they are working.  
To monitor the number of nearby patrons, each docent's device injects
the following anonymous function (of type: $() \to \ntype$):
\begin{Verbatim}[fontsize=\fontsize{8pt}{9pt}, frame=single, commandchars=\\\{\}, codes={\catcode`$=3\catcode`^=7\catcode`_=8}]
() $\toSymK$ \fn{low-pass}(0.5,\fn{converge-sum}( \fn{distance-to}(\fn{sns-injection-point}()), \fn{mux}(\fn{sns-patron()},1,0)))
\end{Verbatim}
This function is an anonymous version of the \texttt{{track-count}} function example in Section~\ref{s:conceptual:fields},
using the same low-pass filtering of summation of a potential field to the docent, 
except that since the function cannot have any arguments, the Boolean fields indicating locations of patrons and docents are instead acquired via virtual sensors.
In particular, in the \texttt{{converge-sum}} function, each device's local value is summed with those
identifying it as their parent (their closest neighbour to the source,
breaking ties with device unique identifiers from built-in function
\texttt{{uid}}), resulting in a relatively balanced spanning tree
of summations with the source at its root.
This very simple version of summation is somewhat noisy on a moving
network of devices \cite{VBDP-SASO2015}, so its output is passed through a simple low-pass
filter, the function \texttt{{low-pass}}, also defined in
Figure~\ref{f:code}(bottom), in order to smooth its output and improve
the quality of estimate.

Figure~\ref{f:snap} shows a simulation of a docent and 250 patrons in a large 100x30 meter museum gallery.
Of the patrons, 100 are a large group of school-children moving together past the stationary docent from one side of the gallery to the other (thus causing a coherent rise and fall in local crowd density), while the rest are wandering randomly.
In this simulation, people move at an average 1 m/s, the docent and all patrons carry personal devices running the virtual machine, executing asynchronously at 10Hz, and communicating via low-power Bluetooth to a range of 10 meters---hence, hop-by-hop communication is needed for longer range interaction.
The simulation was implemented using the {\sc Alchemist}~\cite{PMV-JOS2013} simulation framework and the Protelis~\cite{Protelis15} incarnation of field calculus, updated to the extended version of the calculus presented in this paper.

In this simulation, at time 10 seconds, the docent injects the patron-counting function with a range of 25 meters, and at time 70 seconds removes it.
Figure~\ref{f:snap} shows two snapshots of the simulation, at times 11 (top) and 35 (bottom) seconds, while Figure~\ref{f:count} compares the estimated value returned by the injected process with the true value.
Note that upon injection, the process rapidly disseminates and begins producing good estimates of the number of nearby patrons, then cleanly terminates upon removal.

\begin{figure}[t]
\centering
\subfloat[Simulation snapshots]{%
\begin{minipage}[]{0.403\textwidth}%
%\vspace{-0.8in}%
\fbox{\includegraphics[width=\textwidth]{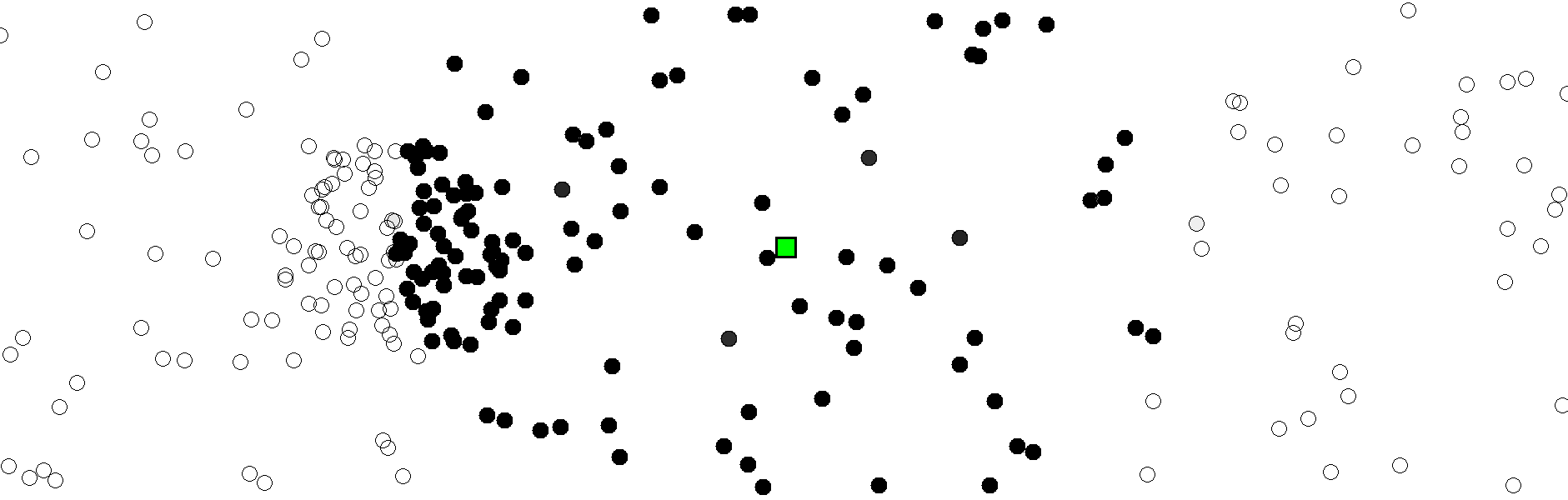}}\\[5pt]
\fbox{\includegraphics[width=\textwidth]{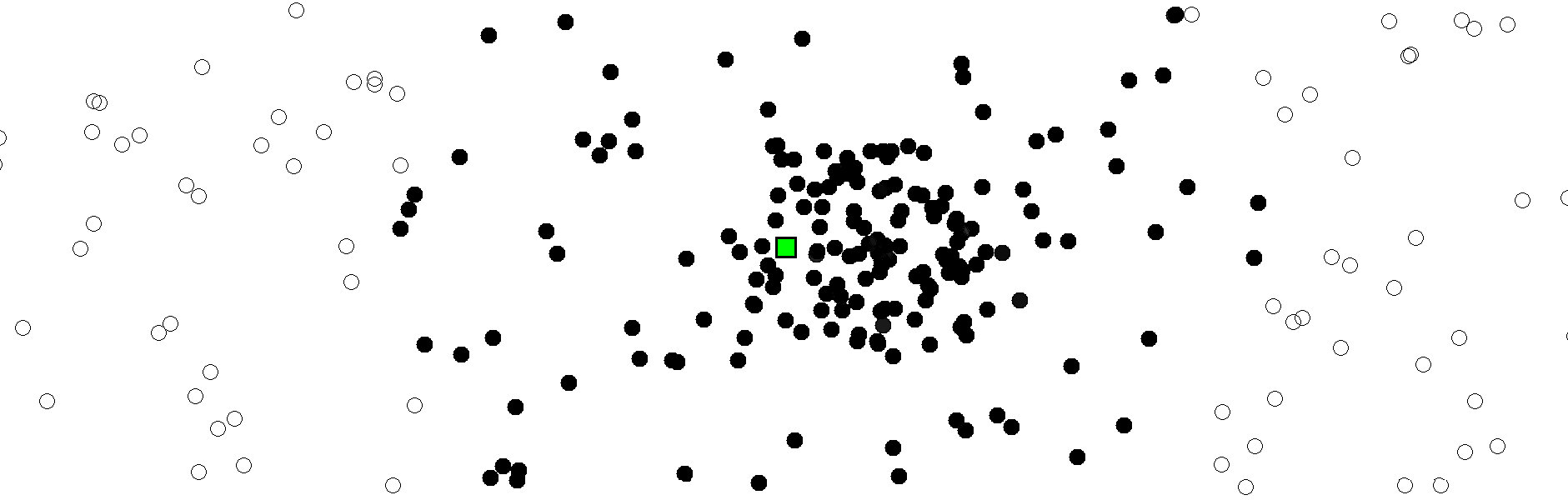}}%
\end{minipage}\label{f:snap}}
\hspace{10pt}
\subfloat[Estimated vs. True Count]{%
\begin{minipage}[]{0.497\textwidth}%
%\vspace{-0.8in}%
\fbox{\includegraphics[width=\textwidth]{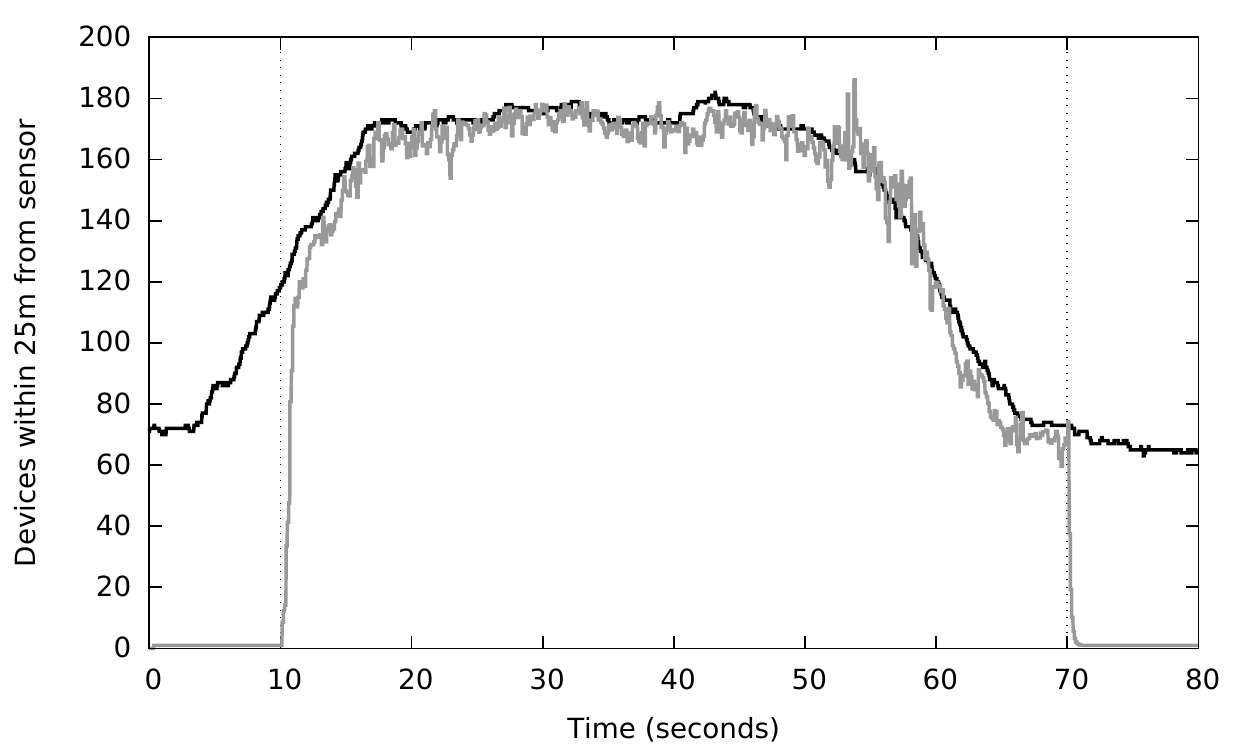}}
\end{minipage}\label{f:count}}
\caption{(a) Two snapshots of museum simulation: patrons (grey) are
  counted (black) within 25 meters of the docent (green). (b)
  Estimated number of nearby patrons (grey) vs. actual number (black)
  in the simulation, during the period where the docent is running the patron counting function (between two dashed lines).}
\end{figure}

All together, these examples illustrate how the coherence between global denotational and local operational semantics in field calculus allows complex distributed applications to be safely and elegantly implemented with quite compact code,
including higher-order operations like process management and runtime deployment of applications in an open environment.

\section{Conclusion and Future Work}
\label{sec-conclusion}

Conceiving emerging distributed systems in terms of computations involving aggregates of devices,
and hence adopting higher-level abstractions for system development, is a thread that has recently received a good deal of attention, as discussed in Section~\ref{sec-concepts}.
Those that best support self-organisation approaches to robust and
environment-independent computations, however, have generally lacked
well-engineered mechanisms to support openness and code mobility
(injection, update, etc.).
Our contribution has been to develop a core calculus,
building on the work presented in \cite{VDB-FOCLASA-CIC2013}, that for the first time smoothly combines
self-organisation and code mobility, by means of the abstraction of a ``distributed function
field''.
This combination of first-class functions with the
domain-restriction mechanisms of field calculus allows the predictable
and safe composition of distributed self-organisation mechanisms at
runtime, thereby enabling robust operation of open pervasive systems.
Furthermore, the simplicity of the calculus enables it to easily serve
as both an analytical framework and a programming framework, and we
have already incorporated this into Protelis~\cite{Protelis15},
thereby allowing these mechanisms to be deployed both in simulation
and in actual distributed systems.

Future plans include consolidation of this work, by extending the
calculus and its conceptual framework, to support an analytical
methodology and a practical toolchain for system development, as
outlined in~\cite{BV-FOCAS2014} and~\cite{BPV-COMPUTER2015}.
We aim to apply our approach to support various application needs
for dynamic management of distributed
processes~\cite{spatialprocesses}, which may also impact the methods
of alignment for anonymous functions.
We also plan to isolate fragments of the calculus that satisfy
behavioural properties such as self-stabilisation, quasi-stabilisation
to a dynamically evolving field, or density independence, following
the approach of \cite{VD-COORD2014-LNCS2014} and \cite{VBDP-SASO2015}.
Finally, these foundations can be applied in developing
APIs enabling the simple construction of complex distributed
applications, building on the work in \cite{BV-FOCAS2014}, \cite{BPV-COMPUTER2015} and~\cite{VBDP-SASO2015} to define a
layered library of self-organisation patterns, and applying these APIs
to support a wide range of practical distributed applications.

%On the side of extensions, a key open issue is how to manage closures, in which a dynamically defined function
%accesses a variable outside of its scope.  The difficulty with
%closures is that the variable may be defined only within a limited
%subdomain of the network.  Since first-class functions can move
%between devices as data using the {\tt nbr} construct, it is possible
%for a function to be evaluated against a domain that includes devices
%where the variable is not defined.
%%
%There are a variety of possible approaches to handling this problem,
%however, which we intend to address in future work.

% Appendix
\appendix
\section{Appendix: Proofs of the Main Theorems} \label{apx-properties}

\subsection{Device Computation Type Preservation and domain Alignment}

\begin{repthm}[Type Preservation and Domain Alignment]{the-DeviceTypePreservationAndDomainAlignment}
	Let $\TtypEnv=\overline{\xname}:\overline{\type}$, $\expTypJud{\LTStypEnv}{\emptyset}{\overline{\anyvalue}}{\overline{\type}}$, so that $\lengthOf{\overline{\anyvalue}}=\lengthOf{\overline{\xname}}$.
	If $\expTypJud{\LTStypEnv}{\TtypEnv}{\e}{\type}$, $\Trees\in\WFVTE{\LTStypEnv}{\TtypEnv}{\e}$ and $\bsopsem{\deviceId}{\Trees}{\applySubstitution{\e}{\substitution{\overline{\xname}}{\overline{\anyvalue}}}}{\vtree}$, then:
	\begin{enumerate}
		\item 
		$\vtree \in \WFVT{\LTStypEnv}{\TtypEnv}{\e}$,
		\item
		$\expTypJud{\LTStypEnv}{\emptyset}{\vrootOf{\vtree}}{\type}$, and
		\item
		if $\vrootOf{\vtree}$ is a neighbouring field value $\fvalue$ then $\domof{\fvalue}=\domof{\Trees}\cup\{\deviceId\}$. 
	\end{enumerate}
\end{repthm}

Observe that the typing rules (in Figure~\ref{fig:SurfaceTyping}) and the evaluation rules  (in Figure~\ref{fig:deviceSemantics}) are syntax directed. Then the proof can be carried out by induction on the syntax of expressions, while using the following standard lemmas.

\begin{lem}[Substitution]\label{lem-substitution}
	Let $\TtypEnv=\overline{\xname}:\overline{\type}$, $\expTypJud{\OStypEnv}{\emptyset}{\overline{\anyvalue}}{\overline{\type}}$. If $\expTypJud{\LTStypEnv}{\TtypEnv}{\e}{\type}$, then $\expTypJud{\LTStypEnv}{\emptyset}{\applySubstitution{\e}{\substitution{\overline{\xname}}{\overline{\anyvalue}}}}{\type}$.
\end{lem}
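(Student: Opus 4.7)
The plan is to prove the Substitution Lemma by structural induction on the derivation of the typing judgement $\expTypJud{\LTStypEnv}{\TtypEnv}{\e}{\type}$. Since the typing rules in Figure~\ref{fig:SurfaceTyping} are syntax-directed, this is equivalent to induction on the structure of $\e$, with case analysis on its outermost form. A key preliminary observation is that because $\expTypJud{\OStypEnv}{\emptyset}{\overline{\anyvalue}}{\overline{\type}}$ holds under the empty type environment, the values $\overline{\anyvalue}$ are closed, so no variable capture can occur when we push substitution under binders. Also note that since $\LTStypEnv$ extends $\OStypEnv$ (see rule \ruleNameSize{[T-PROGRAM]}), the typing of each $\anyvalue_i$ is inherited under $\LTStypEnv$ as well.

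The base and trivial cases are handled as follows. If $\e = \xname_i$, then by rule \ruleNameSize{[T-VAR]} we have $\type = \type_i$ and $\applySubstitution{\e}{\substitution{\overline{\xname}}{\overline{\anyvalue}}} = \anyvalue_i$, whose type is $\type_i$ by hypothesis. The cases in which $\e$ is a constructor application over values (i.e.\ $\dcOf{\dc}{\overline{\lvalue}}$), a built-in or declared function name $\ofname$, or a neighbouring field value $\fvalue$ are immediate because $\e$ contains no free variables and is therefore fixed by the substitution, and its type scheme is looked up in $\LTStypEnv$ or $\OStypEnv$ without reference to $\TtypEnv$. For a function application $\e'(\overline{\e})$ typed by \ruleNameSize{[T-APP]}, and for $\nbrK\{\e'\}$ typed by \ruleNameSize{[T-NBR]}, we simply push the substitution through to the subexpressions, apply the induction hypothesis to each of them, and reassemble using the same rule.

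The interesting cases are the two binding constructs. For an anonymous function $(\overline{\yname}) \toSymK \e'$ typed by \ruleNameSize{[T-A-FUN]}, we may assume (by a routine alpha-renaming of bound variables, which preserves typing) that $\overline{\yname} \cap \overline{\xname} = \emptyset$ and that $\overline{\yname}$ are fresh with respect to $\overline{\anyvalue}$. The premise of \ruleNameSize{[T-A-FUN]} gives $\expTypJud{\LTStypEnv}{\TtypEnv,\,\overline{\yname}:\overline{\type}'}{\e'}{\rtype}$. Applying the induction hypothesis (with the enlarged environment viewed by grouping $\overline{\xname}$ and treating $\overline{\yname}$ as residual variables) yields $\expTypJud{\LTStypEnv}{\overline{\yname}:\overline{\type}'}{\applySubstitution{\e'}{\substitution{\overline{\xname}}{\overline{\anyvalue}}}}{\rtype}$; reapplying \ruleNameSize{[T-A-FUN]} gives the required typing of the substituted function, and the side condition on free variables of field type is preserved because the substituted values are closed (and hence have no free variables at all). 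The $\repK(\e_1)\{(\xname) \toSymK \e_2\}$ case typed by \ruleNameSize{[T-REP]} is handled in an entirely analogous way, alpha-renaming $\xname$ to avoid clashing with $\overline{\xname}$ and then invoking the induction hypothesis on $\e_1$ and $\e_2$ separately.

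The main obstacle is purely bureaucratic rather than conceptual: one must be careful in the binder cases to formalise the extension of the induction hypothesis to enlarged type environments and to justify alpha-renaming. Because $\overline{\anyvalue}$ is closed, no genuine capture problem arises, and because our type system does not generalise over type variables inside an expression (type schemes appear only at declaration sites, through \ruleNameSize{[T-N-FUN]}), there is no need for the more delicate treatment of polymorphism that a full let-polymorphic substitution lemma would require. The result therefore reduces to mechanical case analysis, with the binder cases being the only ones that require a nontrivial renaming step.
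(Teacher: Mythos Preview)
Your proposal is correct and follows essentially the same approach as the paper, which dismisses the lemma as ``straightforward by induction on application of the typing rules.'' You have in fact supplied more detail than the paper, including the standard observation that the induction hypothesis must be strengthened to allow residual variables in the binder cases, which the paper leaves implicit.
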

\begin{proof}
	Straightforward by induction on application of the typing rules for expressions in Figure~\ref{fig:SurfaceTyping}. 
\end{proof}

\begin{lem}[Weakening]\label{lem-weakening}
	Let $\LTStypEnv'\supseteq\LTStypEnv$, $\TtypEnv'\supseteq\TtypEnv$ be such that $\domof{\LTStypEnv'}\cap\domof{\TtypEnv'}=\emptyset$. If $\expTypJud{\LTStypEnv}{\TtypEnv}{\e}{\type}$, then $\expTypJud{\LTStypEnv'}{\TtypEnv'}{\e}{\type}$.
\end{lem}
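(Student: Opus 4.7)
The plan is to proceed by straightforward structural induction on the derivation of $\expTypJud{\LTStypEnv}{\TtypEnv}{\e}{\type}$, equivalently (since the typing rules of Figure~\ref{fig:SurfaceTyping} are syntax-directed) by induction on the structure of $\e$. Throughout, I will maintain the invariant that the two extended environments $\LTStypEnv'$ and $\TtypEnv'$ have disjoint domains, and will silently alpha-rename bound variables of anonymous functions and of $\repK$-expressions so that they are fresh with respect to $\TtypEnv'$ as well as $\TtypEnv$.

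The base cases are immediate: for \ruleNameSize{[T-VAR]}, since $\xname:\type\in\TtypEnv$ and $\TtypEnv\subseteq\TtypEnv'$ we still have $\xname:\type\in\TtypEnv'$; for \ruleNameSize{[T-N-FUN]}, the premise consists of looking up the scheme $\forall\overline{\tvar}\overline{\ltvar}\overline{\rtvar}\overline{\stvar}.\ltype$ associated to $\ofname$, which remains present when we replace $\LTStypEnv$ by the superset $\LTStypEnv'$, and the same instantiating substitutions produce the same $\ltype'$; and for \ruleNameSize{[T-VAL]}, the premises about the constructor type scheme (looked up in the fixed built-in environment $\OStypEnv$) are unaffected, and typing of each subvalue follows by the induction hypothesis.

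For the inductive cases \ruleNameSize{[T-APP]}, \ruleNameSize{[T-REP]}, \ruleNameSize{[T-NBR]}, one simply applies the induction hypothesis to each subexpression with the same $\LTStypEnv'$, $\TtypEnv'$ (extended in the case of \ruleNameSize{[T-REP]} by the fresh binding $\xname:\ltype$, which preserves disjointness of the domains), and reassembles the derivation using the same rule. The only case requiring mild care is \ruleNameSize{[T-A-FUN]} for $(\overline{\xname})\toSymK\e$: its premises require that the free variables $\overline{\yname}=\FV{(\overline{\xname})\toSymK\e}$ have local types under $\TtypEnv$. Since $\TtypEnv'\supseteq\TtypEnv$ agrees with $\TtypEnv$ on these variables, $\TtypEnv'(\overline{\yname})$ are the same local types; then by the induction hypothesis applied with $\TtypEnv'\cup\overline{\xname}:\overline{\type}$ (well-defined after alpha-renaming $\overline{\xname}$ fresh) we get $\expTypJud{\LTStypEnv'}{\TtypEnv',\overline{\xname}:\overline{\type}}{\e}{\rtype}$, and the rule re-applies to yield the required conclusion.

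I do not expect any real obstacle in this proof; the main point worth stating explicitly is the alpha-renaming convention for bound variables, which is needed to invoke the induction hypothesis with a well-formed extended environment in \ruleNameSize{[T-A-FUN]} and \ruleNameSize{[T-REP]}. Everything else is a direct reapplication of the same typing rule with the enlarged environments.
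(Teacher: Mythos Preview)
Your proof is correct and follows exactly the same approach as the paper, which dispatches the lemma with a one-line ``straightforward by induction on application of the typing rules for expressions.'' Your version simply spells out the cases and makes explicit the alpha-renaming convention for bound variables that the paper leaves implicit.
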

\begin{proof}
	Straightforward by induction on application of the typing rules for expressions in Figure~\ref{fig:SurfaceTyping}. 
\end{proof}

\begin{proof}[of Theorem \ref{the-DeviceTypePreservationAndDomainAlignment}]
	We proceed proving points (1)-(3) by simultaneous induction on the syntax of expression $\e$ (given in Figure~\ref{fig:source:syntax}).
	\begin{itemize}
		\item
		$\e \; = \; \fvalue$: This case is not allowed to appear in source programs.
		
		\item
		$\e \; = \; \dcOf{\dc}{\overline{\lvalue}} \; \BNFmid \; \oname \; \BNFmid \; \fname \; \BNFmid \; (\overline{\xname}) \; \toSymK \; \e$: In this case, $\applySubstitution{\e}{\substitution{\overline{\xname}}{\overline{\anyvalue}}} = \anyvalue$ is a local value hence $\vtree = \mkvtree{}{\anyvalue}{}$ by rule \ruleNameSize{[E-LOC]}. Thus $\vtree$ is well-formed for $\e$ and $\vrootOf{\vtree} = \anyvalue$ has type $\type$ by the Substitution Lemma.
		
		\item
		$\e \; = \; \xname$: In this case, $\applySubstitution{\e}{\substitution{\overline{\xname}}{\overline{\anyvalue}}}$ is trivially a value of the correct type $\type$, hence $\vtree = \mkvtree{}{\anyvalue}{}$ is well-formed for $\e$. If $\type$ is a local type, we are done. If $\type$ is a field type, we know by induction hypothesis that $\fvalue = \anyvalue$ has domain $\domof{\Trees'} \cup \{\deviceId\}$ for some $\Trees'$ including $\Trees$ as a subtree (pointwise). Thus $\domof{\fvalue} \supseteq \domof{\Trees}\cup\{\deviceId\}$, hence we can apply rule \ruleNameSize{[E-FLD]} to obtain that $\fvalue$ has domain exactly $\domof{\Trees}\cup\{\deviceId\}$.
		
		\item
		$\e \; = \; \e_{n+1}(\overline{\e})$: In this case, either rule \ruleNameSize{[E-B-APP]} or \ruleNameSize{[E-D-APP]} applies, depending on whether $\e_{n+1}$ evaluates to a built-in operator or not. In both cases, the resulting value-tree $\vtree$ is easily checked to be well-formed for $\e$, and type preservation holds by induction hypothesis together with standard arguments on typing of function applications.
		
		If $\e$ has field type, we also need to check that $\domof{\vrootOf{\vtree}} = \domof{\Trees} \cup \{\deviceId\}$. We have two possibilities:
		\begin{itemize}
			\item $\e_{n+1}$ evaluates to a built-in operator $\oname$: In this case, domain alignment follows from the assumptions on $\builtinop{\oname}{\deviceId}{\Trees}$.
			\item $\e_{n+1}$ evaluates to a user-defined or anonymous function $\funvalue$: In this case, $\e_{n+1}$ cannot be of the form $\repK$, $\nbrK$, or $\dcOf{\dc}{\ldots}$. Furthermore, it cannot neither be of the form $\e'(\overline{\e}')$ since in that case $\e'$ would have type $(\overline{\type}') \to (\overline{\type}) \to \ftype$ which is not allowed by the type system. It follows that $\e_{n+1}$ is in fact equal to $\funvalue$, hence no alignment with neighbours is required.
		\end{itemize}
		
		\item
		$\e \; = \; \nbrK\{\e_1\}$: In this case rule \ruleNameSize{[E-NBR]} applies, thus $\vtree = \mkvtree{}{\fvalue}{\vtree_1}$ where $\fvalue = \mapupdate{\vrootOf{\piIof{1}{\Trees}}}{\envmap{\deviceId}{\vrootOf{\vtree_1}}}$. Then $\vtree$ is well-formed for $\e$, and $\fvalue$ has type $\type = \ftypeOf{\type_1}$ where $\e_1 : \type_1$. Furthermore, $\domof{\fvalue} = \domof{\Trees} \cup \{\deviceId\}$ as required.
		
		\item
		$\e \; = \; \repK(\e_1)\{(\xname) \; \toSymK \; \e_2\}$: In this case rule \ruleNameSize{[E-REP]} applies, thus $\vtree = \mkvtree{}{\lvalue_2}{\vtree_1,\vtree_2}$ is well-formed for $\e$ (where $\lvalue_i$, $\vtree_i$ follows the notation in Figure~\ref{fig:deviceSemantics}, rule \ruleNameSize{[E-REP]}). By induction hypothesis, $\lvalue_1 = \vrootOf{\vtree_1}$ has the same type as $\e_1$ which is $\type$. If $\deviceId \notin \domof{\Trees}$, $\lvalue_0 = \lvalue_1$ also has type $\type$. Otherwise, since $\Trees$ is well-formed for $\e$, $\piIof{2}{\Trees}$ is well-formed for $\e_2$ with the additional assumption that $\xname : \type$, and by induction hypothesis $\lvalue_0 = \vrootOf{\piIof{2}{\Trees}}(\deviceId)$ has the same type as $\e_2$ which is also $\type$. Then by the Substitution Lemma, $\applySubstitution{\e_2}{\substitution{\xname}{\lvalue_0}}$ also has type $\type$ hence by induction hypothesis the same does $\lvalue_2 = \vrootOf{\vtree_2}$, concluding the proof.
	\end{itemize}
\end{proof}

\subsection{Adequacy and full abstraction}

\begin{repthm}[Adequacy]{thm:Adequacy}
	Assume that the denotational environment is coherent with the operational evolution of the network and $\e$ is well-typed (that is, $\expTypJud{\LTStypEnv}{\emptyset}{\e}{\type}$).
	
	Then $\denotexp{\e}{}{\EventS}(\eventId) = \denotexp{\anyvalue_\eventId}{}{\EventS}(\eventId)$ for each $\eventId$ in $\EventS$, where $\anyvalue_\eventId = \vrootOf{\vtree_\eventId}$ is the operational outcome of fire $\eventId$.
\end{repthm}

In order to carry on the induction on the structure of $\e$, we shall prove the following strengthened version instead.

\begin{lem}
	Assume that the denotational environment is coherent with the operational evolution of the network and $\e$ is a well-typed expression with free variables $\overline\xname$ (that is, $\expTypJud{\LTStypEnv}{\overline\xname : \overline\type}{\e}{\type}$).
	Let $\VarS = \envmap{\overline\xname}{\overline\dvalue}$ and $\overline\anyvaluealt^\eventId$ be such that $\dvalue_i(\eventId) = \denotexp{\anyvaluealt_i^\eventId}{}{}(\eventId)$ for all $i$ and events $\eventId$.

	Then $\denotexp{\e}{\VarS}{}(\eventId) = \denotexp{\anyvalue^\eventId}{}{}(\eventId)$ for each $\eventId$ in $\EventS$, where $\anyvalue^\eventId = \vrootOf{\vtree_\eventId}$ is the operational outcome of fire $\eventId$ evaluating $\applySubstitution{\e}{\substitution{\overline\xname}{\overline\anyvaluealt^\eventId}}$.
\end{lem}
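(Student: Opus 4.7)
The plan is to prove the strengthened lemma by structural induction on $\e$, making heavy use of Theorem~\ref{the-DeviceTypePreservationAndDomainAlignment} to guarantee that each sub-evaluation operates on a well-formed value-tree environment of the right type, and invoking the coherence assumption whenever the operational and denotational treatments of the environment must be reconciled. For the base cases, variables fold back to $\VarS$ by the assumption linking $\dvalue_i(\eventId)$ to $\denotexp{\anyvaluealt_i^\eventId}{}{}(\eventId)$; constants (data values $\dcOf{\dc}{\overline{\lvalue}}$, built-in names $\oname$, declared names $\fname$, anonymous functions $(\overline{\xname}) \toSymK \e$) are evaluated by rule \ruleNameSize{[E-LOC]} to $\mkvtree{}{\anyvalue}{}$ and denotationally yield constant field evolutions, with the function-tag component of their denotation recovering the syntactic identity we need by Remark~\ref{rmk:equality}.

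For the inductive step on $\nbrK\{\e\}$ (rule \ruleNameSize{[E-NBR]}), the operational neighbouring field $\fvalue = \mapupdate{\vrootOf{\piIof{1}{\Trees_\eventId}}}{\envmap{\deviceId_\eventId}{\vrootOf{\vtree_{1,\eventId}}}}$ matches the denotational $\denotf{\deviceId \in \predevices{\EventS}(\eventId)}{} \denotexp{\e}{\VarS}{}(\nbrdevice{\eventId}{\deviceId})$ exactly because coherence stipulates that the value-tree environment carried by $\eventId$ collects the value-trees of precisely the events $\eventId'$ with $\neighbour{\eventId}{\eventId'}$; the induction hypothesis on $\e$ supplies agreement pointwise. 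For $\repK(\e_1)\{(\xname) \toSymK \e_2\}$ (rule \ruleNameSize{[E-REP]}), the operational $\lvalue_0$ equals $\vrootOf{\piIof{2}{\Trees_\eventId}}(\deviceId_\eventId)$ when $\repdevice{\eventId}$ exists and $\vrootOf{\vtree_{1,\eventId}}$ otherwise; this mirrors the $\shift{\cdot}$ operation in the denotational semantics. A secondary induction on the position of $\eventId$ in the DAG (events at the same device being linearly ordered by property~3 of the neighbour relation) then aligns the fixpoint iterations $\builtindenot{R}{\cdot}_n$ with the operational history.

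The main obstacle is the function application case $\e'(\overline{\e})$. After applying the induction hypothesis to $\e'$ and each $\e_i$, one splits on whether $\funvalue = \vrootOf{\vtree_{n+1,\eventId}}$ is a built-in operator (rule \ruleNameSize{[E-B-APP]}): here coherence of $\builtindenot{B}{\oname}$ closes the case; or a user-defined or anonymous function (rule \ruleNameSize{[E-D-APP]}). In the latter sub-case the operational body is evaluated against $\piBof{\funvalue}{\Trees_\eventId}$, which by coherence contains the value-trees of exactly those aligned neighbours whose previous evaluation of $\e'$ produced $\funvalue$. This set must be shown to coincide with the cluster $\eventS(\e', \eventId)$ used in the denotational rule for application, which follows from the induction hypothesis applied to $\e'$ (ensuring syntactic equality of roots corresponds to denotational equality of $\e'$ at those events). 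Once the domains are matched, a further induction on recursion depth (well-founded by the termination assumption of Remark~\ref{rmk:termination}, matching the fixpoint definition $\builtindenot{D}{\fname}_n$) and an appeal to the Substitution Lemma~\ref{lem-substitution} let us invoke the main induction hypothesis on $\body{\funvalue}$ restricted to $\eventS(\e', \eventId)$, yielding the desired equality.

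The delicate point is the interplay between operational alignment via $\piB{\funvalue}$ and denotational restriction via clusters: proving these agree rests on the fact---guaranteed by Theorem~\ref{the-DeviceTypePreservationAndDomainAlignment}---that no neighbouring field value escapes its intended domain, so the aligned subtrees really do correspond one-to-one with the events in $\eventS(\e', \eventId)$. Outside of this coordination, all remaining cases reduce to routine unfolding of definitions.
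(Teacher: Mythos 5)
Your proposal is correct and follows essentially the same route as the paper's proof: structural induction on $\e$, with the base cases handled via constant field evolutions and function tags, a secondary induction along the event DAG matching $\builtindenot{R}{\cdot}_n$ to the operational history for $\repK$, a tertiary induction on recursion depth matching $\builtindenot{D}{\fname}_n$ for user-defined functions, and the identification of the operationally aligned environment $\piBof{\funvalue}{\Trees_\eventId}$ with the denotational cluster $\eventS(\e',\eventId)$ via syntactic equality of function tags. No gaps worth noting.
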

\begin{proof}
	First, recall that coherence of denotational and operational environments implies that for each fire $\eventId$, $\Trees_\eventId = \{ \envmap{\deviceId_{\eventId'}}{\vtree_{\eventId'}}: ~ \neighbour{\eventId}{\eventId'} \}$.
	We prove the assertion simultaneously for all possible (pairs of coherent) environments, set of assumptions and event $\eventId$, by induction on the structure of $\e$.
	\begin{itemize}
		\item
		$\e \; = \; \xname_i$: In this case, $\applySubstitution{\e}{\substitution{\overline\xname}{\overline\anyvaluealt^\eventId}} = \anyvaluealt_i^\eventId = \anyvalue^\eventId$ and by hypothesis $\dvalue_i(\eventId)  = \denotexp{\anyvaluealt_i^\eventId}{}{}(\eventId) = \denotexp{\anyvalue^\eventId}{}{}(\eventId)$.
		
		\item
		$\e \; = \; \dcOf{\dc}{\overline{\lvalue}} \; \BNFmid \; \fvalue \; \BNFmid \; \oname \; \BNFmid \; \fname \; \BNFmid \; (\overline{\xname}) \; \toSymK \; \e$: Since $\e$ is already a value, $\anyvalue^\eventId = \e$ and the thesis follows.
		
		\item
		$\e \; = \; \nbrK\{\e_1\}$: By inductive hypothesis $\denotexp{\e_1}{\VarS}{}(\eventId) = \denotexp{\anyvalue_1^\eventId}{}{}(\eventId)$ for each $\eventId$ in $\EventS$, where $\anyvalue_1^\eventId = \vrootOf{\vtree_1^\eventId}$ is the (operational) outcome of $\applySubstitution{\e_1}{\substitution{\overline\xname}{\overline\anyvaluealt^\eventId}}$ in fire $\eventId$. By rule \ruleNameSize{[E-NBR]}, we have that
		\[
		\anyvalue^\eventId = \{ \envmap{\deviceId_{\eventId'}}{\anyvalue_1^{\eventId'}}: ~ (\eventId' = \eventId) \vee (\neighbour{\eventId}{\eventId'} \wedge \deviceId_{\eventId'} \neq \deviceId_\eventId) \}
		\]
		and
		\[
		\denotexp{\anyvalue^\eventId}{}{}(\eventId) = \{ \envmap{\deviceId_{\eventId'}}{\denotexp{\anyvalue_1^{\eventId'}}{}{}(\eventId)}: ~ (\eventId' = \eventId) \vee (\neighbour{\eventId}{\eventId'} \wedge \deviceId_{\eventId'} \neq \deviceId_\eventId) \}.
		\]
		On the other hand, $\denotexp{\e}{\VarS}{}(\eventId)$ is
		\begin{align*}
			\denotexp{\nbrK\{\e_1\}}{\VarS}{}(\eventId)
			&= \denotf{\deviceId \in \predevices{\eventS}(\eventId)}{} \denotexp{\e_1}{\VarS}{}(\nbrdevice{\eventId}{\deviceId})
			\\
			&= \denotf{\deviceId \in \predevices{\eventS}(\eventId)}{} \denotexp{\anyvalue_1^{\nbrdevice{\eventId}{\deviceId}}}{}{}(\nbrdevice{\eventId}{\deviceId})
			\\
			&= \{ \envmap{\deviceId}{\denotexp{\anyvalue_1^{\nbrdevice{\eventId}{\deviceId}}}{}{}(\eventId)}: ~ (\nbrdevice{\eventId}{\deviceId} = \eventId) \vee (\neighbour{\eventId}{\nbrdevice{\eventId}{\deviceId}} \wedge \deviceId \neq \deviceId_\eventId) \}
			= \denotexp{\anyvalue^\eventId}{}{}(\eventId).
		\end{align*}
		
		\item
		$\e \; = \; \repK(\e_1)\{(\yname) \toSymK \e_2\}$: Recall that by rule \ruleNameSize{[E-REP]}, $\anyvalue^\eventId$ is the evaluation of $\applySubstitution{\e_2}{\substitution{\overline\xname}{\overline\anyvaluealt^\eventId}, \substitution{\yname}{\anyvalue^{\eventId^-}}}$ if $\eventId^-$ exists, $\applySubstitution{\e_2}{\substitution{\overline\xname}{\overline\anyvaluealt^\eventId}, \substitution{\yname}{\anyvalue_1^\eventId}}$ otherwise. We prove by induction on $n$ that if $\eventId$ has at most $n$ predecessors (i.e. $\repdevice{\eventId}$ can be applied at most $n$ times) then $\denotexp{\anyvalue^\eventId}{}{}(\eventId) = \builtindenot{R}{\e}_{n+1}(\eventId)$. The thesis will then follow for $n$ greater than the number of events.
		
		If $n = 0$, by induction hypothesis on both $\e_1$ and $\applySubstitution{\e_2}{\substitution{\overline\xname}{\overline\anyvaluealt^\eventId}, \substitution{\yname}{\anyvalue_1^\eventId}}$:
		\begin{align*}
			\denotexp{\anyvalue^\eventId}{}{}(\eventId)
			= \denotexp{\e_2}{\VarS \cup \envmap{\yname}{\denotexp{\e_1}{\VarS}{}}}{}(\eventId)
			= \builtindenot{R}{\e}_1(\eventId).
		\end{align*}
		
		If $n > 0$, $\anyvalue^{\eventId^-}$ has at most $n-1$ predecessors thus we can apply the inductive hypothesis on $n-1$ and $\applySubstitution{\e_2}{\substitution{\overline\xname}{\overline\anyvaluealt^\eventId}, \substitution{\yname}{\anyvalue^{\eventId^-}}}$ to obtain:
		\begin{align*}
			\denotexp{\anyvalue^\eventId}{}{}(\eventId)
			= \denotexp{\e_2}{\VarS \cup \envmap{\yname}{\shift{\builtindenot{R}{\e}_n}}}{}(\eventId)
			= \builtindenot{R}{\e}_{n+1}(\eventId).
		\end{align*}
		
		\item
		$\e \; = \; \e_{n+1}(\overline{\e})$: Suppose that $\overline{\e}$ evaluates to $\overline{\anyvalue}^\eventId$ and $\e_{n+1}$ to $\funvalue^\eventId$ in event $\eventId$. Notice that by induction hypothesis, $\denotexp{\overline\anyvalue^\eventId}{}{}(\eventId) = \denotexp{\overline\e}{\VarS}{}(\eventId)$ and $\denotexp{\funvalue^\eventId}{}{}(\eventId) = \denotexp{\e_{n+1}}{\VarS}{}(\eventId)$.
		
		If $\funvalue^\eventId$ is a built-in operator $\oname$, the thesis follows from coherence of $\builtindenot{B}{\oname}$ with $\builtinop{\oname}{\deviceId}{\Trees}$, together with the induction hypothesis on $\e_1, \ldots, \e_{n+1}$.
		
		If $\funvalue^\eventId$ is an anonymous function, by rule \ruleNameSize{[E-D-APP]} $\funvalue^\eventId(\overline{\anyvalue}^\eventId)$ (hence $\applySubstitution{\e}{\substitution{\overline\xname}{\overline\anyvaluealt^\eventId}}$) evaluates in $\eventId$ to the result $\anyvalue^\eventId$ of $\applySubstitution{\body{\funvalue^\eventId}}{\substitution{\args{\funvalue^\eventId}}{\overline{\anyvalue}^\eventId}}$ as calculated in $\Trees'_\eventId = \piBof{\funvalue_\eventId}{\Trees_\eventId}$. The value-tree environments $\{\Trees'_{\eventId'} : ~ \funvalue^{\eventId'} = \funvalue^\eventId\}$ together define an operational environment $\Envi_{\funvalue^\eventId}$ consisting only of those devices and events which agree on the evaluation of $\e_{n+1}$. In fact, such restricted environment is coherent with the restricted denotational environment $\EventS(\e_{n+1}, \eventId)$:
		\begin{align*}
			\EventS(\e_{n+1}, \eventId)
			& = \{\eventId': ~ \denotexp{\e_{n+1}}{\VarS}{}(\eventId') = \denotexp{\e_{n+1}}{\VarS}{}(\eventId) \} \\
			& = \{\eventId': ~ \denotexp{\funvalue^{\eventId'}}{}{}(\eventId') = \denotexp{\funvalue^\eventId}{}{}(\eventId) \} \\
			& = \{\eventId': ~ \denotexp{\funvalue^{\eventId'}}{}{} = \denotexp{\funvalue^\eventId}{}{} \}
			= \{\eventId': ~ \funvalue^{\eventId'} = \funvalue^\eventId \}
		\end{align*}
		where we used the inductive hypothesis and the facts that denotations of values are constant field evolutions and that function denotations coincide if and only if the functions are syntactically equal (in order).
		
		Thus we can apply the induction hypothesis on $\overline\e$ in $\EventS$ and on $\applySubstitution{\body{\funvalue^\eventId}}{\substitution{\args{\funvalue^\eventId}}{\overline{\anyvalue}^\eventId}}$ in $\EventS(\e_{n+1}, \eventId)$ to obtain:
		\begin{align*}
			\denotexp{\anyvalue^\eventId}{}{}(\eventId)
			& = \denotexp{\body{\funvalue^\eventId}}{\envmap{\args{\funvalue^\eventId}}{\proj{\denotexp{\overline\e}{\VarS}{\EventS}}{\EventS(\e_{n+1}, \eventId)}}}{\EventS(\e_{n+1}, \eventId)}(\eventId)
		\end{align*}
		On the other hand,
		\begin{align*}
			\denotexp{\e}{\VarS}{}(\eventId)
			& = \sndK\left(\denotexp{\e_{n+1}}{\VarS}{}(\eventId)\right) \left( \proj{\denotexp{\overline\e}{\VarS}{}}{\EventS(\e_{n+1}, \eventId)} \right) (\eventId)
			\\
			& = \sndK\left(\denotexp{\funvalue^\eventId}{}{}(\eventId)\right) \left( \proj{\denotexp{\overline\e}{\VarS}{}}{\EventS(\e_{n+1}, \eventId)} \right) (\eventId)
			\\
			& = \left( \denotf{\overline\dvalue}{} \denotexp{\body{\funvalue^\eventId}}{\envmap{\args{\funvalue^\eventId}}{\overline\dvalue} }{\domof{\overline\dvalue}} \right) \left( \proj{\denotexp{\overline\e}{\VarS}{}}{\EventS(\e_{n+1}, \eventId)} \right) (\eventId)
			\\
			& = \denotexp{\body{\funvalue^\eventId}}{\envmap{\args{\funvalue^\eventId}}{\proj{\denotexp{\overline\e}{\VarS}{\EventS}}{\EventS(\e_{n+1}, \eventId)}} }{\EventS(\e_{n+1}, \eventId)} (\eventId) = \denotexp{\anyvalue^\eventId}{}{}(\eventId)
		\end{align*}
		completing the proof in this case.
		
		If $\funvalue^\eventId$ is a user-defined function, we prove by further induction on $n$ that $\denotexp{\anyvalue^\eventId}{}{}(\eventId)$ is equal to $V_n = \denotexp{\body{\funvalue^\eventId}}{\envmap{\args{\funvalue^\eventId}}{\proj{\denotexp{\overline\e}{\VarS}{\EventS}}{\EventS(\e_{n+1}, \eventId)}}, \envmap{\funvalue^\eventId}{\builtindenot{D}{\funvalue^\eventId}_n}}{\EventS(\e_{n+1}, \eventId)} (\eventId)$ whenever the recursive depth of the function call is bounded by $n$. The thesis will then follow by passing to the limit (and expanding the denotation of $\e$ as for anonymous functions).
		
		If $n = 0$, the evaluation of $\applySubstitution{\body{\funvalue^\eventId}}{\substitution{\args{\funvalue^\eventId}}{\overline{\anyvalue}^\eventId}}$ does not involve further evaluation of the defined function $\funvalue^\eventId$. This also holds on the denotational side, giving that
		\[
		V_0 = \denotexp{\body{\funvalue^\eventId}}{\envmap{\args{\funvalue^\eventId}}{\proj{\denotexp{\overline\e}{\VarS}{\EventS}}{\EventS(\e_{n+1}, \eventId)}}}{\EventS(\e_{n+1}, \eventId)} (\eventId)
		= \denotexp{\anyvalue^\eventId}{}{}(\eventId)
		\]
		If instead $n > 0$, the evaluation of $\applySubstitution{\body{\funvalue^\eventId}}{\substitution{\args{\funvalue^\eventId}}{\overline{\anyvalue}^\eventId}}$ involves evaluation of $\funvalue^\eventId$ with recursive depth bounded by $n-1$. Thus by inductive hypothesis we can denote each such call with $\builtindenot{D}{\funvalue^\eventId}_{n-1}$ and the thesis follows.
	\end{itemize}
\end{proof}

\begin{repthm}[Full Abstraction]{thm:FullAbstraction}
	Suppose that constructors for built-in local types are faithful. Then the full abstraction property holds.
\end{repthm}

\begin{proof}
	First notice that given any two values $\anyvalue_1$, $\anyvalue_2$ of the same type $\denotexp{\anyvalue_1}{}{\eventS} = \denotexp{\anyvalue_2}{}{\eventS}$ if and only if $\anyvalue_1 = \anyvalue_2$. If $\anyvalue_1$, $\anyvalue_2$ are functions, it holds since the denotation includes the function tag (i.e. the syntactic expression itself). If $\anyvalue_1$, $\anyvalue_2$ are constructor expressions, it holds by hypothesis. If $\anyvalue_1$, $\anyvalue_2$ are neighbouring field values, it holds since $\denotexp{\envmap{\overline\deviceId}{\overline\lvalue}}{}{\eventS} = \denotf{\eventId}{} \envmap{\overline\deviceId}{\denotexp{\overline\lvalue}{}{\eventS}(\eventId)}$ and we already proved the equivalence for local values. From this equivalence and adequacy we can prove the left-to-right implication of the full abstraction property as follows.
	
	Suppose that $\e_1$ and $\e_2$ have the same denotation in every denotational environment and fix an operational environment $\Envi$, in which $\e_1$, $\e_2$ evaluate to $\anyvalue^1_\eventId$, $\anyvalue^2_\eventId$ in fire $\eventId$. Then given a denotational environment $\eventS$ that is coherent with $\Envi$, we have that $\denotexp{\e_1}{}{\eventS} = \denotexp{\e_2}{}{\eventS}$. By adequacy, it follows that $\denotexp{\anyvalue^1_\eventId}{}{\eventS}(\eventId) = \denotexp{\anyvalue^2_\eventId}{}{\eventS}(\eventId)$ hence $\anyvalue^1_\eventId = \anyvalue^2_\eventId$ for all $\eventId$. Since $\e_1$ and $\e_2$ evaluate to the same value in every fire $\eventId$ they cannot be distinguished in any context in $\Envi$. Since $\Envi$ was arbitrary, this concludes the left-to-right part of the proof.
	
	For the converse implication, suppose that there exists a denotational environment $\eventS$ such that $\e_1$ and $\e_2$ have denotations which differs in $\eventId$. Let $\Envi$ be an operational environment coherent with $\eventS$, in which $\e_1$, $\e_2$ evaluate to $\anyvalue_1$, $\anyvalue_2$ in fire $\eventId$. By adequacy,
	\[
	\denotexp{\anyvalue_1}{}{\eventS}(\eventId) = \denotexp{\e_1}{}{\eventS}(\eventId) \neq \denotexp{\e_2}{}{\eventS}(\eventId) = \denotexp{\anyvalue_2}{}{\eventS}(\eventId)
	\]
	hence $\anyvalue_1 \neq \anyvalue_2$. Thus $\e_1$, $\e_2$  are distinguished by context $\e' \BNFcce (\e_1 = \xname)$ of type $\btype$.
	
%	The proof is now split in three cases.
%	\begin{itemize}
%		\item
%		if $\e_1$, $\e_2$ have built-in local type $\builtintype$, they are distinguished in fire $\eventId$ by the trivial context $\e_1 = \xname$.
%		
%		\item
%		if $\e_1$, $\e_2$ have field type $\ftype$, they are distinguished in fire $\eventId$ by the following context $\e'$ of type $\btype$:
%		\begin{Verbatim}[%fontsize=\footnotesize,
%		                  frame=single,
%		                  commandchars=\\\{\}]
%\pr{fold-hood}(\pr{and}, e1 \pr{=[f,f]} x)
%		\end{Verbatim}
%		
%		\item
%		$\e_1$, $\e_2$ have function type $(\overline{\type})\rightarrow\type'$. Suppose that $\e_1$ and $\e_2$ have a different denotation in a certain environment $\eventS$. In particular, they have different tags \texttt{t1} and \texttt{t2}. Consider an environment made of two interconnected devices with $\deviceId = 0, 1$. Then we can distinguish \texttt{t1} and \texttt{t2} via the following context $\e'$:
%		\begin{Verbatim}[%fontsize=\footnotesize,
%		                  frame=single,
%		                  commandchars=\\\{\}]
%(\km{if} \pr{uid} = \pr{0} \{\pr{t1}\} \km{else} \{x\})(\pr{default})
%		\end{Verbatim}
%		where \texttt{default} is any expression (or sequence of expressions) of type $\overline{\type}$.
%			
%		It is possible for $\applySubstitution{\e'}{\substitution{\xname}{\texttt{t1}}}$ and $\applySubstitution{\e'}{\substitution{\xname}{\texttt{t2}}}$ to evaluate to same value (for example, when \texttt{t1} and \texttt{t2} are defined functions with the same body). However, if $\xname$ is \texttt{t1} this value will be tagged with a two-element domain while if $\xname$ is \texttt{t2} it will be tagged with a one-element domain. Thus the whole tagged value trees $\vtree_1$, $\vtree_2$ cannot be equal.
%	\end{itemize}
\end{proof}

%\input{sec-discussions}

%\appendixhead{ZHOU}

% Acknowledgments
%\begin{acks}
%The authors would like to thank Dr. Maura Turolla of Telecom
%Italia for providing specifications about the application scenario.
%\end{acks}

% Bibliography
\bibliographystyle{plain}

\begin{thebibliography}{10}

\bibitem{Baeten:ProcessAlgebras}
Jos~CM Baeten, Twan Basten, Twan Basten, and MA~Reniers.
\newblock {\em Process algebra: equational theories of communicating
  processes}, volume~50.
\newblock Cambridge university press, 2010.

\bibitem{spatialprocesses}
Jacob Beal.
\newblock Dynamically defined processes for spatial computers.
\newblock In {\em Spatial Computing Workshop}, pages 206--211, New York,
  September 2009. IEEE.

\bibitem{proto06a}
Jacob Beal and Jonathan Bachrach.
\newblock Infrastructure for engineered emergence in sensor/actuator networks.
\newblock {\em IEEE Intelligent Systems}, 21:10--19, March/April 2006.

\bibitem{SpatialIGI2013}
Jacob Beal, Stefan Dulman, Kyle Usbeck, Mirko Viroli, and Nikolaus Correll.
\newblock Organizing the aggregate: Languages for spatial computing.
\newblock In Marjan Mernik, editor, {\em Formal and Practical Aspects of
  Domain-Specific Languages: Recent Developments}, chapter~16, pages 436--501.
  IGI Global, 2013.
\newblock A longer version available at: {\tt http://arxiv.org/abs/1202.5509}.

\bibitem{BPV-COMPUTER2015}
Jacob Beal, Danilo Pianini, and Mirko Viroli.
\newblock Aggregate programming for the {Internet of Things}.
\newblock {\em IEEE Computer}, 48(9), 2015.

\bibitem{BV-FOCAS2014}
Jacob Beal and Mirko Viroli.
\newblock Building blocks for aggregate programming of self-organising
  applications.
\newblock In {\em 2nd FoCAS Workshop on Fundamentals of Collective Systems},
  pages 1--6. IEEE CS, to appear, 2014.

\bibitem{BVPD-SASO16}
Jacob Beal, Mirko Viroli, Danilo Pianini, and Ferruccio Damiani.
\newblock Self-adaptation to device distribution changes in situated computing
  systems.
\newblock In {\em IEEE Conference on Self-Adaptive and Self-Organising Systems
  (SASO 2016)}. IEEE, 2016.
\newblock To appear.

\bibitem{klaim}
Lorenzo Bettini, Viviana Bono, Rocco {De Nicola}, Gian~Luigi Ferrari, Daniele
  Gorla, Michele Loreti, Eugenio Moggi, Rosario Pugliese, Emilio Tuosto, and
  Betti Venneri.
\newblock The {K}laim project: Theory and practice.
\newblock In {\em Global Computing 2003}, volume 2874 of {\em Lecture Notes in
  Computer Science}, pages 88--150. Springer, 2003.

\bibitem{butera}
William Butera.
\newblock {\em Programming a Paintable Computer}.
\newblock PhD thesis, MIT, Cambridge, USA, 2002.

\bibitem{DBLP:conf/cie/CardelliG10}
Luca Cardelli and Philippa Gardner.
\newblock Processes in space.
\newblock In {\em 6th Conference on Computability in Europe}, volume 6158 of
  {\em Lecture Notes in Computer Science}, pages 78--87. Springer, 2010.

\bibitem{ambients}
Luca Cardelli and Andrew~D. Gordon.
\newblock Mobile ambients.
\newblock {\em Theoretical Computer Science}, 240(1):177--213, June 2000.

\bibitem{LambdaCalculus}
Alonzo Church.
\newblock A set of postulates for the foundation of logic.
\newblock {\em Annals of Mathematics}, 33(2):346--366, 1932.

\bibitem{clement2003self}
Lauren Clement and Radhika Nagpal.
\newblock Self-assembly and self-repairing topologies.
\newblock In {\em Workshop on Adaptability in Multi-Agent Systems, RoboCup
  Australian Open}, 2003.

\bibitem{coorephd}
Daniel Coore.
\newblock {\em Botanical Computing: A Developmental Approach to Generating
  Inter connect Topologies on an Amorphous Computer}.
\newblock PhD thesis, MIT, Cambridge, MA, USA, 1999.

\bibitem{Curien:FullAbstraction}
Pierre{-}Louis Curien.
\newblock Definability and full abstraction.
\newblock {\em Electr. Notes Theor. Comput. Sci.}, 172:301--310, 2007.

\bibitem{Damas-Milner:POPL-1982}
Luis Damas and Robin Milner.
\newblock Principal type-schemes for functional programs.
\newblock In {\em Symposium on Principles of Programming Languages}, POPL '82,
  pages 207--212. ACM, 1982.

\bibitem{DVB-SCP2016}
Ferruccio Damiani, Mirko Viroli, and Jacob Beal.
\newblock A type-sound calculus of computational fields.
\newblock {\em Science of Computer Programming}, 117:17 -- 44, 2016.

\bibitem{forte2015}
Ferruccio Damiani, Mirko Viroli, Danilo Pianini, and Jacob Beal.
\newblock Code mobility meets self-organisation: A higher-order calculus of
  computational fields.
\newblock In Susanne Graf and Mahesh Viswanathan, editors, {\em Formal
  Techniques for Distributed Objects, Components, and Systems}, volume 9039 of
  {\em Lecture Notes in Computer Science}, pages 113--128. Springer
  International Publishing, 2015.

\bibitem{Bakker:DenotationalConcurrency}
J.~W. de~Bakker and Jeffery~I. Zucker.
\newblock Denotational semantics of concurrency.
\newblock In {\em Proceedings of the 14th Annual {ACM} Symposium on Theory of
  Computing, May 5-7, 1982, San Francisco, California, {USA}}, pages 153--158,
  1982.

\bibitem{SCEL}
Rocco {De Nicola}, Gianluigi Ferrari, Michele Loreti, and Rosario Pugliese.
\newblock A language-based approach to autonomic computing.
\newblock In {\em Formal Methods for Components and Objects}, volume 7542 of
  {\em Lecture Notes in Computer Science}, pages 25--48, 2013.

\bibitem{FDMVA-NACO2013}
Jose~Luis Fernandez-Marquez, Giovanna Di~Marzo Serugendo, Sara Montagna, Mirko
  Viroli, and Josep~Llu\'{\i}s Arcos.
\newblock Description and composition of bio-inspired design patterns: a
  complete overview.
\newblock {\em Natural Computing}, 12(1):43--67, 2013.

\bibitem{Finin94kqml}
Tim Finin, Richard Fritzson, Don McKay, and Robin McEntire.
\newblock Kqml as an agent communication language.
\newblock In {\em Proceedings of the third international conference on
  Information and knowledge management}, CIKM '94, pages 456--463, New York,
  NY, USA, 1994. ACM.

\bibitem{Gasarch:Termination}
William~I. Gasarch.
\newblock Proving programs terminate using well-founded orderings, ramsey's
  theorem, and matrices.
\newblock {\em Advances in Computers}, 97:147--200, 2015.

\bibitem{linda}
David Gelernter.
\newblock Generative communication in linda.
\newblock {\em ACM Trans. Program. Lang. Syst.}, 7(1):80--112, 1985.

\bibitem{GiavittoMGS02}
Jean-Louis Giavitto, Christophe Godin, Olivier Michel, and Przemyslaw
  Prusinkiewicz.
\newblock Computational models for integrative and developmental biology.
\newblock Technical Report 72-2002, Univerite d'Evry, LaMI, 2002.

\bibitem{FJ}
Atsushi Igarashi, Benjamin~C. Pierce, and Philip Wadler.
\newblock {Featherweight Java}: A minimal core calculus for {Java} and {GJ}.
\newblock {\em ACM Transactions on Programming Languages and Systems}, 23(3),
  2001.

\bibitem{inchiosa2002overcoming}
M.E. Inchiosa and M.T. Parker.
\newblock Overcoming design and development challenges in agent-based modeling
  using ascape.
\newblock {\em Proceedings of the National Academy of Sciences of the United
  States of America}, 99(Suppl 3):7304, 2002.

\bibitem{GradientSelfStabilization}
Shay Kutten and Boaz Patt-Shamir.
\newblock Time-adaptive self stabilization.
\newblock In {\em Proceedings of ACM symposium on Principles of distributed
  computing}, pages 149--158. ACM, 1997.

\bibitem{original-gradient}
Frank C.~H. Lin and Robert~M. Keller.
\newblock The gradient model load balancing method.
\newblock {\em IEEE Trans. Softw. Eng.}, 13(1):32--38, 1987.

\bibitem{tinydb}
Samuel~R. Madden, Robert Szewczyk, Michael~J. Franklin, and David Culler.
\newblock Supporting aggregate queries over ad-hoc wireless sensor networks.
\newblock In {\em Workshop on Mobile Computing and Systems Applications}, 2002.

\bibitem{tota}
Marco Mamei and Franco Zambonelli.
\newblock Programming pervasive and mobile computing applications: The tota
  approach.
\newblock {\em ACM Trans. on Software Engineering Methodologies}, 18(4):1--56,
  2009.

\bibitem{Smolka:SortedUnification}
Jos{\'e} Meseguer, Joseph~A Goguen, and Gert Smolka.
\newblock Order-sorted unification.
\newblock {\em Journal of Symbolic Computation}, 8(4):383--413, 1989.

\bibitem{MPI2}
Message Passing Interface Forum.
\newblock {\em MPI: A Message-Passing Interface Standard Version 2.2},
  September 2009.

\bibitem{Milner200660}
Robin Milner.
\newblock Pure bigraphs: Structure and dynamics.
\newblock {\em Information and Computation}, 204(1):60 -- 122, 2006.

\bibitem{nagpalphd}
Radhika Nagpal.
\newblock {\em Programmable Self-Assembly: Constructing Global Shape using
  Biologically-inspired Local Interactions and Origami Mathematics}.
\newblock PhD thesis, MIT, Cambridge, MA, USA, 2001.

\bibitem{regiment}
Ryan Newton and Matt Welsh.
\newblock Region streams: Functional macroprogramming for sensor networks.
\newblock In {\em Workshop on Data Management for Sensor Networks}, pages
  78--87, August 2004.

\bibitem{DiGamma}
Yuichi Nishiwaki.
\newblock Digamma-calculus: A universal programming language of
  self-stabilizing computational fields.
\newblock In {\em Workshop eCAS in Self-Adaptive and Self-Organizing Systems
  Workshops (SASOW), 2016 IEEE International Conference on}. IEEE, Sept 2016.

\bibitem{Ohori:Polymorphism}
Atsushi Ohori.
\newblock A simple semantics for ml polymorphism.
\newblock In {\em Proceedings of the fourth international conference on
  Functional programming languages and computer architecture}, pages 281--292.
  ACM, 1989.

\bibitem{PMV-JOS2013}
Danilo Pianini, Sara Montagna, and Mirko Viroli.
\newblock Chemical-oriented simulation of computational systems with
  {A}lchemist.
\newblock {\em Journal of Simulation}, 7:202--215, 2013.

\bibitem{Protelis15}
Danilo Pianini, Mirko Viroli, and Jacob Beal.
\newblock Protelis: Practical aggregate programming.
\newblock In {\em ACM Symposium on Applied Computing 2015}, pages 1846--1853,
  April 2015.

\bibitem{sklar2007netlogo}
E.~Sklar.
\newblock Netlogo, a multi-agent simulation environment.
\newblock {\em Artificial life}, 13(3):303--311, 2007.

\bibitem{Stoughton:FullyAbstract}
Allen Stoughton.
\newblock {\em Fully abstract models of programming languages}.
\newblock Research Notes in Theoretical Computer Science. Pitman, 1988.

\bibitem{VBDP-SASO2015}
Mirko Viroli, Jacob Beal, Ferruccio Damiani, and Danilo Pianini.
\newblock Efficient engineering of complex self-organising systems by
  self-stabilising fields.
\newblock In {\em Self-Adaptive and Self-Organizing Systems (SASO), 2015 IEEE
  9th International Conference on}, pages 81--90. IEEE, Sept 2015.

\bibitem{VCMZ-TAAS2011}
Mirko Viroli, Matteo Casadei, Sara Montagna, and Franco Zambonelli.
\newblock Spatial coordination of pervasive services through chemical-inspired
  tuple spaces.
\newblock {\em ACM Transactions on Autonomous and Adaptive Systems}, 6(2):14:1
  -- 14:24, June 2011.

\bibitem{VD-COORD2014-LNCS2014}
Mirko Viroli and Ferruccio Damiani.
\newblock A calculus of self-stabilising computational fields.
\newblock In {\em Coordination Languages and Models}, volume 8459 of {\em
  LNCS}, pages 163--178. Springer-Verlag, June 2014.

\bibitem{VDB-FOCLASA-CIC2013}
Mirko Viroli, Ferruccio Damiani, and Jacob Beal.
\newblock A calculus of computational fields.
\newblock In {\em Advances in Service-Oriented and Cloud Computing}, volume 393
  of {\em Communications in Computer and Information Science}, pages 114--128.
  Springer Berlin Heidelberg, 2013.

\bibitem{spatialcoord-coord2012}
Mirko Viroli, Danilo Pianini, and Jacob Beal.
\newblock Linda in space-time: an adaptive coordination model for mobile ad-hoc
  environments.
\newblock In {\em Proceedings of Coordination 2012}, volume 7274 of {\em
  Lecture Notes in Computer Science}, pages 212--229. Springer, 2012.

\bibitem{VPMSZ-SCP2015}
Mirko Viroli, Danilo Pianini, Sara Montagna, Graeme Stevenson, and Franco
  Zambonelli.
\newblock A coordination model of pervasive service ecosystems.
\newblock {\em Science of Computer Programming}, 110:3 -- 22, 2015.

\bibitem{hood}
Kamin Whitehouse, Cory Sharp, Eric Brewer, and David Culler.
\newblock Hood: a neighborhood abstraction for sensor networks.
\newblock In {\em Proceedings of the 2nd international conference on Mobile
  systems, applications, and services}. ACM Press, 2004.

\bibitem{Winskel:FormalSemantics}
Glynn Winskel.
\newblock {\em The formal semantics of programming languages - an
  introduction}.
\newblock Foundation of computing series. {MIT} Press, 1993.

\end{thebibliography}

\end{document}